\keywords{
B\"{u}chi automaton,
fair simulation,
tree automaton,
probabilistic automaton,
coalgebra
}
\newtheoremstyle{thmC}%
  {6pt}
  {6pt}
  {\itshape}
  {}
  {\bfseries}
  {{\bfseries .}}
  {5pt plus 1pt minus 1pt}
  {\thmname{#1} \thmnumber{#2} \thmnote{\normalfont#3}}
\newtheoremstyle{defC}%
  {6pt}
  {6pt}
  {\normalfont}
  {}
  {\bfseries}
  {{\bfseries .}}
  {5pt plus 1pt minus 1pt}
  {\thmname{#1} \thmnumber{#2} \thmnote{\normalfont#3}}
\theoremstyle{thmC}
\newtheorem{thmC}[thm]{Theorem}
\theoremstyle{defC}
\newtheorem{defC}[thm]{Definition}
\def\labelstyle{\scriptstyle}
\newcommand{\undef}{\mathrm{undefined}}
\newcommand{\pow}{\mathcal{P}}
\newcommand{\dist}{\mathcal{D}}
\newcommand{\lift}{\mathcal{L}}
\newcommand{\giry}{\mathcal{G}}
\newcommand{\place}{\underline{\phantom{n}}\,} 
\newcommand{\Sets}{\mathbf{Sets}}
\newcommand{\Meas}{\mathbf{Meas}}
\newcommand{\lang}{L}
\newcommand{\sigalg}{\mathfrak{F}}
\newcommand{\FSigma}{F_{\Sigma}}
\newcommand{\sol}{\text{sol}}
\newcommand{\Future}{\mathsf{F}}
\newcommand{\Globally}{\mathsf{G}}
\newcommand{\defarrow}{\overset{\text{def}}{\Leftrightarrow}}
\newcommand{\kto}{\mathrel{\ooalign{\hfil\raisebox{.3pt}{$\shortmid$}\hfil\crcr$\rightarrow$}}}
\newcommand{\longkto}{\mathrel{\ooalign{\hfil\raisebox{.3pt}{$\shortmid$}\hfil\crcr$\longrightarrow$}}}
\newcommand{\Kl}[0]{\mathcal{K}\mspace{-1mu}\ell}
\newcommand{\id}[0]{\mathrm{id}}
\newcommand{\tr}[0]{{\sf tr}}
\newcommand{\trinf}[0]{{\sf tr^{\infty}}}
\newcommand{\trB}[0]{{\sf tr^{\rm B}}}
\newcommand{\accstate}{\xymatrix{*+<1em>[o][F=]{}}}
\newcommand{\nonaccstate}{\xymatrix{*+<1em>[o][F-]{}}}
\newcommand{\Cppo}[0]{\mathbf{Cppo}}
\newcommand{\op}[0]{\mathrm{op}}
\newcommand{\myTree}[0]{\mathrm{Tree}}
\newcommand{\seq}[2]{{#1}_{1},\dotsc,{#1}_{#2}}
\newcommand{\empseq}[0]{\varepsilon}
\newcommand{\kar}{\ar|-*\dir{|}}
\newcommand{\C}{\mathbb{C}}
\newcommand{\codomRestr}[2]{{#1}\mbox{$\upharpoonright^{#2}$}}
\newcommand{\codomJoin}[1]{\langle{\kern-.5ex}\langle #1
\rangle{\kern-.5ex}\rangle}
\newcommand{\codomJoinL}{\langle{\kern-.4ex}\langle}
\newcommand{\codomJoinR}{\rangle{\kern-.4ex}\rangle}
\newcommand{\iso}{\mathrel{\stackrel{
           \raisebox{.5ex}{$\scriptstyle\cong\,$}}{
           \raisebox{0ex}[0ex][0ex]{$\rightarrow$}}}}
\newcommand{\oF}{\overline{F}}
\newcommand{\nat}{\mathbb{N}}
\newcommand{\Dom}{\mathrm{Dom}}
\newcommand{\Run}{\mathrm{Run}}
\newcommand{\AccRun}{\mathrm{AccRun}}
\newcommand{\NAccRuninf}{\mathrm{NAccRun}^{\mathrm{inf}}}
\newcommand{\Branch}{\mathrm{Branch}}
\newcommand{\Cyl}{\mathrm{Cyl}}
\newcommand{\X}{\mathcal{X}}
\newcommand{\NDL}{\mathrm{NoDiv}}
\newcommand{\DelSt}{\mathrm{DelSt}}
\newcommand{\approximant}{p}
\newcommand{\initSet}{I}
\newcommand{\initVec}{\iota}
\newcommand{\Acc}{\mathsf{Acc}}
\newcommand{\myalphabet}{\mathsf{A}}
\newcommand{\myparagraph}[1]{\vspace{1mm}\noindent\textbf{#1}.}
\newcommand{\initarrow}{\raisebox{\heightof{!}}{\,\rotatebox{180}{$!$}}}
\newcommand{\finalarrow}{\,!}
\newif\ifignore 
\definecolor{applegreen}{rgb}{0.55, 0.71, 0.0}
\definecolor{cadmiumgreen}{rgb}{0.0, 0.42, 0.24}
\newcounter{asmenumi}
  \def\@thmcountersep{.}
\theoremstyle{plain}
\newtheorem{sublem}[thm]{Sublemma}
\theoremstyle{definition}
\dedicatory{In honor of Ji\v{r}\'{i} Ad\'{a}mek on the occasion of his seventieth birthday}
\begin{document}
\title[Fair Simulation for Nondeterministic and Probabilistic B\"{u}chi
Automata]{Fair Simulation for Nondeterministic and Probabilistic B\"{u}chi
Automata: a Coalgebraic Perspective}

\author[Natsuki Urabe]{Natsuki Urabe\rsuper{*}}	
\address{Dept.\ Computer Science, The University of Tokyo\\ Hongo 7-3-1, Tokyo 113-8656, 
  Japan}	
\email{urabenatsuki@is.s.u-tokyo.ac.jp}  
\thanks{\lsuper{*}JSPS Research Fellow}	


\author[Ichiro Hasuo]{Ichiro Hasuo}	
\address{
 National Institute of Informatics\\
 Hitotsubashi 2-1-2, Tokyo 101-8430, Japan
}	
\email{
i.hasuo@acm.org
}  

%

\begin{abstract}
Notions of \emph{simulation}, among other uses, provide a
 computationally tractable and sound (but not necessarily complete)
 proof method for language inclusion. They have been comprehensively
 studied by Lynch and Vaandrager for nondeterministic and timed systems;
 for \emph{B\"{u}chi} automata the notion of
 \emph{fair simulation} has been introduced by Henzinger, Kupferman and
 Rajamani. We contribute to a generalization of fair simulation in two
 different directions: one for nondeterministic \emph{tree} automata
 previously studied by Bomhard; and the other for
 \emph{probabilistic} word automata with finite state spaces, both
 under the B\"{u}chi acceptance condition. The former nondeterministic
 definition is formulated in terms of systems of fixed-point equations,
 hence is readily translated to parity games and is then amenable to
 Jurdzi\'{n}ski's algorithm; the latter probabilistic definition bears
 a strong ranking-function flavor. These two different-looking
 definitions are derived from one source, namely our \emph{coalgebraic}
 modeling of B\"{u}chi automata.
 Based on these coalgebraic observations,
 we also prove their soundness: a
 simulation indeed witnesses language inclusion.
\end{abstract}

\maketitle


\section{Introduction}\label{sec:intro}
Notions of \emph{simulation}---typically defined as a binary relation
 subject to a coinductive ``one-step mimicking''
condition---have been studied extensively in formal verification and process theory. 
 Sometimes existence of a simulation itself is interesting---taking
 it as the definition of an abstraction/refinement relationship---but
 another notable use  is as a \linebreak
 \emph{proof method} for
 language inclusion. Language inclusion is fundamental in model
 checking but often hard to check itself; looking for a
 simulation---which \emph{witnesses} language inclusion, by its
 \emph{soundness} property, in a step-wise
 manner---is then a sound (but generally not complete) alternative.
 For example, (finite) language inclusion 
 between weighted automata with weights in the semiring of real numbers 
 with ordinary addition and multiplication
 is undecidable~\cite{blondel03undecidableproblems}, while existence of
 certain simulations is PTIME, see~\cite{urabeH17MatrixSimulationJourn}.

 Simulation  notions have been introduced for many different types of
 systems: nondeterministic~\cite{lynch95forwardand},
 timed~\cite{lynch96forwardand} and probabilistic~\cite{JonssonL91},
 among others.  Conventionally many studies take the trivial acceptance
 condition (any run 
that does not diverge, i.e.\ that does not come to a deadend, is
 accepted). Recently, however, there have been several works on
 simulations under the \emph{B\"uchi} and \emph{parity} acceptance
 conditions~\cite{henzingerKR02fairsimulation,etessamiWS05fairsimulation,fritzW06simulationrelations}.
 In such settings a simulation notion is subject to an (inevitable)
 nonlocal \emph{fairness} condition (on top of the local condition of one-step mimicking); and
 often a fair simulation is characterized as a winning strategy of a
 suitable \emph{parity game}, which is then searched using 
Jurdzi\'{n}ski's algorithm~\cite{Jurdzinski00}.

Like simulation notions for the trivial acceptance condition, fair simulation can be used for proving language inclusion. 
Moreover, it also has a logical characterization: there exists a certain universal fragment of the alternation-free $\mu$-calculus
such that for two systems $\mathcal{X}$ and $\mathcal{Y}$, there exists a fair simulation from $\mathcal{X}$ to $\mathcal{Y}$ if and only if
all the formulas in the fragment that are satisfied in $\mathcal{Y}$ are also satisfied in 
$\mathcal{X}$~\cite{henzingerKR02fairsimulation}.
In contrast, differently from many other simulation notions, fair simulations cannot be used for state-space reduction~\cite{etessamiWS05fairsimulation}. In~\cite{etessamiWS05fairsimulation}, a weaker simulation notion called \emph{delayed simulation}
is introduced for this purpose. 

\subsection{Contributions}
It is in the context of fair simulation for word/tree automata
with nondeterministic/probabilistic branching that the current paper
contributes:
\begin{enumerate}
 \item\label{item:introContribNondetTree}
      We define fair simulation for \emph{nondeterministic tree}
       automata
       with the B\"uchi acceptance condition. We express the notion 
       using a \emph{system of fixed-point equations}---with explicit
       $\mu$'s and $\nu$'s indicating least or greatest---and thus the definition
       makes sense for infinite-state automata too. We also interpret it
       in terms of a parity game, which is subject to an algorithmic
       search when the problem instance is finitary. The resulting
       parity game 
       essentially coincides with the one in~\cite{Bomhard08BScThesis}. 
 \item\label{item:introContribProbWord}
       We define fair simulation for \emph{probabilistic word} automata
       with the B\"uchi acceptance condition, this time with the
       additional condition that on the simulating side we have a
       finite-state
       automaton. 
       This simulation notion is given by a \emph{matrix}
       (instead of a relation); this follows our previous
       work~\cite{urabeH17MatrixSimulationJourn} that uses \emph{linear
       programming} to search for such a matrix simulation. 
       Our current notion also requires suitable \emph{approximation
       sequences} for witnessing well-foundedness, with a similar
       intuition to \emph{ranking functions}. 
\end{enumerate}
For the former \emph{nondeterministic tree} setting ((\ref{item:introContribNondetTree}) in the above), a notion of fair
simulation---in addition to direct and delayed simulation---has already been
introduced in~\cite{Bomhard08BScThesis}. Their definition
focuses on finite state spaces and is
formulated   using a parity game. In contrast, 
our notion is given in terms of fixed-point equations---it generalizes the $\mu$-calculus characterization
of fair simulations from words (see e.g.~\cite{JuvekarP06}) to trees. An obvious advantage of our fixed-point characterization over the parity game-based one is that ours makes sense for infinite state systems too.
For the latter \emph{probabilistic word} setting ((\ref{item:introContribProbWord})  in the above), 
we introduce fair simulation for the first time
  (to the best of our knowledge).

In both settings our main technical result
is \emph{soundness}, that is, existence of a fair simulation
implies trace inclusion. We also exhibit nontrivial examples of fair
simulations.

\subsection{Theoretical Backgrounds} 
Our two simulation
notions (for nondeterministic tree automata and probabilistic word ones)
look rather different, but they are derived from the same
theoretical insights. The insights come from: 1) the theory of
\emph{coalgebra}~\cite{jacobs16CoalgBook,Rutten00a}, in particular
the generic \emph{Kleisli theory of trace and simulation}~\cite{jacobs04tracesemantics,cirstea10genericinfinite,hasuo06genericforward,urabeH17MatrixSimulationJourn,urabeH15CALCO};
and 2) our recent work~\cite{HasuoSC16} on a lattice-theoretic
foundation of \emph{nested/alternating fixed points}, where we generalize
\emph{progress measures}, a central notion in Jurdzi\'{n}ski's
algorithm~\cite{Jurdzinski00} for  parity games. We rely on both
of these series of work
also for soundness proofs, where we follow yet another recent work
of ours~\cite{urabeSH16parityTrace} in which we characterize the
accepted language of a B\"uchi automaton by an ``equational system'' of
diagrams in a Kleisli category. In this paper we shall briefly describe
these general
theories behind our current results, focusing on their instances that
are relevant.

\subsection{A Tribute to Ji\v{r}\'{i} Ad\'{a}mek}
Working in coalgebraic modeling of dynamical systems and their analysis,
we owe to Ji\v{r}\'{i} almost the field itself on which
we stand.

His earlier works like~\cite{Adamek74,adamekK79leastfixed} paved the way to 
one of the most fundamental ideas in the field, namely
the identification of final/terminal coalgebras as categorical fully abstract
domains of behaviors of non-terminating systems. 
Fixed-point equations---to which a final coalgebra is the greatest categorical 
solution in case an equation is given by a functor---have been a central theme
in 
his more recent works too. These include the line of work pursued in~\cite{AdamekMV11} and many others,  where ``solution operators'' for fixed-point equations and their axioms are studied in an elegantly categorical fashion. All these works of his have been a great source of inspiration for us. 


We wish to dedicate the current work to Ji\v{r}\'{i}. It continues our recent line of work~\cite{HasuoSC16,urabeSH16parityTrace,UrabeHH17toAppear} in which we pursue: categorical understanding of nested and alternating least and greatest fixed-point equations, and proof methods for such fixed-point specifications.  Categorical solutions to fixed-point equations play a central role here, much like in Ji\v{r}\'{i}'s series of work, while we believe our use of orders between arrows and explicit $\mu$'s and $\nu$'s (like in (\ref{eq:diagramsForEqSysForBuechiAcceptance})) is a crucial step ahead towards accommodating complex fixed-point specifications (like persistence and recurrence) in the categorical study of coalgebras. More specifically we contribute fair simulation notions as witnesses for B\"{u}chi language inclusion. Hopefully our results demonstrate potential practical values of mathematical and  categorical understanding of systems, a theme Ji\v{r}\'{i} has been pursuing throughout his career.

\subsection{Organization of the Paper} 
In Section~\ref{sec:prelim} we introduce \emph{equational systems},
essentially fixing notations for alternating greatest and least fixed
points. These notations---and the idea that fixed-point
equations play important roles---are used
in Sections~\ref{sec:fwdFairSimNondet}--\ref{sec:fwdFairSimProb} where we
concretely describe: our system models;
their accepted languages; simulation definitions;
and soundness results.
In this paper we consider \emph{nondeterministic tree} automata and \emph{probabilistic word} automata, both with the B\"uchi
acceptance condition, as system models.
 Up to this point everything is in
set-theoretic terms, without category theory.

The rest of the paper is devoted to soundness proofs and the
theoretical perspectives behind. In~Section~\ref{sec:coalgebraicBackgrounds}
we review the coalgebraic backgrounds: Kleisli categories, coalgebras, trace semantics~\cite{jacobs04tracesemantics,hasuo07generictrace,cirstea10genericinfinite}, simulations
(under the trivial acceptance condition)~\cite{hasuo06genericforward,urabeH15CALCO},
and coalgebraic trace for B\"uchi automata~\cite{urabeSH16parityTrace}. 
Finally, in Section~\ref{sec:soundnessProof} we take a coalgebraic look at 
simulations under the B\"uchi condition: our first attempt (fair
simulation \emph{with dividing}) is sound but not practically desirable; 
we show how we can circumvent this additional construct of dividing, and 
how we can obtain the concrete definitions
in Sections~\ref{sec:fwdFairSimNondet}--\ref{sec:fwdFairSimProb}.

In Section~\ref{sec:concluFW} we conclude and suggest some directions of future work.

\section{Preliminaries: Equational Systems}\label{sec:prelim}

\emph{Nested, alternating} greatest and least fixed
points---as in a $\mu$-calculus formula $\nu u_{2}. \mu u_{1}.\, (p\land
u_{2})\lor \Box u_{1}$---are omnipresent in specification and
verification. For their relevance to the B\"uchi acceptance condition
one can recall 
the well-known translation of LTL formulas to B\"{u}chi automata and
vice versa (see~\cite{Vardi95} for example). 
To express such fixed points we follow~\cite{CleavelandKS92,ArnoldNiwinski}
and use
\emph{equational systems}---instead of textual $\mu$-calculus-like
presentations. 
\begin{defi}[equational system]
\label{def:eqSys}
Let $L_{1},\dotsc,L_{m}$ be posets. 
We write $\sqsubseteq$ for the orders over the posets.
An \emph{equational system} $E$
over $L_{1},\dotsc,L_{m}$ is an expression 
\begin{equation}\label{eq:sysOfEq}
 \begin{array}{c}
  u_{1}=_{\eta_{1}}f_{1}(u_{1},\dotsc, u_{m})\enspace,\qquad
  \dotsc,\qquad  
  u_{m}=_{\eta_{m}} f_{m}(u_{1},\dotsc, u_{m})
 \end{array}
\end{equation}
where: $u_{1},\dotsc,u_{m}$ are \emph{variables},
 $\eta_{1},\dotsc,\eta_{m}\in\{\mu,\nu\}$, and 
$f_{i}\colon L_{1}\times\cdots\times L_{m}\to L_{i}$ is a monotone
 function.
 A variable $u_{j}$ is  a \emph{$\mu$-variable} if
  $\eta_{j}=\mu$; it is a \emph{$\nu$-variable} if $\eta_{j}=\nu$.
\end{defi}

%
\subsection{Solutions of Equational Systems}\label{sec:appendixEqSys}
In this section we define the \emph{solution} of an equational system.
For the equational system $E$ in Def.~\ref{def:eqSys}, its solution is 
defined as a family $(l_1^{\sol},\ldots,l_m^{\sol})\in L_1\times\cdots\times L_m$.
We first briefly sketch its definition.

 We assume that $L_{i}$'s have enough suprema and infima.
The definition 
 proceeds as follows: 1) we 
 solve the first equation of (\ref{eq:sysOfEq}) for $u_1$
 to obtain an interim solution 
 $u_{1}=l^{(1)}_{1}(u_{2},\dotsc,u_{m})$ that is parameterized by $u_2,\ldots,u_m$; 2) it is used in the second
 equation to eliminate $u_{1}$ and yield a new equation 
 $u_{2}=_{\eta_{2}}f^{\ddagger}_{2}(u_{2},\dotsc,u_{m})$; 3) solving it
 again gives an interim solution 
 $u_{2}=l^{(2)}_{2}(u_{3},\dotsc,u_{m})$; 4) continuing this way from
 left to right eventually eliminates all the variables and  leads to a closed solution
 $u_{m}=l^{(m)}_{m}\in L_{m}$; and 5) by propagating these closed
 solutions back from right to left, we obtain closed solutions for all of
 $u_{1},\dotsc,u_{m}$. 
 To summarize, when we are solving the $i$-th equation 
 $u_{i}=_{\eta_{i}}f_{i}(u_{1},\dotsc, u_{m})$, we first substitute 
 $u_1,\ldots,u_{i-1}$ with the current interim solutions $l_1^{(i-1)}(u_i,\ldots,u_n),\ldots,l_{i-1}^{(i-1)}(u_i,\ldots,u_n)$,
 and solve the equation for $u_i$, regarding $u_{i+1},\ldots,u_n$ as parameters.
 We give now a formal definition.
\begin{defi}[solution]\label{def:solOfEqSys}
Let $E$ be the equational system in Definition~\ref{def:eqSys}.
For each $i\in[1,m]$ and $j\in[1,i]$, we define monotone functions 
$f^{\ddagger}_i\colon L_{i}\times\cdots\times L_{m}\rightarrow L_{i}$ and
$l^{(i)}_{j}\colon L_{i+1}\times\cdots\times L_{m}\rightarrow L_{j}$	 
	by induction on $i$ as follows.
	\begin{itemize}
	\item
	When $i=1$,
\begin{align*}
 	 f^{\ddagger}_{1}(l_{1},\dotsc,l_{m})&:= f_{1}(l_{1},\dotsc,l_{m}),\qquad\qquad\text{and}\\
	 l^{(1)}_{1}(l_{2},\dotsc, l_{m})&:= 
	 \begin{cases}
	 l_{\mathrm{lfp}} & (\text{$\eta_1=\mu$ and 
 $f^{\ddagger}_{1}(\place,l_{2},\dotsc, l_{m})$ has the lfp $l_{\mathrm{lfp}}\in L_1$}) \\
 	 l_{\mathrm{gfp}} & (\text{$\eta_1=\nu$ and 
 $f^{\ddagger}_{1}(\place,l_{2},\dotsc, l_{m})$ has the gfp $l_{\mathrm{gfp}}\in L_1$}) \\
\undef & (\text{otherwise}).
 \end{cases}
\end{align*}
Note here that 
completeness of $L_1$ is not assumed and therefore
the monotone function
$f^{\ddagger}_{1}(\place,l_{2},\dotsc, l_{m}) \colon L_{1}\to L_{1}$
does not necessarily have the lfp or gfp.

\item
For the step case, the function $f^{\ddagger}_{i+1}$ is defined 
using the
$i$-th interim solutions $l^{(i)}_{1},\dotsc,l^{(i)}_{i}$ for the variables
	$u_{1},\dotsc, u_{i}$ obtained so far:
\[
  f^{\ddagger}_{i+1}(l_{i+1},\dotsc, l_{m}):=
\begin{cases}
    f_{i+1}\bigl(\,l^{(i)}_{1}(l_{i+1},\dotsc, l_{m}),\;\dotsc,\;
 l^{(i)}_{i}(l_{i+1},\dotsc, l_{m}), \;
l_{i+1},\dotsc, l_{m}
\,\bigr)   & \\
& \hspace{-7cm}\text{($l^{(i)}_{j}(l_{i+1},\dotsc, l_{m})$ is defined for each $j\in[1,i]$)} \\
\undef 
&\hspace{-7cm}\text{(otherwise)}\,.
\end{cases}
\]
For $j=i+1$, $l_j^{(i+1)}$ is defined by
	\[
l^{(i+1)}_{i+1}(l_{i+2},\dotsc, l_{m})
:= 
\begin{cases}
	 l_{\mathrm{lfp}} & 
	 \left(\begin{aligned} \mbox{}&\text{$\eta_{i+1}=\mu$, and}\\ 
 \mbox{}&\text{$f^{\ddagger}_{i+1}(\place,l_{i+2},\dotsc, l_{m})$ has the lfp $l_{\mathrm{lfp}}\in L_{i+1}$}
\end{aligned} \right) \\[5mm]
	 l_{\mathrm{gfp}} & 
	 \left(\begin{aligned} \mbox{}&\text{$\eta_{i+1}=\nu$, and}\\ 
 \mbox{}&\text{$f^{\ddagger}_{i+1}(\place,l_{i+2},\dotsc, l_{m})$ has the gfp $l_{\mathrm{gfp}}\in L_{i+1}$}
\end{aligned} \right) \\[5mm]
\undef & \ (\text{otherwise}).
\end{cases}
\]
For $j\in [1,i]$, $l_j^{(i+1)}$ is defined using $l^{(i+1)}_{i+1}(l_{i+2},\dotsc, l_{m})$ as 
follows.
%
\[
 l^{(i+1)}_{j}(l_{i+2},\dotsc, l_{m})
:=
\begin{cases}
 l^{(i)}_{j}\bigl(\,l^{(i+1)}_{i+1}(l_{i+2},\dotsc, l_{m}),\; l_{i+2},\dotsc,l_{m}\,\bigr) & \\
 &\hspace{-2cm}\text{($l^{(i+1)}_{i+1}(l_{i+2},\dotsc, l_{m})$ is defined)} \\
 \undef & \hspace{-2cm} \text{(otherwise)}
 \end{cases}
 \]
%
\end{itemize}

A family $(l_1^{\sol},\ldots,l_m^{\sol})\in L_1\times\cdots\times L_m$ is called
the \emph{solution} of $E$ if $l_j^{(m)}:1\to L_j$ is defined and $l_j^{\sol}=l_j^{(m)}(*)$ 
(here $*$ is the unique element in $1$) for each $j\in[1,m]$.
\end{defi}
Note that the order of equations \emph{matters}.
For $(u=_{\mu}v, v=_{\nu} u)$ the solution is $u=v=\top$ while 
for $(v=_{\nu} u, u=_{\mu}v)$ the solution is $u=v=\bot$.
%
It is easy to see that all the functions $f^{\ddagger}_{i}$ and
 $l^{(i)}_{j}$ involved here are monotone. 
By the definition above, a solution exists if 
 the function  
\begin{equation}\label{eq:1703211255}
 f^\ddagger_{i}(\place,l_{i+1},\ldots,l_m)\;:\quad L_i\longrightarrow L_i
\end{equation}  
in Definition~\ref{def:solOfEqSys}
%
 has both the least and the greatest fixed points
 for each $i\in [1,m]$ and $l_{i+1}\in L_{i+1},\ldots,l_m\in L_m$.
Their existence
 depends on how ``complete'' each $L_{i}$ is and how ``continuous'' each $f_i$ is. 
In the following proposition
 we present two sufficient conditions for existence of 
the least and the greatest fixed points.
 
\begin{prop}\label{prop:suffCondSol}
Let $E$ be the equational system in Definition~\ref{def:eqSys}.
If either of the following conditions is satisfied, then
$E$ has a (necessarily unique) solution.
\begin{enumerate}
  \renewcommand{\labelenumi}{(\alph{enumi})}
  \renewcommand{\theenumi}{\alph{enumi}}
  \item\label{item:prop:suffCondSol1}
   For each $i\in[1,m]$, 
   the poset $L_i$ is 
   a complete lattice.

   
  \item\label{item:prop:suffCondSol2}
  For each $i\in[1,m]$ we have the following.
  \begin{itemize}
   \item  $L_i$ has both the least and greatest elements.
   \item $L_{i}$ is both $\omega$-complete and $\omega^\op$-complete, that is, 
	 every increasing (or decreasing) $\omega$-chain has a supremum (or an infimum, respectively).
   \item For each 
$l_{i+1}\in L_{i+1},\ldots,l_m\in L_m$, the function
\begin{displaymath}
 f^\ddagger_{i}(\place,l_{i+1},\ldots,l_m)\;:\quad L_i\longrightarrow L_i
\end{displaymath}  
in Definition~\ref{def:solOfEqSys}
is both $\omega$-continuous and $\omega^\op$-continuous, that is, the aforementioned suprema and infima are preserved by the function.     \qed
  \end{itemize}
  \end{enumerate}
\end{prop}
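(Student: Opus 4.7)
The plan is to proceed by induction on $i \in [1,m]$, showing that each interim solution $l^{(i)}_j : L_{i+1}\times\cdots\times L_m \to L_j$ (for $j \in [1,i]$) is a totally defined monotone function. Taking $i = m$ and evaluating at the unique point of the one-element domain then yields the desired tuple $(l^{\sol}_1, \ldots, l^{\sol}_m)$. Uniqueness of the solution is automatic, since the construction in Definition~\ref{def:solOfEqSys} is deterministic and each least/greatest fixed point it invokes is unique by definition.

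At the inductive step, the only obstruction to totality is whether the monotone endomap $f^{\ddagger}_i(\place, l_{i+1}, \ldots, l_m) : L_i \to L_i$ admits a least (resp.\ greatest) fixed point when $\eta_i = \mu$ (resp.\ $\eta_i = \nu$), for every choice of $l_{i+1}, \ldots, l_m$. Once this is granted, $l^{(i)}_i$ is well-defined; the remaining $l^{(i)}_j$ for $j < i$ are then well-defined by substitution into the totally defined $l^{(i-1)}_j$'s from the inductive hypothesis. Thus the whole statement reduces to establishing existence of both fixed points of $f^{\ddagger}_i(\place, l_{i+1}, \ldots, l_m)$ under each of the two hypotheses.

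Under hypothesis (a), this is immediate from the Knaster--Tarski theorem applied to a monotone endomap on the complete lattice $L_i$; monotonicity of $f^{\ddagger}_i$ has been flagged in the text as routine, following from monotonicity of the $f_k$'s together with the inductively established monotonicity of the $l^{(i-1)}_j$'s. Under hypothesis (b), existence follows from Kleene's fixed-point theorem: in the pointed $\omega$-cpo $L_i$, the $\omega$-continuous map $f^{\ddagger}_i(\place, l_{i+1}, \ldots, l_m)$ has least fixed point $\sup_{n} (f^{\ddagger}_i)^n(\bot)$, and dually the $\omega^{\op}$-continuity assumption together with existence of the top element gives the greatest fixed point as $\inf_{n} (f^{\ddagger}_i)^n(\top)$.

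The main obstacle, to the extent there is one, is bookkeeping rather than mathematical depth. One must confirm that the case split in the definition of $f^{\ddagger}_{i+1}$ never falls into the \undef\ branch once the $l^{(i)}_j$ are totally defined, and that the resulting $f^{\ddagger}_{i+1}(\place, l_{i+2}, \ldots, l_m)$ still qualifies as a monotone endomap on $L_{i+1}$ so that the hypothesis (a) or (b) applies at the next level; this is a direct unfolding of the formula and is already anticipated by the excerpt's observation that ``all the functions $f^{\ddagger}_i$ and $l^{(i)}_j$ involved here are monotone.''
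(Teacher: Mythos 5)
Your proof is correct and follows essentially the same route as the paper: the paper reduces existence of the solution to existence of the least and greatest fixed points of each $f^{\ddagger}_i(\place,l_{i+1},\ldots,l_m)$, obtained via the Knaster--Tarski theorem under hypothesis (\ref{item:prop:suffCondSol1}) and via the Kleene fixed-point theorem (and its order dual) under hypothesis (\ref{item:prop:suffCondSol2}). Your explicit induction on $i$ establishing totality and monotonicity of the interim solutions, and your remark on uniqueness, merely spell out what the paper leaves implicit in Definition~\ref{def:solOfEqSys}.
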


If Condition~(\ref{item:prop:suffCondSol1}) above is satisfied then 
existence of the least and the greatest fixed points of the function in (\ref{eq:1703211255}) is ensured by
the Knaster--Tarski theorem. 
In contrast, if Condition~(\ref{item:prop:suffCondSol2}) is satisfied then existence of the least and the greatest fixed points
is ensured by the Kleene fixed-point theorem.

As we will see later,  Condition~(\ref{item:prop:suffCondSol1}) is suitable for the nondeterministic setting while
Condition~(\ref{item:prop:suffCondSol2}) is suitable for the probabilistic setting.



\subsection{Progress Measure}\label{sec:progMeas}
The notion of (lattice-theoretic) \emph{progress
measure}~\cite{HasuoSC16}, although not explicit, plays an important
role in the current paper. 
We first briefly review its idea.

Verification
of a fixed-point specification amounts mathematically to
\emph{underapproximating} the fixed point.\footnote{In some cases we might be interested in approximations with respect to 
\emph{distance} rather than order~\cite{breugelW05BehaviouralPseudometric}.
In such cases we can use the Banach fixed-point theorem instead of the Knaster-Tarski or Cousot-Cousot one.}
 This is usually done very
differently for gfp's and lfp's. For a gfp $\nu f$ one provides an
\emph{invariant} $l$---a post-fixed point $l\sqsubseteq f(l)$---and
then the Knaster-Tarski theorem yields $l\sqsubseteq \nu f$. However, for an lfp 
$\mu f$, the same argument (namely finding a pre-fixed point $f(l)\sqsubseteq l$) would give an \emph{overapproximation};
instead we should appeal to the Cousot-Cousot theorem~\cite{cousotC79} and
consider
the \emph{approximation sequence} $\bot\sqsubseteq
f(\bot)\sqsubseteq\cdots$. The sequence eventually converges to $\mu f$
(possibly after transfinite induction);\footnote{In case $f$ is
continuous the sequence converges after $\omega$ steps. This is the
Kleene fixed-point theorem.} hence for every ordinal
$\alpha$, the approximant $f^{\alpha}(\bot)$ is an
underapproximation of $\mu f$. This is the underlying principle of
proofs by \emph{ranking functions} of termination, for example.

 Progress measures in~\cite{HasuoSC16}, generalizing the
combinatorial notion of the same name in
Jurdzi\'{n}ski's algorithm for parity games~\cite{Jurdzinski00}, are
roughly combination of invariants and ranking functions. The latter two
 must be combined in an intricate manner so that they respect
the order of equations in~(\ref{eq:sysOfEq}) (that is, priorities in parity
games or $\mu$-calculus formulas); we do so with the help of
a suitable truncated order. 

Use of parity games is nowadays omnipresent,
and  the study of fair
simulations is not an exception~\cite{etessamiWS05fairsimulation}. 
Following those previous works, the basic idea behind our developments (below)
 is to generalize: parity games to equational systems
(Definition~\ref{def:eqSys}); and accordingly, Jurdzi\'{n}ski's
(combinatorial) progress measure to our lattice-theoretic one~\cite{HasuoSC16}. 

In the rest of this section we formally state 
the formal definition of progress measure, as well as its soundness and completeness
results (against the solution of an equational system).
To this end,
we first review the notion of \emph{prioritized ordinal}, which
embodies the idea of \emph{priority} in parity games.
See~\cite{HasuoSC16} for the relationship between the notion of prioritized ordinal and the notion of priority in parity games.
\begin{defi}[prioritized ordinal, $\leq_{i}$]\label{def:prioritizedOrdinal}
Let $E$ be the equational system in~(\ref{eq:sysOfEq}) of Definition~\ref{def:eqSys}.
Let us collect the indices of $\mu$-variables:
\begin{math}
 \{i_{1},\dotsc,i_{k}\}
 =
 \{i\in[1,m]\mid \eta_{i}=\mu \text{ in~(\ref{eq:sysOfEq})}\},
\end{math}
and assume that 
 $i_{1}<\cdots <i_{k}$. 
 A \emph{prioritized ordinal} for $E$ is a $k$-tuple
 $(\seq{\alpha}{k})$ of ordinals. 

 For each $i\in [1,m]$ we define a  preorder $\leq_{i}$ between
 prioritized ordinals---called the \linebreak
  \emph{$i$-th truncated
 pointwise order}---as follows.
 If $i_k< i$, then 
 \begin{math}
(\alpha_{1},\dotsc,\alpha_{k})
  \leq_{i}
  (\alpha'_{1},\dotsc,\alpha'_{k})
 \end{math}
is always true. 
Otherwise, let $a\in [1,k]$ be such that
	       \begin{math}
		i_{1}<\cdots <i_{a-1}<i\le i_{a}<\cdots < i_{k},
	       \end{math}
 that is, $u_{i_{a}}$ is the $\mu$-variable with the smallest priority 
 above that of $i$.
 Then we define
 \begin{math}
(\alpha_{1},\dotsc,\alpha_{k})
  \leq_{i}
  (\alpha'_{1},\dotsc,\alpha'_{k})
 \end{math}
 if
 we have
 $\alpha_i\leq\alpha'_i$ for each $i\in[a,k]$.
\end{defi}

\begin{defi}[progress measure for an equational system]
\label{def:progMeas}
Let $E$ be the equational system 
in Definition~\ref{def:eqSys}.
We further assume that for each $i\in[1,m]$, $L_i$ has the smallest element $\bot$. 
 A \emph{progress measure} $p$ for $E$ is given by a tuple
 \begin{math}
  p
  =
  \bigl(\,
  (
  \overline{\alpha_{1}},\dotsc,
  \overline{\alpha_{k}}),
  \,
  \bigl(\,\approximant_{i}(\alpha_{1},\dotsc,\alpha_{k})\,\bigr)_{i,\seq{\alpha}{k}}
\,\bigr)
 \end{math}
that
consists of:
 \begin{itemize}
  \item the \emph{maximum prioritized ordinal}
 $(\overline{\alpha_{1}},\dotsc, \overline{\alpha_{k}})$; and 
  \item the \emph{approximants} $\approximant_{i}(\alpha_{1},\dotsc,\alpha_{k})\in L_{i}$, defined for
	each
	$i\in[1,m]$ and each
	prioritized ordinal
 $(\alpha_{1},\dotsc,\alpha_{k})$
 such that
	$
	\alpha_{1}\le\overline{\alpha_{1}},\dotsc,
	\alpha_{k}\le\overline{\alpha_{k}}
	$. 
 \end{itemize}
 The approximants $\approximant_{i}(\alpha_{1},\dotsc,\alpha_{k})$
 are subject to:
	\begin{enumerate}
	 \item \label{item:progressMeasDefMonotonicity}
	      \textbf{(Monotonicity)}
	       For each $i\in[1,m]$,
	       \begin{math}
		(\alpha_{1},\dotsc,\alpha_{k})
		\leq_{i}
		(\alpha'_{1},\dotsc,\alpha'_{k})
	       \end{math} 
	       implies
	       \begin{math}
		\approximant_{i}(\alpha_{1},\dotsc,\alpha_{k})
		\sqsubseteq
		\approximant_{i}(\alpha'_{1},\dotsc,\alpha'_{k})
	       \end{math}.

	 \item\label{item:progressMeasDefMuVarBaseCase}
	      \textbf{($\mu$-variables, base case)}
	       Let $a\in [1,k]$. Then $\alpha_{a}=0$ implies
	       $\approximant_{i_{a}}(\alpha_{1},\dotsc, \alpha_{a},\dotsc,\alpha_{k})=\bot$.
	 \item\label{item:progressMeasDefMuVarStepCase}
	      \textbf{($\mu$-variables, step case)}
	      Let $a\in [1,k]$.
	       Then there exist ordinals
	       $\beta_{1},\dotsc,
	       \beta_{a-1}$ such that
 $\beta_{1}\le\overline{\alpha_{1}},\dotsc,
	       \beta_{a-1}\le \overline{\alpha_{a-1}}$ and
	       \begin{multline}\label{eq:progressMeasDefMuVarStepCase}
		\approximant_{i_{a}} (\alpha_{1},\dotsc,\alpha_{a-1},
		\alpha_{a}+1,\alpha_{a+1},\dotsc,\alpha_{k})
		\\
		\;\sqsubseteq\;
		f_{i_{a}}
		\left(\,
		\begin{array}{c}
		\approximant_{1} (\beta_{1},\dotsc,\beta_{a-1},
		 \alpha_{a},\alpha_{a+1},\dotsc,\alpha_{k}),
		 \\
		 \dotsc,
		  \\
		\approximant_{m} (\beta_{1},\dotsc,\beta_{a-1},
		\alpha_{a},\alpha_{a+1},\dotsc,\alpha_{k})
		\end{array}
		\,\right).
	       \end{multline}	       

	 \item\label{item:progressMeasDefMuVarLimitCase}
	      \textbf{($\mu$-variables, limit case)}
	      Let $a\in [1,k]$ and let $\alpha_{a}$ be a limit ordinal. 
	      Then the supremum 
	      $\bigsqcup_{\beta<\alpha_{a}}\approximant_{i_{a}} (\alpha_{1},\dotsc,\beta,\dotsc,\alpha_{k})\in L_{i_a}$
	      exists and we have:
	      \begin{equation}\label{eq:progressMeasDefMuVarLimitCase}
		\approximant_{i_{a}} (\alpha_{1},\dotsc,
		\alpha_{a},\dotsc,\alpha_{k})
		\sqsubseteq 
		\bigsqcup_{\beta<\alpha_{a}}
				\approximant_{i_{a}} (\alpha_{1},\dotsc,
		\beta,\dotsc,\alpha_{k})\enspace.
	       \end{equation}

	 \item\label{item:progressMeasDefNuVar}
	      \textbf{($\nu$-variables)}
	       Let $i\in [1,m]\setminus\{i_{1},\dotsc, i_{k}\}$;
	      and let $a\in [1,k]$ be such that
	       \begin{math}
		i_{1}<\cdots <i_{a-1}<i<i_{a}<\cdots < i_{k}
	       \end{math}.
	       Let $(\alpha_{1},\dotsc,\alpha_{k})$ be a prioritized ordinal.
	       Then
there exist
	       ordinals
	       $\beta_{1},\dotsc,\beta_{a-1}$ such that
 $\beta_{1}\le \overline{\alpha_{1}},\dotsc,
	       \beta_{a-1}\le \overline{\alpha_{a-1}}$ and
	       \begin{equation}\label{eq:progressMeasDefNuVar}
		\approximant_{i} (\alpha_{1},\dotsc,
		\alpha_{a-1},\alpha_{a},\dotsc,\alpha_{k})
        \;\sqsubseteq\;
		f_{i}
		\left(\,
		\begin{array}{c}
		\approximant_{1} (\beta_{1},\dotsc,\beta_{a-1},
		 \alpha_{a},\dotsc,\alpha_{k}),
		 \\
		 \dotsc,
		  \\
		\approximant_{m} (\beta_{1},\dotsc,\beta_{a-1},
		\alpha_{a},\dotsc,\alpha_{k})
		\end{array}
		\,\right).
	       \end{equation}	       
	\end{enumerate}
\end{defi}
The definition  combines the features of
ranking functions (Conditions~2--4) and those of invariants (Condition~5). Note also that in
each clause ordinals with smaller priorities can be modified to
arbitrary $\beta_{i}$. 

\begin{rem}\label{rem:diffFromHSC16_1}
The definition of a progress measure in Definition~\ref{def:progMeas} is 
slightly different from the one in~\cite{HasuoSC16}, in the following points.
\begin{enumerate}
\item \label{item:difference1}
Condition~(\ref{item:progressMeasDefMonotonicity})
is 
given using the truncated \emph{pointwise} order instead of the truncated \linebreak \emph{lexicographic} order.

\item \label{item:difference3}
It is not assumed that each $L_i$ is a complete lattice.
Instead,
in Condition~(\ref{item:progressMeasDefMuVarLimitCase}),
existence of the supremum is explicitly required.
\end{enumerate}

The difference (\ref{item:difference1})
is
 made for the sake of
cleanliness of the soundness proof for our notion of simulation (Theorem~\ref{thm:soundnessFwdFairBuechi}).
The difference (\ref{item:difference3}) is made because,
in the probabilistic setting (see e.g.\ Example~\ref{example:giryNotDCpo}), 
we should consider progress measures where each $L_i$ is not a complete lattice or even a dcpo.
Because of the 
latter difference,
in the correctness theorem below, we 
need
 extra assumptions
((\ref{item:thm:correctnessOfProgMeasEqSys1}) and (\ref{item:thm:correctnessOfProgMeasEqSys2}))
that do not appear in the correctness theorem in~\cite{HasuoSC16}.
\end{rem}

Despite these modifications,
the notion of progress measure in Definition~\ref{def:progMeas} shares correctness properties
with the original definition in~\cite{HasuoSC16}---soundness and completeness.
The proofs are almost the same as the ones in~\cite{HasuoSC16}.

\begin{thm}[{correctness of progress measures}]
\label{thm:correctnessOfProgMeasEqSys}
Let $E$ be the equational system~(\ref{eq:sysOfEq})
and assume
 that $E$ has the solution
$(
l^{\sol}_{1},
\dotsc,
l^{\sol}_{m}
)$.
  We further assume 
  that for each $i\in[1,m]$, 
  \begin{enumerate}
    \renewcommand{\labelenumi}{(\roman{enumi})}
  \renewcommand{\theenumi}{\roman{enumi}}
  \item\label{item:thm:correctnessOfProgMeasEqSys1} 
  the poset $L_{i}$
  has the least element $\bot$ and  
  is $\omega$-complete; and

  \item\label{item:thm:correctnessOfProgMeasEqSys2} 
  for each 
  $l_{i+1}\in L_{i+1},\ldots,l_m\in L_m$,
  the function 
\begin{equation*}
 f^\ddagger_{i}(\place,l_{i+1},\ldots,l_m)\;:\quad L_i\longrightarrow L_i
\end{equation*}  
in Definition~\ref{def:solOfEqSys}
  is $\omega$-continuous.
  \end{enumerate}
%
%
%
Then we have the following.
\begin{enumerate}
 \item\label{item:soundnessProgressMeas} \textbf{(Soundness)}
 For each progress measure 
  $p=\bigl((\overline{\alpha_{1}},\dotsc,\overline{\alpha_{k}}),\bigr(\approximant_{i}(\alpha_{1},\dotsc,\alpha_{k})\bigr)_{i,  
  \seq{\alpha}{k}}\bigr)$ 
we have
 \begin{math}
  \approximant_{i}(
  \overline{\alpha_{1}},\dotsc,
  \overline{\alpha_{k}})
  \sqsubseteq
  l^{\sol}_{i}
 \end{math} 
for each $i\in
        [1,m]$.
        
 \item\label{item:completenessProgressMeas} 
  \textbf{(Completeness)}
  There exists a progress measure 
  $p=\bigl((\overline{\alpha_{1}},\dotsc,\overline{\alpha_{k}}),\bigr(\approximant_{i}(\alpha_{1},\dotsc,\alpha_{k})\bigr)_{i,  
  \seq{\alpha}{k}}\bigr)$ 
  that achieves the solution, that is, 
 \begin{math}
  \approximant_{i}
  (
  \overline{\alpha_{1}},\dotsc,
  \overline{\alpha_{k}})
  =
  l^{\sol}_{i}
 \end{math}
 for each $i\in[1,m]$.
 Moreover 
 we can find $p$ such that $\overline{\alpha_i}\leq \omega$ for each $i\in[1,m]$.
       \qed
\end{enumerate}
 \end{thm}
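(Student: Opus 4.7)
The plan is to adapt the correctness proof from~\cite{HasuoSC16} so as to accommodate the two differences identified in Remark~\ref{rem:diffFromHSC16_1}---the truncated pointwise order in place of the lexicographic one, and the weaker assumption of $\omega$-completeness plus $\omega$-continuity in place of the full complete-lattice assumption. Note first that the additional hypotheses~(\ref{item:thm:correctnessOfProgMeasEqSys1})--(\ref{item:thm:correctnessOfProgMeasEqSys2}) are exactly enough to trigger Proposition~\ref{prop:suffCondSol}(\ref{item:prop:suffCondSol2}), so the solution $(l^{\sol}_1,\dotsc,l^{\sol}_m)$ exists and each auxiliary map $l^{(i)}_j$ can be realised by Kleene iteration up to $\omega$. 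This will be crucial both for grounding the transfinite induction and for cutting off the completeness witness at $\omega$.

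For soundness (part~\ref{item:soundnessProgressMeas}), I would prove, by outer induction on $a$ going from $k$ down to $1$ and inner transfinite induction on $\alpha_a$, the strengthened statement
\begin{displaymath}
\approximant_i(\alpha_1,\dotsc,\alpha_k)\;\sqsubseteq\; \widetilde{l}^{(a)}_i(\alpha_a,\dotsc,\alpha_k)\qquad\text{for all }i\in[1,m],
\end{displaymath}
where $\widetilde{l}^{(a)}_i$ is the auxiliary interim value obtained by replacing each outer $\mu$-fixed point (at priority index $\geq a$) by its $\alpha_{a'}$-th Kleene iterate from $\bot$, while keeping all $\nu$-fixed points solved to their gfps. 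The $\mu$-variable base, step, and limit clauses (Conditions~\ref{item:progressMeasDefMuVarBaseCase}--\ref{item:progressMeasDefMuVarLimitCase}) drive the ordinal induction for $\mu$-variables, with $\omega$-continuity of $f^{\ddagger}_{i_a}$ ensuring that the limit clause passes the supremum inside; the $\nu$-clause (Condition~\ref{item:progressMeasDefNuVar}) yields a post-fixed-point inequality for $f^{\ddagger}$-in-context at the current priority level, so Knaster--Tarski (equivalently, Kleene's theorem under the present hypotheses) identifies the relevant interim gfp. Specialising the strengthened statement to $a=1$ and to the maximum prioritized ordinal gives $\approximant_i(\overline{\alpha_1},\dotsc,\overline{\alpha_k}) \sqsubseteq l^{(m)}_i = l^{\sol}_i$.

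For completeness (part~\ref{item:completenessProgressMeas}), the plan is an explicit construction: set $\overline{\alpha_a} := \omega$ for each $a\in[1,k]$, and define $\approximant_i(\alpha_1,\dotsc,\alpha_k)$ to be the $i$-th component of the interim value obtained by Kleene-iterating each $\mu$-fixed point $\alpha_a$ times from $\bot$ (following the order of equations) while fully solving every $\nu$-fixed point encountered along the way. Monotonicity (Condition~\ref{item:progressMeasDefMonotonicity}) follows from monotonicity of the $f^{\ddagger}_j$'s; the $\mu$-base clause yields $\bot$ directly; the $\mu$-step and $\mu$-limit clauses are exactly the defining equations of Kleene iteration, now legitimate at $\omega$ by $\omega$-continuity; and the $\nu$-clause is the defining post-fixed-point property of the gfp. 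By $\omega$-continuity each $\omega$-th iterate \emph{is} the genuine interim lfp, so $\approximant_i(\omega,\dotsc,\omega) = l^{(m)}_i = l^{\sol}_i$, giving the promised bound $\overline{\alpha_i}\leq\omega$.

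The main obstacle will be the soundness bookkeeping when crossing from a $\mu$-block to an enclosing $\nu$-block (and vice versa): one must check that the inductively established bound against an $\alpha_a$-iterate of $f^{\ddagger}_{i_a}$ refines to the gfp-in-context needed to feed the enclosing $\nu$-clause, which is precisely where the flexibility in choosing the $\beta_j$'s for indices $j<a$ in Conditions~\ref{item:progressMeasDefMuVarStepCase} and~\ref{item:progressMeasDefNuVar} is exploited---it lets the progress measure ``restart'' the inner Kleene approximation whenever an outer ordinal advances. The switch from the truncated lexicographic to the truncated pointwise order only tightens the monotonicity hypothesis~(Condition~\ref{item:progressMeasDefMonotonicity}), so it propagates through the argument of~\cite{HasuoSC16} without incident; the remaining $\omega$-continuity hypothesis is invoked exactly at the two places flagged above (the $\mu$-limit clause and the termination of inner Kleene iterations).
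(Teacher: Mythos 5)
Your proposal is correct and follows essentially the same route as the paper, which gives no detailed argument here but explicitly defers to the correctness proof of~\cite{HasuoSC16}, noting that it carries over with only the two modifications of Remark~\ref{rem:diffFromHSC16_1}. Your sketch is precisely that adaptation: the soundness invariant against Cousot--Cousot/Kleene approximants of the interim least fixed points, and the completeness witness built from $\omega$-bounded Kleene iterates, with hypotheses~(\ref{item:thm:correctnessOfProgMeasEqSys1})--(\ref{item:thm:correctnessOfProgMeasEqSys2}) standing in for the complete-lattice assumption of~\cite{HasuoSC16} exactly where you invoke them.
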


%

\section{Fair Simulation for Nondeterministic B\"uchi Tree Automata}
\label{sec:fwdFairSimNondet}
A \emph{ranked alphabet} is a set $\Sigma$  with a function $|\place|:\Sigma\to\mathbb{N}$ that
gives an \emph{arity} to each $\sigma\in\Sigma$.

\begin{defi}[NBTA]\label{def:nondetBuechiTreeAutom}
A \emph{nondeterministic B\"uchi tree automaton}  (NBTA)
is given by a quintuple $\mathcal{X}=(X,\Sigma,\delta,\initSet,\Acc)$ consisting of 
a state space $X$, a ranked alphabet $\Sigma$, a \emph{transition function}
$\delta:X\to\pow(\coprod_{\sigma\in\Sigma} X^{|\sigma|})$,
a set $\initSet\subseteq X$ of the \emph{initial states}, and
a set $\Acc\subseteq X$ of the \emph{accepting states}
 (often designated by $\accstate$). 
\end{defi}


%
\begin{exa}\label{example:NBTA}
We define an NBTA $\mathcal{X}=(X,\Sigma,\delta,\initSet,\Acc)$ as follows.

 \noindent\begin{minipage}{0.5\hsize}
\begin{itemize}
\item $X=\{x_1,x_2\}$
\item $\Sigma=\{a,b\}$ where $|a|=|b|=2$
\item $\delta(x_1)=\delta(x_2)=\{(a,(x_1,x_1)),(b,(x_2,x_2))\}$
\item $I=\{x_1\}$
\item $\Acc=\{x_2\}$
\end{itemize}
\end{minipage}
\begin{minipage}{0.5\hsize}
\[
\vspace{-3mm}
\raisebox{5mm}{
\begin{xy}
(-10,22)*{\mathcal{X}}="",
(0,15)*+[Fo]{x_1} = "x1",
(10,10)*{\Box} = "xb1",
(0,8)*{\Box} = "xc1",
(20,15)*+[Foo]{x_2} = "x2",
(10,20)*{\Box} = "xb2",
(20,8)*{\Box} = "xc2",
%
\ar (-10,15);"x1"*+++[o]{}
\ar ^{b} "x1"*++{};"xb1"*+[o]{}
\ar @<.5mm> "xb1";"x2"*+++[o]{}
\ar @<-.5mm> "xb1";"x2"*+++[o]{}
\ar _(.7){a} "x2"*++{};"xb2"*+[o]{}
\ar @<.5mm> "xb2";"x1"*+++[o]{}
\ar @<-.5mm> "xb2";"x1"*+++[o]{}
\ar ^(.7){a} @/^2mm/"x1";"xc1"*+[o]{}
\ar @/^2mm/ "xc1";"x1"*++[o]{}
\ar @/^4mm/ "xc1";"x1"*++[o]{}
\ar _(.7){b}@/_2mm/ "x2";"xc2"*+[o]{}
\ar @/_2mm/ "xc2";"x2"*++[o]{}
\ar @/_4mm/ "xc2";"x2"*++[o]{}
\end{xy}
}
\]
\end{minipage}
Then $\mathcal{X}$ can be illustrated as in the above.
Here $x\xrightarrow{\sigma}\Box\rightrightarrows {\tiny\begin{matrix}y\\[-.5mm] z\end{matrix}}$ denotes $(\sigma,(y,z))\in\delta(x)$.
 \end{exa}


\subsection{Accepted Languages of Nondeterministic B\"uchi Tree Automata}\label{subsec:basicNBTA}
%
We start with reviewing
necessary notions for defining accepted (tree)
languages of NBTAs. 
They are all as usual.

\begin{nota}\label{nota:generativeProbBuechiTreeAutom}
We let $\nat^{*}$ and $\nat^{\omega}$ denote the sets of finite and
infinite sequences over natural numbers, respectively. Moreover we let
$\nat^{\infty}:=\nat^{*}\cup\nat^{\omega}$.
Concatenation of finite/infinite sequences, and/or characters are denoted
simply by juxtaposition. Given an infinite sequence
$\pi=\pi_{1}\pi_{2}\dotsc\in\nat^{\omega}$ (here $\pi_{i}\in\nat$), its
prefix $\pi_{1}\dotsc\pi_{n}$ is denoted by $\pi_{\le n}$. 
\end{nota}

The following formalization of trees and related notions are standard,
with its variations used in~\cite{carayolHS14} for example.

\begin{defi}[$\Sigma$-tree]\label{def:treesAndRelatedNotions}
Let $\Sigma$ be a ranked alphabet, with each element $\sigma\in \Sigma$
coming with its arity $|\sigma|\in \nat$. 
 A \emph{$\Sigma$-tree} $\tau$ is given by a nonempty subset
       $\Dom(\tau)\subseteq\nat^{*}$ (called the \emph{domain} of $\tau$)
       and a \emph{labeling} function $\tau\colon \Dom(\tau)\to\Sigma$ that
       are subject to the following conditions.\footnote{We shall use the same
       notation $\tau$ for a tree itself and its labeling
       function. Confusion is unlikely.}
       \begin{enumerate}
	\item $\Dom(\tau)$ is \emph{prefix-closed}: for any $w\in \nat^{*}$  and
	      $i\in \nat$, $wi\in \Dom(\tau)$ implies $w\in \Dom(\tau)$. 
	      See Figure~\ref{fig:positionInATree}.
	\item $\Dom(\tau)$ is \emph{lower-closed}:  for any $w\in \nat^{*}$  and
	      $i,j\in \nat$, $wj\in \Dom(\tau)$ and $i\le j$ imply $wi\in
	      \Dom(\tau)$. 
	      See Figure~\ref{fig:positionInATree}.
	\item The labeling function is consistent with arities: for any $w\in
	      \Dom(\tau)$, let $\sigma=\tau(w)$. Then $w0,w1,\dotsc,
	      w(|\sigma|-1)$ belong to $\Dom(\tau)$, and $wi\not\in
	      \Dom(\tau)$ for each $i$ such that $|\sigma|\le i$. 
	      See Figure~\ref{fig:arityOfLabelsAndNumOfSucc}.
       \end{enumerate}

The set of all $\Sigma$-trees shall be denoted by $\myTree_{\Sigma}$. 
\end{defi}

\begin{figure}[tbp]
\begin{minipage}{.38\textwidth}\centering
       \begin{tikzpicture}[level distance=3em,sibling distance=1em]
 \Tree[.$\varepsilon$ [.$0$ [.$00$ [.{\scriptsize $\vdots$} ]
	[.{\scriptsize $\vdots$} ] ] ] [.$1$ [.$10$ ] [.$11$
	[.{\scriptsize $\vdots$} ] ] [.$12$ ] ] ]
       \end{tikzpicture}
\caption{Positions in a tree}
\label{fig:positionInATree}
\end{minipage}
\begin{minipage}{.6\textwidth}\centering
       \begin{tikzpicture}[level distance=3em,sibling distance=1em]
 \Tree[.$\sigma^{(2)}_{1}$ [.$\sigma^{(1)}_{1}$ [.$\sigma^{(2)}_{2}$ [.{\scriptsize $\vdots$} ]
	[.{\scriptsize $\vdots$} ] ] ] [.$\sigma^{(3)}_{1}$ [.$\sigma^{(0)}_{1}$ ] [.$\sigma^{(1)}_{2}$
	[.{\scriptsize $\vdots$} ] ] [.$\sigma^{(0)}_{2}$ ] ] ]
       \end{tikzpicture}
\caption{The arity  of a label, and  the number of successors. Here
 $\sigma^{(i)}_{j}\in \Sigma$ is assumed to be of arity $i$. }
\label{fig:arityOfLabelsAndNumOfSucc}
\end{minipage}
\end{figure}

Intuitively, a $\Sigma$-tree is a possibly infinite tree whose nodes are
labeled from  $\Sigma$ and each node, say labeled by
$\sigma$, has precisely $|\sigma|$ children.
A sequence $w\in \nat^{*}$ is understood as a \emph{position} in a tree.

The following definitions are standard, too, in the tree-automata literature.

\begin{defi}[run]\label{def:runOfProbTreeAutom}
A \emph{run} $\rho$ of
an NBTA (Definition~\ref{def:nondetBuechiTreeAutom})
 $\X=(X,\Sigma,\delta,\initSet,\Acc)$
is a (possibly infinite) tree whose nodes are $(\Sigma\times X)$-labeled.
That should be consistent with arities of symbols, and
compatible with the initial states ($\initSet\subseteq X$) and the transition $\delta$ of the automaton $\X$.
Precisely, it is given by the following conditions:
 \begin{enumerate}
  \item A nonempty subset $\Dom(\rho)\subseteq\nat^{*}$ that is subject to
    the same conditions (of being prefix-closed and lower-closed) as
    for $\Sigma$-trees (Definition~\ref{def:treesAndRelatedNotions}).
  \item A labeling function $\rho\colon \Dom(\rho)\to \Sigma\times X$
    such that, if $\rho(w)=(\sigma,x)$, then $w$ has precisely
    $|\sigma|$ successors $w0,w1,\dotsc,w(|\sigma|-1)\in
    \Dom(\rho)$.
  \item Successors are reachable by a transition, in the sense that
    $(\sigma_{w}, (x_{w0}, \dotsc, x_{w|\sigma|-1})) \in \delta(x_{w})$ holds, where
    $\rho(w)$ is labeled with $(\sigma_{w},x_{w})$, and
    $\rho(wi)$ is labeled with $(\sigma_{wi},x_{wi})$ for each $0 \leq i < |\sigma|$.
  \item The root is labeled with an initial state, that is,
    $x_{\varepsilon}\in I$ where
    $\rho(\varepsilon) = (\sigma_{\varepsilon}, x_{\varepsilon})$.
 \end{enumerate}
 The set of all runs of the NBTA $\X$ is denoted by $\Run^\pow_\X $.

 The map denoted by $\DelSt\colon \Run^{\pow}_{\X}\to \myTree_{\Sigma}$
 takes a run $\rho\in\Run^{\pow}_\X$, removes its $X$-labels
 applying the first projection to each label, and returns the
 resulting $\Sigma$-labeled tree. The resulting tree is easily seen to be a
 $\Sigma$-tree by Definition~\ref{def:treesAndRelatedNotions}. We say that a run $\rho$
 is \emph{over} the $\Sigma$-tree $\DelSt(\rho)$. 
\end{defi}

In summary, a (possibly infinite) $(\Sigma \times X)$-labeled tree $\rho$ is 
 a run of an NBTA 
  $\X=(X,\Sigma,\delta,\initSet,\Acc)$ if: the $X$-label of its root
 is  initial  $s\in \initSet$; and for each node with a
 label $(\sigma,x)$, it has $|\sigma|$ children and we have
 $\bigl(\sigma,(x_{1},\dotsc,x_{|\sigma|})\bigr)\in \delta(x)$ where 
 $x_{1},\dotsc,x_{|\sigma|}$ are the $X$-labels of its children.

We next define a notion of branch. 

\begin{defi}[branch]\label{def:branch}
Let $\tau$ be a $\Sigma$-tree. A \emph{branch} of $\tau$ is either: 
\begin{itemize}
 \item 
 an
 infinite sequence $\pi=\pi_{1}\pi_{2}\dotsc\in\nat^{\omega}$ (where $\pi_{i}\in\nat$) such that any finite prefix 
 $\pi_{\le n}=\pi_{1}\dotsc\pi_{n}$ of it belongs to $\Dom(\tau)$; or 
 \item 
a
 finite sequence $\pi=\pi_{1}\dotsc\pi_{n}\in \nat^{*}$ where
       $\pi_{i}\in\nat$ that belongs to $\Dom(\tau)$
 and such that $\pi0\not\in\Dom(\tau)$.\footnote{This means that $\pi$ is a
 leaf of $\tau$, and that $\tau(\pi)$ is a $0$-ary symbol.}
\end{itemize}
The set of all branches of a $\Sigma$-tree $\tau$ is denoted by
 $\Branch(\tau)$. 
The notion of branch is defined similarly for a run, with
 $\Branch(\rho)$ denoting the set of all branches of $\rho$. 
\end{defi}

We define a notion of accepting run.
A run $\rho$ of an NBTA $\X$ is said to be accepting
  if any infinite branch $\pi$ of the tree $\rho$ satisfies
  the B\"{u}chi acceptance condition, that is, it visits
   accepting states (in $\Acc$) infinitely often.
 The sets of runs and accepting runs of $\X$ are denoted by
  $\Run^\pow_\X$ and $\AccRun^\pow_\X$, respectively. 
  Formally, they are defined as follows.

\begin{defi}[accepting run]\label{def:acceptingRun}
 A run $\rho$ of an NBTA 
 $\X=(X,\Sigma,\delta,\initSet,\Acc)$ is said to be \emph{accepting} 
 if, any branch $\pi\in\Branch(\rho)$ of it is \emph{accepting} in the
 following sense. 
\begin{itemize}
 \item The branch $\pi$ is an infinite sequence
       $\pi=\pi_{1}\pi_{2}\dotsc\in\nat^{\omega}$, and 
       the labels along the branch
	$(\sigma_{\varepsilon},x_{\varepsilon})
	\,
	(\sigma_{\pi_{1}},x_{\pi_{1}})
	\,
	(\sigma_{\pi_{1}\pi_{2}},x_{\pi_{1}\pi_{2}})
	\cdots$
	(here 
	$(\sigma_{w},x_{w}):=\rho(w)$ for each $w\in\nat^{*}$)
       visit accepting states infinitely often, that is, there exists an infinite
       sequence $n_{1}<n_{2}<\cdots$ of natural numbers such that
       $x_{\pi_{1}\dotsc \pi_{n_{i}}}\in F$ for each $i\in \nat$; or
 \item the branch $\pi$ is a finite sequence
       $\pi=\pi_{1}\dotsc\pi_{n}\in\nat^{*}$.
\end{itemize}
The set of all accepting runs over $\X$ is denoted by $\AccRun^{\pow}_{\X}$.
\end{defi}

Using the notions defined so far, we can define accepted languages of NBTAs as follows.

\begin{defi}[accepted language $L(\X)$]\label{def:BuechiLangConventionally}
For an NBTA $\mathcal{X}$,
its \emph{(B\"uchi) language} $\lang(\mathcal{X})\subseteq \myTree_{\Sigma}$ 
is defined by $\lang(\mathcal{X})=\{\DelSt(\rho)\mid \rho\in\AccRun^\pow_\X\}$.
\end{defi}

\begin{exa}\label{example:NBTAAccLang}
For the NBTA $\mathcal{X}$ in Example~\ref{example:NBTA}, the B\"uchi language $\lang(\mathcal{X})$ collects
all the $\{a,b\}$-labeled infinite binary trees where $b$ appears infinitely many times on each branch.
\end{exa}

\subsection{Fair Simulation for Nondeterministic B\"uchi Tree Automata}
\label{subsec:fwdFairSimNondet}
In this section we introduce \emph{fair simulation} for NBTAs; this is our first contribution. 
For finite-state NBTAs,
 our fair simulation notion is essentially the same as the one in~\cite{Bomhard08BScThesis}.
However, unlike the notion in~\cite{Bomhard08BScThesis} that is defined combinatorially via a parity game, 
ours is expressed by means of equational systems (Section~\ref{sec:prelim}), hence is applicable to infinitary settings.
%


Here is a brief description of a parity game.
For formal definitions, see~\cite{thomas2002automata} for example.
A \emph{parity game} is a game played by two players called \emph{Even} and \emph{Odd} over a 
finite-state directed graph $G=(V,E)$.
Each node $v\in V$ is called a \emph{position}, and the set $V$ of positions is divided into two parts---the one where
Even chooses the next move and the one where Odd chooses the next move.
We assume that a game is equipped with a \emph{priority function} $p:V\to\{0,1,\ldots,n\}$ that assigns a 
natural number called a \emph{priority} to each state. 


Once an initial state $v_0$ and strategies (functions from finite sequences of positions to a position) 
for Even and Odd are fixed, 
a run $\rho=v_0v_1\ldots\in V^\omega$, an infinite sequence over $G$, is determined in a natural manner.
A run is \emph{winning} for Even (respectively Odd) if the \emph{maximum} priority that  appears infinitely often in $\rho$ 
 is even (respectively odd).
A parity game is said to be \emph{winning} for Even from a position $v_0$ if there exists a strategy for Even such that,
regardless of the strategy of Odd, the resulting run from $v_0$ is winning for Even.
A notion of winning for Odd is defined similarly.
It is known that parity games satisfy \emph{determinacy}~\cite{thomas2002automata}: 
for each parity game and each state in it, the game is winning from the state for exactly one of Even and Odd.

We hereby review the combinatorial definition of fair simulation in~\cite{Bomhard08BScThesis} via a parity game,
to show an intuition behind our definition.

\begin{defi}[parity game for NBTA fair simulation,~\cite{Bomhard08BScThesis}]\label{def:parityFairSimulation}
Let $\mathcal{X}=(X,\Sigma,\delta_\mathcal{X},\initSet_\mathcal{X},\Acc_\mathcal{X})$ and 
$\mathcal{Y}=(Y,\Sigma,\delta_\mathcal{Y},\initSet_\mathcal{Y},\Acc_\mathcal{Y})$ be 
 NBTAs such that $X$ and $Y$ are finite.
 Let $X_1=X\setminus \Acc_\mathcal{X}$,
 $X_2=\Acc_{\mathcal{X}}$, and similarly for
 $Y=Y_{1}\cup Y_{2}$.
We define a parity game $G_{\mathcal{X},\mathcal{Y}}$ 
as follows.
\[
\setlength{\tabcolsep}{8pt}
\def\arraystretch{1.1}
\begin{tabularx}{\textwidth}{@{}l|c|l|c@{}}
Position & \!\!Player\!\! & The set of possible moves & Priority \\\hline
$*$ & Odd & $\initSet_\X$ & $0$ \\
$x\in X$ & Even & $\{(x,y)\mid y\in \initSet_{\mathcal{Y}}\}$ & $0$ \\
$(x,y)\in X\times Y$ & Odd & 
$
{\scriptsize
\left\{
\begin{pmatrix}
(\sigma,x_1,\ldots,x_{|\sigma|}),\\
y
\end{pmatrix}
\,\middle|\,
\begin{aligned}
&(\sigma,x_1,\ldots,x_{|\sigma|}) \\
&\qquad\quad \in \delta_{\mathcal{X}}(x)
\end{aligned}
\right\}\!\!\!
}
$
& 
$\!\!{\scriptsize\begin{cases}
0 & ((x,y)\in X_1\times Y_1) \\
1 & ((x,y)\in X_2\times Y_1) \\
2 & ((x,y)\in X\times Y_2) 
 \end{cases}}$ \!\!\\
 $\begin{aligned}
&((\sigma,x_1,\ldots,x_{|\sigma|}),y) \\
&\quad\;\in(\textstyle{\coprod_{\sigma\in\Sigma}X^{|\sigma|}})\times Y
\end{aligned}$
& Even & 
$\left\{
\begin{aligned}
&((x_1,y_1),\ldots,\\
&\quad (x_{|\sigma|},y_{|\sigma|}))
\end{aligned}
\,\middle|\, 
\begin{aligned}
&(\sigma,y_1,\ldots,y_{|\sigma|}) \\
&\qquad\quad \in \delta_{\mathcal{Y}}(y)
\end{aligned}
\right\}$
& $0$ \\
$(p_1,\ldots,p_{n})\in (X\times Y)^*$ & Odd & $\{p_i\mid 1\leq i\leq
	 n\}$ & $0$
\end{tabularx}
\]
%
\end{defi}
%
%
Note that as the number of positions of the game is finite,
the problem to determine the winner of $G_{\mathcal{X},\mathcal{Y}}$
is decidable.

We now introduce our fair simulation notion by means of equational systems.
We will later show that for finite-state NBTAs, our simulation notion is essentially the same 
as the one in Definition~\ref{def:parityFairSimulation}.

\begin{defi}[fair simulation for NBTAs]\label{def:simForNondetBuechiRel}
Let $\mathcal{X}=(X,\Sigma,\delta_\mathcal{X},\initSet_\mathcal{X},\Acc_\mathcal{X})$ and 
$\mathcal{Y}=(Y,\Sigma,\delta_\mathcal{Y},\initSet_\mathcal{Y},\Acc_\mathcal{Y})$ be NBTAs.
We define 
 $X_1,X_2$ and $Y_1,Y_2$  as in Definition~\ref{def:parityFairSimulation}.
A \emph{fair simulation} from $\mathcal{X}$ to $\mathcal{Y}$ is a
 relation $R\subseteq X\times Y$ such that:
\begin{enumerate}
\item\label{item:def:simForNondetBuechi2Rel}
 For all $x\in \initSet_\mathcal{X}$, there exists $y\in \initSet_\mathcal{Y}$ such that $(x,y)\in R$.

\item 
Let $u_1^\sol,\dotsc,u_4^\sol$ be the solution of 
the following equational system (note $\mu$'s vs.\ $\nu$'s). 
\begin{equation}\label{eq:1601061008Rel}
\begin{array}{rll}
u_1 &=_{\nu}\, \Box_{\mathcal{X},1}(\Diamond_{\mathcal{Y},1}\left(\bigwedge_{\Sigma}(u_1\cup u_2\cup u_3\cup u_4)\right)) &\subseteq X_1\times Y_1  \\ 
u_2 &=_{\mu}\, \Box_{\mathcal{X},2}(\Diamond_{\mathcal{Y},1}\left(\bigwedge_{\Sigma}(u_1\cup u_2\cup u_3\cup u_4)\right)) &\subseteq X_2\times Y_1  \\ 
u_3 &=_{\nu}\, \Box_{\mathcal{X},1}(\Diamond_{\mathcal{Y},2}\left(\bigwedge_{\Sigma}(u_1\cup u_2\cup u_3\cup u_4)\right)) &\subseteq X_1\times Y_2  \\ 
u_4 &=_{\nu}\, \Box_{\mathcal{X},2}(\Diamond_{\mathcal{Y},2}\left(\bigwedge_{\Sigma}(u_1\cup u_2\cup u_3\cup u_4)\right)) &\subseteq X_2\times Y_2   
\end{array}
\end{equation}
Then 
$R$ is below the solution, that is, $R\subseteq u_{1}^\sol\cup \cdots\cup u_{4}^\sol$.
\end{enumerate}
Here the functions 
$\Box_{\mathcal{X},i}:\pow(\coprod_{\sigma\in\Sigma}X^{|\sigma|}\times Y)\to\pow(X_i\times Y)$,
$\Diamond_{\mathcal{Y},j}:\pow(\coprod_{\sigma\in\Sigma}X^{|\sigma|} \times \coprod_{\sigma\in\Sigma}Y^{|\sigma|})\to\pow(\coprod_{\sigma\in\Sigma}X^{|\sigma|}\times Y_j)$ 
and
$\bigwedge_{\Sigma}:\pow(X\times Y)\to\pow(\coprod_{\sigma\in\Sigma}X^{|\sigma|} \times \coprod_{\sigma\in\Sigma}Y^{|\sigma|})$
are defined as follows.
\[
\begin{array}{rl}
 \Box_{\mathcal{X},i}(S)&:=\{(x,y)\in X_i\times Y\mid \forall \mathbf{x}'\in \delta_\mathcal{X}(x).\, 
 (\mathbf{x}',y)\in S\} \\ 
 \Diamond_{\mathcal{Y},j}(T)&:=\left\{(\mathbf{x}',y)\in \coprod_{\sigma\in\Sigma}X^{|\sigma|}\times Y_j \,\middle|\, \exists \mathbf{y}'\in \delta_{\mathcal{Y}}(y).\, (\mathbf{x}',\mathbf{y}')\in T\right\}  \\
 \textstyle{\bigwedge}_\Sigma(U)&:=\left\{
 \begin{aligned}
 &\bigl((\sigma,x_1,\ldots,x_{|\sigma|}),(\sigma',y_1,\ldots,y_{|\sigma'|})\bigr) \\
 &\qquad\qquad\qquad\qquad\in\textstyle{\coprod_{\sigma\in\Sigma}X^{|\sigma|} \times \coprod_{\sigma\in\Sigma}Y^{|\sigma|}} 
 \end{aligned}\,\middle|\,
 \begin{aligned}
 & \sigma=\sigma', \\
 & \forall i.\, (x_i,y_i)\in  U
 \end{aligned}
 \right\}
\end{array}
\]
%
\end{defi}

\begin{thm}[soundness]\label{thm:soundnessOfSimulationEqSys}
In the setting of Definition~\ref{def:simForNondetBuechiRel}, 
existence of a fair simulation from $\X$ to $\mathcal{Y}$ implies
language inclusion, that is, $\lang(\mathcal{X})\subseteq\lang(\mathcal{Y})$.
\end{thm}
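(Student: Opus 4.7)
The plan is the standard pattern for B\"uchi fair-simulation soundness: from $\tau\in\lang(\mathcal{X})$ together with an accepting $\mathcal{X}$-run $\rho_{\mathcal{X}}$ over $\tau$, extract an $\mathcal{Y}$-run $\rho_{\mathcal{Y}}$ over the same tree using the simulation $R$ and the fixed-point equations, and then verify the B\"uchi condition on $\rho_{\mathcal{Y}}$ via a progress-measure argument (Section~\ref{sec:progMeas}) that tracks a well-founded ordinal along each infinite branch.

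I would first build $\rho_{\mathcal{Y}}$ node by node by dependent choice. Condition~(\ref{item:def:simForNondetBuechi2Rel}) of Definition~\ref{def:simForNondetBuechiRel} supplies $y_{\varepsilon}\in\initSet_{\mathcal{Y}}$ with $(x_{\varepsilon},y_{\varepsilon})\in R$. Inductively, at a position $w$ with already chosen pair $(x_{w},y_{w})\in R\subseteq u_{1}^{\sol}\cup\cdots\cup u_{4}^{\sol}$ and $\mathcal{X}$-transition $(\sigma,(x_{w0},\dots,x_{w,|\sigma|-1}))\in\delta_{\mathcal{X}}(x_{w})$ taken from $\rho_{\mathcal{X}}$, the fixed-point form $u_{j}^{\sol}=\Box_{\mathcal{X},i}\bigl(\Diamond_{\mathcal{Y},k}(\bigwedge_{\Sigma}(\bigcup_{l}u_{l}^{\sol}))\bigr)$ applies: $\Box_{\mathcal{X}}$ handles the specific $\mathcal{X}$-transition, $\Diamond_{\mathcal{Y}}$ supplies a $\mathcal{Y}$-transition $(\sigma,(y_{w0},\dots,y_{w,|\sigma|-1}))\in\delta_{\mathcal{Y}}(y_{w})$ with the \emph{same} label $\sigma$, and $\bigwedge_{\Sigma}$ guarantees each successor pair lies in some $u_{l}^{\sol}$. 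Collecting these choices produces $\rho_{\mathcal{Y}}$ over $\tau$.

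Second, to prove $\rho_{\mathcal{Y}}$ is accepting, I would invoke completeness of progress measures (Theorem~\ref{thm:correctnessOfProgMeasEqSys}(\ref{item:completenessProgressMeas}), or a transfinite variant when the $\omega$-continuity hypothesis fails) to obtain a progress measure $p$ for~(\ref{eq:1601061008Rel}) with $p_{i}(\overline{\alpha_{1}})=u_{i}^{\sol}$; note that $u_{2}$ is the unique $\mu$-variable, so prioritized ordinals are one-dimensional. The construction above is then refined so that at each node $w$ we additionally record an ordinal $\alpha_{w}$ with $(x_{w},y_{w})\in p_{i(w)}(\alpha_{w})$, by extracting the $\mathcal{Y}$-transition from the step/invariant conditions of $p$ rather than from the bare fixed-point. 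At a $u_{2}$-node, minimality forces $\alpha_{w}=\alpha'+1$ (clauses~\ref{item:progressMeasDefMuVarBaseCase} and~\ref{item:progressMeasDefMuVarLimitCase} rule out $0$ and limits), and~(\ref{eq:progressMeasDefMuVarStepCase}) places successor pairs in $\bigcup_{k}p_{k}(\alpha')$; at a $u_{1}$-node~(\ref{eq:progressMeasDefNuVar}) with $a=1$ preserves the ordinal; at $u_{3}$- or $u_{4}$-nodes ($a=2$) we extract a witness $\beta_{1}$ for~(\ref{eq:progressMeasDefNuVar}), which may reset the ordinal on successors falling in $u_{1}$ or $u_{2}$. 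Now suppose for contradiction that some infinite branch $\pi$ of $\rho_{\mathcal{Y}}$ visits $\Acc_{\mathcal{Y}}$ only finitely often. Past some depth the pair $(x_{\pi_{\le n}},y_{\pi_{\le n}})$ is confined to $u_{1}^{\sol}\cup u_{2}^{\sol}$, so no $\beta_{1}$-resets occur; the ordinal sequence is non-increasing and strictly decreases at every $u_{2}$-visit. Since $\rho_{\mathcal{X}}$ is accepting and $\pi$ is an infinite branch in $\rho_{\mathcal{X}}$ as well, $u_{2}^{\sol}$ is visited infinitely often, yielding an infinite strictly-decreasing ordinal sequence---impossible.

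The main obstacle is the simultaneous extraction of $\mathcal{Y}$-transitions \emph{and} coherent ordinal approximants, in particular choosing the existentially-quantified $\beta_{1}$ for $u_{3},u_{4}$ together with the corresponding $\mathcal{Y}$-transition in one application of dependent choice rather than two independent ones. A secondary technical point is which form of Theorem~\ref{thm:correctnessOfProgMeasEqSys} is used: because each $L_{i}=\pow(X_{i}\times Y_{i})$ is a complete lattice, Proposition~\ref{prop:suffCondSol}(\ref{item:prop:suffCondSol1}) gives the solution, but $\Box_{\mathcal{X}}$ need not preserve $\omega$-suprema when $X$ is infinite, so transfinite ordinals in $p$ must be allowed; the ordinal-descent argument is insensitive to the magnitude of the ordinals involved.
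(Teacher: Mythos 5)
Your proof is correct, but it takes a genuinely different route from the paper's. The paper obtains Theorem~\ref{thm:soundnessOfSimulationEqSys} by translating the relation $R$ into a Kleisli arrow $Y\kto X$, showing (Proposition~\ref{prop:equationalSysEquiv}, via the reversibility operator $\Box_f$ of Lemma~\ref{lem:propertyBoxf}) that Definition~\ref{def:simForNondetBuechiRel} is equivalent to the existence of a categorical fair simulation \emph{without dividing}, upgrading this to one \emph{with dividing} by idempotency of $\pow$ (Proposition~\ref{prop:soundnessFwdFairBuechiProb}), and then invoking the general soundness theorem (Theorem~\ref{thm:soundnessFwdFairBuechi}), whose proof manufactures progress measures for the equational system characterizing $\trB(\mathcal{Y})$ out of those for $\trB(\X)$ composed with the simulation components, all resting on the coalgebraic language characterization of Theorem~\ref{thm:sanityCheckResult}. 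You instead argue entirely at the concrete level: build $\rho_{\mathcal{Y}}$ positionwise from an accepting run $\rho_{\X}$, and verify the B\"uchi condition by ordinal descent along branches, using a progress measure for the equational system~(\ref{eq:1601061008Rel}) itself. This is the classical Jurdzi\'{n}ski-style soundness argument; it is more elementary and bypasses Theorem~\ref{thm:sanityCheckResult} completely, but it is specific to the nondeterministic case---no branchwise descent is available for PBWAs, where acceptance is measure-theoretic, which is exactly why the paper routes both soundness theorems through the same categorical statement. The two points you flag are indeed the ones that need care, and you resolve both correctly: the matching $\mathcal{Y}$-transition must be read off from the progress-measure inequalities rather than from the bare fixed point (so that successors carry controlled ordinals), and since $\Box_{\mathcal{X},i}$ is not $\omega$-continuous for infinite $X$ one must use the complete-lattice, transfinite form of completeness from~\cite{HasuoSC16} (cf.\ Remark~\ref{rem:buechiLangCppo}) rather than Theorem~\ref{thm:correctnessOfProgMeasEqSys} as literally stated. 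The descent argument itself is sound: the four approximants have pairwise disjoint domains, so past the last visit of a branch to $\Acc_{\mathcal{Y}}$ every pair lies in the $u_1$- or $u_2$-component, no $\beta_1$-resets occur, and each of the infinitely many visits of $\rho_{\X}$ to $\Acc_{\X}$ forces a strict decrease of the recorded ordinal.
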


Our proof of Theorem~\ref{thm:soundnessOfSimulationEqSys} relies on
a categorical theory developed in later sections,
and will be given in Section~\ref{subsec:nondetSetting}.

\begin{exa}\label{example:fwdFairSimNondetTree}
Let $\mathcal{X}$ and $\mathcal{Y}$ be the NBTAs illustrated below,
where  a transition $z\xrightarrow{\sigma}(z_1,z_2)$ is represented by 
$z\!\xrightarrow{\sigma}\!{\raisebox{-.8pt}{$\scriptstyle\Box$}} \!
\rightrightarrows
\!
{\scriptsize\begin{matrix}z_1 \\ z_2\end{matrix}}$.
\[
  \xy <1.5mm,0cm>:
(10,22)*{\mathcal{X}}="",
(15,0)*+[Fo]{x_1} = "x1",
(10,10)*{\Box} = "xb1",
(22,0)*{\Box} = "xc1",
(15,20)*+[Foo]{x_2} = "x2",
(20,10)*{\Box} = "xb2",
(22,20)*{\Box} = "xc2",
(82,21)*{\mathcal{Y}}="",
(65,-5)*+[Foo]{y_0} = "y0", 
(58,0)*{\Box} = "yb0", 
(56,-4)*{\Box} = "yc0", 
(52,4)*+[Fo]{y_1} = "y1", 
(56,11)*{\Box} = "yb1", 
(52,12)*{\Box} = "yc1", 
(56,18)*+[Fo]{y_2} = "y2", 
(74,18)*+[Fo]{\phantom{y_2}} = "y4",
(76,18)*{\colorbox{white}{\smash{\makebox[0em]{$y$}}${\ }_{n-2}$}} = "", 
(74,10)*{\Box} = "yb4", 
(78,12)*{\Box} = "yc4", 
(78,4)*+[Fo]{\phantom{y_2}} = "y5",
(80,4)*{\colorbox{white}{\smash{\makebox[0em]{$y$}}${\ }_{n-1}$}} = "", 
(71,-1)*{\Box} = "yb5", 
(74,-4)*{\Box} = "yc5", 
(61,22)*{\bullet},
(65,23)*{\bullet},
(69,22)*{\bullet},
\ar (15,-10);"x1"*+++[o]{}
\ar ^{b} "x1"*++{};"xb1"*+[o]{}
\ar @<.5mm> "xb1";"x2"*+++[o]{}
\ar @<-.5mm> "xb1";"x2"*+++[o]{}
\ar _(.7){a} "x2"*++{};"xb2"*+[o]{}
\ar @<.5mm> "xb2";"x1"*+++[o]{}
\ar @<-.5mm> "xb2";"x1"*+++[o]{}
\ar ^(.7){a} @/^2mm/"x1";"xc1"*+[o]{}
\ar @/^2mm/ "xc1";"x1"*++[o]{}
\ar @/^4mm/ "xc1";"x1"*++[o]{}
\ar _(.7){b}@/_2mm/ "x2";"xc2"*+[o]{}
\ar @/_2mm/ "xc2";"x2"*++[o]{}
\ar @/_4mm/ "xc2";"x2"*++[o]{}
\ar (65,-10);"y0"*+++[o]{}
\ar _{a} @/_3mm/"y0"*++{};"yb0"*+[o]{}
\ar  "yb0";"y0"*+++[o]{}
\ar  "yb0";"y1"*+++[o]{}
\ar ^{b} "y0";"yc0"*+[o]{}
\ar @<.5mm> "yc0";"y1"*+++[o]{}
\ar @<-.5mm> "yc0";"y1"*+++[o]{}
\ar _{a} @/_3mm/"y1";"yb1"*+[o]{}
\ar  "yb1";"y1"*+++[o]{}
\ar  "yb1";"y2"*+++[o]{}
\ar ^{b} "y1"*++{};"yc1"*+[o]{}
\ar @<.5mm> "yc1";"y2"*+++[o]{}
\ar @<-.5mm> "yc1";"y2"*+++[o]{}
\ar _{a} @/_3mm/"y4";"yb4"*+[o]{}
\ar  "yb4";"y4"*+++[o]{}
\ar  "yb4";"y5"*+++[o]{}
\ar ^{b} "y4";"yc4"*+[o]{}
\ar @<.5mm> "yc4";"y5"*+++[o]{}
\ar @<-.5mm> "yc4";"y5"*+++[o]{}
\ar _{a} @/_3mm/"y5";"yb5"*+[o]{}
\ar "yb5";"y5"*+++[o]{}
\ar "yb5";"y0"*+++[o]{}
\ar ^{b} "y5";"yc5"*+[o]{}
\ar @<.5mm> "yc5";"y0"*+++[o]{}
\ar @<-.5mm> "yc5";"y0"*+++[o]{}
\endxy
\]
Here the ranked alphabet is given by $\Sigma=\{a,b\}$ where $|a|=|b|=2$.
Let $X$ and $Y$ be the state spaces of $\mathcal{X}$ and $\mathcal{Y}$ respectively,
and
define $X_1,X_2$ and $Y_1,Y_2$ as in Definition~\ref{def:simForNondetBuechiRel}.

We can see that $R_1=X_1\times Y_1$, $R_2=X_2\times Y_1$, $R_3=X_1\times Y_2$ and $R_4=X_2\times Y_2$
are the solution of the equational system (\ref{eq:1601061008Rel}) in Definition~\ref{def:simForNondetBuechiRel}
induced by $\mathcal{X}$ and $\mathcal{Y}$ here. 
Hence $R=X\times Y$ is a  fair simulation from $\mathcal{X}$ to $\mathcal{Y}$,
and this implies language inclusion. 
\end{exa}



We conclude this section by showing a relationship between the simulation notion via parity games
(Definition~\ref{def:parityFairSimulation})
and our simulation notion (Definition~\ref{def:simForNondetBuechiRel}).
Roughly speaking, 
a parity game is understood as a combinatorial presentation of
an equational system like (\ref{eq:1601061008Rel})
over finite lattices $L_{1},\dotsc,L_{m}$~\cite{HasuoSC16}.
%
%
%
If NBTAs $\mathcal{X}$ and $\mathcal{Y}$ have finite state-spaces,
translating (\ref{eq:1601061008Rel}) leads to the parity game in Definition~\ref{def:parityFairSimulation}.
Formally, we have the following proposition.
The proof is similar to the one for~\cite[Corollary A.5]{hasuoSC15arXiv}.

\begin{prop}\label{prop:parityGameInducedNew}
Let $\mathcal{X}=(X,\Sigma,\delta_\mathcal{X},\initSet_\mathcal{X},\Acc_\mathcal{X})$ and 
$\mathcal{Y}=(Y,\Sigma,\delta_\mathcal{Y},\initSet_\mathcal{Y},\Acc_\mathcal{Y})$ be 
 NBTAs such that $X$ and $Y$ are finite.
Then a fair simulation (Def.~\ref{def:simForNondetBuechiRel}) from $\mathcal{X}$ to $\mathcal{Y}$ exists \linebreak
if and only if
the player Even is winning in the parity game $G_{\mathcal{X},\mathcal{Y}}$ in Def.~\ref{def:parityFairSimulation} from $*$;
if that is the case we have
 $\lang(\mathcal{X})\subseteq\lang(\mathcal{Y})$.
\qed
\end{prop}

\section{Fair Simulation for Finite-State Probabilistic B\"uchi Word Automata}
\label{sec:fwdFairSimProb}

This is the second section in which we describe our technical contributions in 
concrete set-theoretic terms. They are derived from the theoretical backgrounds that we describe in later sections. In this section we focus on probabilistic systems.

In what follows we adopt the following conventions. 
The $(x,y)$-entry of a matrix $A\in[0,1]^{X\times Y}$ is denoted by
$A_{x,y}$;  the  $x$-th entry of a vector $\iota\in [0,1]^X$ is 
$\iota_x$. 
For
$A,B\in[0,1]^{X\times Y}$, we write $A\leq B$ if $A_{x,y}\leq B_{x,y}$ for all $x$ and $y$.

\begin{defi}[PBWA]\label{def:generativeProbBuechiTreeAutom}
A \emph{(generative) probabilistic B\"uchi word automaton} 
(PBWA) is 
  a quintuple
  $\mathcal{X}=(X,\myalphabet,M,\initVec,\Acc)$ consisting of 
  a countable state space $X$,
  a countable alphabet $\myalphabet$,
  transition matrices $M(a)\in[0,1]^{X\times X}$ for each $a\in\myalphabet$,
 an initial distribution $\initVec\in[0,1]^X$, 
  and a set $\Acc\subseteq X$ of accepting states.
  We require that the matrices $M(a)$ and the vector $\initVec$ are substochastic: 
 $\sum_{a\in\myalphabet}\sum_{x'\in X}(M(a))_{x,x'}\leq 1$ for
 each
 $x\in X$,
  and $\sum_{x\in X}\initVec_x\leq 1$.
\end{defi}

 Note that the initial vector and transition matrices are \emph{sub-}stochastic:
 $\sum_{a\in\myalphabet}\sum_{x'\in X}(M(a))_{x,x'}$ and $\sum_{x\in X}\initVec_x$ are allowed to be 
strictly smaller than $1$.
The missing probabilities 
are
for \emph{divergence}.
We require
$\sum_{a}\sum_{x'}(M(a))_{x,x'}\le 1$: this means our automaton is
 \emph{generative} and it chooses which character $a\in \myalphabet$ to
 \emph{output}. This is in contrast to a \emph{reactive} automaton (that
 \emph{reads} characters), in which case we would require
 $\sum_{x'}(M(a))_{x,x'}\le 1$ for each $a$.

\begin{exa}\label{example:PBWA}
We define a PBWA $\mathcal{X}=(X,\myalphabet,M,\initVec,\Acc)$ as follows.
%
\begin{itemize}
\item $X=\{x_1,x_2,x_3,x_4,x_5\}$
\item $\myalphabet=\{a,b\}$
\item 
$M(a)=\scalebox{0.8}{\bordermatrix{ & x_1 & x_2 & x_3 & x_4 & x_5 \cr x_1 & \nicefrac{1}{2}&  \nicefrac{1}{3}& 0& 0& 0\cr x_2 & 0& \nicefrac{1}{2} & \nicefrac{1}{3}& 0& 0\cr x_3 & 0& \nicefrac{1}{2}& \nicefrac{1}{2}& 0& 0\cr x_4 & 0& 0& 0& \nicefrac{1}{2}& \nicefrac{1}{2}\cr x_5 &0 & 0& 0& \nicefrac{1}{2}& \nicefrac{1}{2}}}$ \;\;and\;\;
$M(b)=\scalebox{0.8}{\bordermatrix{ & x_1 & x_2 & x_3 & x_4 & x_5 \cr x_1 & 0& 0& 0& \nicefrac{1}{6}& 0\cr x_2 &0 &0 &0 &0 & 0\cr x_3 & 0& 0&0 &0 & 0\cr x_4 &0 & 0& 0& 0& 0\cr x_5 & 0& 0& 0& 0& 0}}$  
\item $\initVec=\scalebox{0.8}{\bordermatrix{ & x_1 & x_2 & x_3 & x_4 & x_5 \cr & 1 & 0& 0& 0& 0}}$
\item $\Acc=\{x_3,x_5\}$
\end{itemize}
Then $\mathcal{X}$ is illustrated as below.
\[
\vspace{-3mm}
\raisebox{-7mm}{
\xy <1.5mm,0mm>:
(-6,29)*{\mathcal{X}}="",
(0,18)*+[Fo]{x_1} = "x1",
(18,24)*+[Fo]{x_2} = "x2",
(32,24)*+[Foo]{x_3} = "x3",
(18,12)*+[Fo]{x_4} = "x4",
(32,12)*+[Foo]{x_5} = "x5",
%
\ar _(.2){1} (-6,18);"x1"*+++[o]{}
\ar @(ur,ul) _(.8){a,\frac{1}{2}} "x1"*++{};"x1"*++[o]{}
\ar @(ur,ul) _(.8){a,\frac{1}{2}} "x2"*++{};"x2"*++[o]{}
\ar @(ur,ul) _(.2){a,\frac{1}{2}} "x3"*++{};"x3"*++[o]{}
\ar @(dr,dl) ^(.8){a,\frac{1}{2}} "x4"*++{};"x4"*++[o]{}
\ar @(dr,dl) ^(.2){a,\frac{1}{2}} "x5"*++{};"x5"*++[o]{}
\ar ^(.5){a,\frac{1}{3}} "x1"*++{};"x2"*+++[o]{}
\ar @<.7mm>^{a,\frac{1}{3}} "x2"*++{};"x3"*+++[o]{}
\ar @<.7mm>^{a,\frac{1}{2}} "x3"*++{};"x2"*+++[o]{}
\ar _(.5){b,\frac{1}{6}} "x1"*++{};"x4"*+++[o]{}
\ar @<.7mm>^{a,\frac{1}{2}} "x4"*++{};"x5"*+++[o]{}
\ar @<.7mm>^{a,\frac{1}{2}} "x5"*++{};"x4"*+++[o]{}
%
\endxy
}
\]
\end{exa}

In the next section we shall  give a definition of  accepted 
languages of PBWAs. This is rather standard (see~\cite{carayolHS14} for
a reactive variant).

\subsection{Accepted Languages of Probabilistic B\"{u}chi Word Automata}
\label{subsec:treeRunAccRunProb}
%
The language $\lang(\mathcal{X})$---a subprobability measure that tells which words are generated by what
probabilities---is essentially the \emph{push-forward
measure}~\cite{doob94measuretheory} obtained from
the one over the set $\Run^\giry_{\X}$ of \emph{runs} of a PBWA.

\begin{defi}[run]
For a PBWA $\mathcal{X}=(X,\myalphabet,M,\iota,\Acc)$,
a \emph{run} over $\mathcal{X}$ is an infinite word $\rho\in (\myalphabet\times X)^\omega$.
The set of all runs over $\mathcal{X}$ is denoted by $\Run^\giry_{\X}$.
A \emph{partial run} over $\mathcal{X}$ is a finite word $\xi\in (\myalphabet\times X)^*\times X$.
%
A run $\rho=(a_0,x_0)(a_1,x_1)\ldots\in \Run^\giry_{\X}$ is \emph{accepting} if 
$x_i\in \Acc$ for infinitely many $i$'s.
The set of all accepting runs over $\mathcal{X}$ is denoted by $\AccRun^{\giry}_{\X}$.
\end{defi}

We define the language of $\mathcal{X}$ as a subprobability measure over the set $\myalphabet^\omega$ of infinite words.
The set $\myalphabet^{\omega}$ of all infinite words over $\myalphabet$
carries a canonical ``cylindrical'' measurable structure generated by 
$\{w\myalphabet^\omega\mid w\in\myalphabet^*\}$
(see~\cite{baierK08principlesofmodelchecking} for example).
The set  $(\myalphabet\times X)^{\omega}$ of 
 runs comes with a  cylindrical measurable structure, too.

\begin{defi}
Let $\mathcal{X}=(X,\myalphabet,M,\iota,\Acc)$ be a PBWA.
For $w\in\myalphabet^*$, the \emph{cylinder set} generated 
by $w$ is a set 
\[
\Cyl(w):=\{ww'\in\myalphabet^\omega\mid w\in\myalphabet^*, w'\in\myalphabet^\omega\}\,. 
\]
We write $\sigalg_{\myalphabet^\omega}$ for
the smallest 
$\sigma$-algebra 
over $\myalphabet^\omega$ 
that is generated by 
the cylinder sets
$\{\Cyl(w)\mid w\in\myalphabet^*\}$. 

Similarly, for a partial run $\xi=(a_0,x_0)\ldots(a_{i-1},x_{i-1})x_i\in(\myalphabet\times X)^*\times X$,
the \emph{cylinder set} generated by $\xi$ is a set 
\[
\Cyl_\X(\xi):=\{(a_0,x_0)\ldots (a_i,x_i)(a_{i+1},x_{i+1})\ldots \in\Run^\giry_{\X}
\mid  a_{i},a_{i+1},\ldots\in \myalphabet, x_{i+1},x_{i+2},\ldots \in X\}\,.
\]
We write $\sigalg_{\X}$ for the $\sigma$-algebra over $\Run^\giry_{\X}$ 
generated by the cylinder sets 
$\{\Cyl_\X(\xi)\mid \xi\in(\myalphabet\times X)^*\times X\}$\,. 

We define $\DelSt:\Run^\giry_{\X}\to\myalphabet^\omega$ by 
$\DelSt((a_0,x_0)(a_1,x_1)\ldots)\;:=\;a_0a_1\ldots$\,.
\end{defi}

 Now it can be shown
that the set 
$\AccRun^{\giry}_{\X}$
of \emph{accepting} runs---visiting $\accstate$
infinitely often---is a measurable subset. 
This result (as stated in the following lemma)
is
much like~\cite[Lemma~36]{carayolHS14} and hardly novel.

\begin{lem}\label{lem:acceptingRunsAreMeasurableWord}
 The set $\AccRun^\giry_{\X}$ of accepting runs
  is an
 $\mathfrak{F}_{\X}$-measurable subset
  of $\Run^\giry_{\X}$.
\end{lem}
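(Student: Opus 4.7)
The plan is to express the accepting-run condition ``visit $\Acc$ infinitely often'' as a countable combination of basic cylinder events, exploiting the fact that both $X$ and $\myalphabet$ are countable by Definition~\ref{def:generativeProbBuechiTreeAutom}.

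First, for each $k \in \nat$, I would introduce the event
\begin{equation*}
A_k \;:=\; \{\rho \in \Run^\giry_{\X} \mid \text{the $k$-th state of $\rho$ lies in $\Acc$}\}.
\end{equation*}
Writing $A_k$ out explicitly by listing all compatible prefixes,
\begin{equation*}
A_k \;=\; \bigcup \bigl\{\,\Cyl_\X\bigl((a_0,x_0)\dotsc(a_{k-1},x_{k-1})\,x_k\bigr)\;\big|\; a_0,\dotsc,a_{k-1}\in\myalphabet,\; x_0,\dotsc,x_{k-1}\in X,\; x_k\in\Acc\,\bigr\}.
\end{equation*}
Since $\myalphabet$ and $X$ are countable, this is a countable union of cylinder sets. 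Cylinder sets belong to $\mathfrak{F}_\X$ by definition, hence $A_k \in \mathfrak{F}_\X$ as well.

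Next, I would apply the standard ``infinitely often'' decomposition
\begin{equation*}
\AccRun^\giry_\X \;=\; \bigcap_{n\in\nat}\,\bigcup_{k\geq n} A_k,
\end{equation*}
which holds directly from the definition of accepting run ($x_i \in \Acc$ for infinitely many $i$ is equivalent to: for every $n$ there exists some $k\geq n$ with $x_k \in \Acc$). Since $\mathfrak{F}_\X$ is closed under countable unions and countable intersections, it follows immediately that $\AccRun^\giry_\X \in \mathfrak{F}_\X$.

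The argument is routine and I anticipate no serious obstacle; its sole substantive ingredient is the countability of $X$ and $\myalphabet$, without which the representation of $A_k$ as a countable union of cylinders would fail. This parallels the argument in~\cite[Lemma~36]{carayolHS14} mentioned in the text, adapted to our generative word-automaton setting.
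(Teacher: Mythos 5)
Your proof is correct and is essentially the paper's own argument: the paper defines the complementary events $\NAccRuninf_k = \Run^\giry_\X\setminus A_k$ (``$k$-th state not in $\Acc$''), shows each is a countable union of cylinders via countability of $X$ and $\myalphabet$, and writes $\AccRun^\giry_\X$ as $\Run^\giry_\X\setminus\bigcup_m\bigcap_n\NAccRuninf_{m+n}$, which is just the De Morgan dual of your $\bigcap_n\bigcup_{k\geq n}A_k$. No substantive difference.
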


\begin{proof}
For each $k\in\mathbb{N}$, we define a set $\NAccRuninf_{k}\subseteq\Run_{\X}$ as follows.
\begin{equation}\label{eq:10142240}
\NAccRuninf_{k}\;:=\;\{(a_0,x_0)(a_1,x_1)\ldots\mid x_{k}\notin\Acc\}\,.
\end{equation}
Then we have:
\[
\NAccRuninf_{k} = 
\bigcup_{a_0\in\myalphabet}\ldots\bigcup_{a_{k-1}\in\myalphabet}
\bigcup_{x_0\in X}\ldots\bigcup_{x_{k-1}\in X}
\bigcup_{x_k\in X\setminus \Acc}
\Cyl\bigl((a_0,x_0)\ldots(a_{k-1},x_{k-1})x_{k}\bigr)\,.
\]
As $X$ and $\myalphabet$ are countable sets, 
by definition of the $\sigma$-algebra $\sigalg_{\X}$, $\NAccRuninf_{k}$ is measurable.

By definition of $\AccRun_\X$, 
it is easy to see that:
\[
\AccRun_\X=
\Run_\X\setminus
\bigcup_{m\in\mathbb{N}}
\bigcap_{n\in\mathbb{N}}
\NAccRuninf_{m+n}\,.
\]
Hence $\AccRun_\X$ is measurable.
\end{proof}

The following notion of \emph{no-divergence} probability plays an
important role.
Recall that
  a PBWA can exhibit divergence.
  
\begin{defi}[$\NDL_{\X}$]\label{def:noDeadendProbWord}
Let $\X=(X,\myalphabet,M,\initVec,\Acc)$ be
a PBWA.
  For each $k\in \nat$, 
  $\NDL_{\X,k}\colon X\to [0,1]$  is 
  defined inductively by:
\begin{equation}\label{eq:10151159}
  \begin{aligned}
  \NDL_{\X,0}(x)
  \;&:=\;
  1\enspace,
  \\
  \NDL_{\X,k+1}(x)
  \;&:=\;{}
  \sum_{a\in\myalphabet}
    \sum_{x'\in X}
  \left(M(a)\right)_{x,x'}\cdot
  \NDL_{\X,k}(x')
  \enspace.
  \end{aligned}
\end{equation}
Note that as $\sum_{a\in\myalphabet}\sum_{x'\in X}(M(a))_{x,x'}\leq 1$ for each $x$,
$\NDL_{\X,k}(x)$ is decreasing with respect to $k$.
We define 
a function $\NDL_{\X}\colon X\to [0,1]$ by
$\NDL_{\X}(x):= \lim_{k\to\infty}  \NDL_{\X,k}(x)$.
\end{defi}
Intuitively, 
$\NDL_{\X}(x)$ is a probability in which
  an execution of $\X$ from the state $x$ does not exhibit divergence.
This probability is used to define  probabilistic accepted languages of PBWAs.

We can now define a
subprobability
measure $\mu^{\Run^{\giry}_{\X}}_{\X}$ 
on $\Run^\giry_{\X}$
induced by the PBWA $\mathcal{X}$ (via the Carath\'eodory theorem).

\begin{defi}[$\mu_{\X}^{\Run^\giry_{\X}}$ over $\Run^\giry_{\X}$]\label{def:muXOnRunX}
Let $\X=(X,\myalphabet,M,\initVec,\Acc)$ be a PBWA.
We shall define a subprobability measure
$\mu_{\X}^{\Run^\giry_{\X}}$ over $(\Run^\giry_{\X},\sigalg_\X)$.
It is given, 
 for each partial run $\xi=(a_0,x_0)\ldots(a_{i-1},x_{i-1})x_i\in(\myalphabet\times X)^*\times X$, by
\begin{equation}\label{eq:01171719}
  \mu_{\X}^{\Run^\giry_{\X}}\bigl(\,\Cyl_{\X}(\xi)\,\bigr)
  \;:=\;
  \initVec_{x_{0}}\cdot P_{\X}(\xi)\enspace.
\end{equation}
Here   $P_{\X}(\xi)$
is
  defined inductively as follows. 
\begin{equation*}
  \begin{aligned}
 P_{\X}\bigl(
   \xi
    \bigr)
   :=
   \begin{cases}
    \NDL_{\X}(x_0)
    & (\text{$i=0$})
    \\
    \bigl(M(a_0)\bigr)_{x_0,x_1}\cdot P_\X\bigl((a_1,x_1)\ldots(a_{i-1},x_{i-1})x_i\bigr)
    & (\text{$i>0$})\,.
    \end{cases}
  \end{aligned}
\end{equation*}
\end{defi}

\begin{prop}\label{prop:wellDef:def:muXOnRunX}
Definition~\ref{def:muXOnRunX} is well-defined.
That is to say, there exists a unique subprobability measure 
$\mu_{\X}^{\Run^\giry_{\X}}$ over $(\Run_\X,\sigalg_\X)$ that satisfies the equation
(\ref{eq:01171719}).
\end{prop}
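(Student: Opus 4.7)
The plan is to apply the Carath\'eodory extension theorem to the set function defined by~(\ref{eq:01171719}) on the $\pi$-system of cylinder sets. The key nontrivial point is the consistency of the assignment under refinement, which hinges on a fixed-point property of $\NDL_\X$.

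First I would verify the consistency under refinement. For each partial run $\xi=(a_0,x_0)\ldots(a_{i-1},x_{i-1})x_i$ the cylinder $\Cyl_\X(\xi)$ is the disjoint union of $\Cyl_\X(\xi\cdot(a_i,x_{i+1}))$ over $a_i\in\myalphabet$ and $x_{i+1}\in X$, so after factoring out the common prefix weights $\initVec_{x_0}\prod_{j<i}(M(a_j))_{x_j,x_{j+1}}$, matching the measures of the two sides reduces to the single identity
\begin{align*}
\NDL_\X(x) \;=\; \sum_{a\in\myalphabet}\sum_{x'\in X}(M(a))_{x,x'}\cdot\NDL_\X(x').
\end{align*}
This follows from Definition~\ref{def:noDeadendProbWord} by passing to the limit in~(\ref{eq:10151159}); the interchange of limit and (at most countable) sum is justified by monotone convergence, since $\NDL_{\X,k}(x')$ is pointwise decreasing in $k$ and dominated by $1$.

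Next I would extend the set function finitely-additively to the algebra $\mathcal{A}$ of finite disjoint unions of cylinders and then establish countable additivity on $\mathcal{A}$. The cleanest route, to my mind, is a sink-state reduction: augment $X$ with a fresh absorbing state $\star$ and $\myalphabet$ with a dummy letter $\tau$, extending $M$ and $\initVec$ so that all missing probability mass is routed into $\star$ via $\tau$, with $\star$ looping on itself under $\tau$ with probability~$1$. The augmented system is a genuinely stochastic Markov chain on the countable state space $X\cup\{\star\}$, so the Ionescu--Tulcea theorem produces a unique probability measure $\widetilde{\mu}$ on its run space. The subset of runs that never visit $\star$ is measurable (a countable intersection of complements of $\star$-cylinders) and is in canonical bijection with $\Run^\giry_\X$; the pullback of $\widetilde{\mu}$ along this bijection is a $\sigma$-additive subprobability measure on $(\Run^\giry_\X,\sigalg_\X)$. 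Its values on cylinders agree with~(\ref{eq:01171719}) precisely because $\NDL_\X(x)$ equals the $\widetilde{\mu}$-probability of never reaching $\star$ starting from $x$, which is immediate from the same recursion as in~(\ref{eq:10151159}).

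Uniqueness follows from standard measure-theoretic arguments: the cylinders form a $\pi$-system generating $\sigalg_\X$, and $\Run^\giry_\X$ is itself the countable disjoint union $\biguplus_{x_0\in X}\Cyl_\X(x_0)$ of finite-measure cylinders, so any two subprobability measures agreeing on cylinders must coincide globally. I expect the consistency check to be routine once the $\NDL_\X$ identity is noted, so the main obstacle is the countable-additivity step; the sink-state reduction sidesteps the technical difficulty that $(\myalphabet\times X)^\omega$ fails to be locally compact when $X$ or $\myalphabet$ is infinite, which would rule out a direct compactness-based continuity-from-above argument.
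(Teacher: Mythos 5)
Your proposal is correct, and its first half---reducing the refinement consistency of~(\ref{eq:01171719}) to the fixed-point identity $\NDL_\X(x)=\sum_{a\in\myalphabet}\sum_{x'\in X}(M(a))_{x,x'}\cdot\NDL_\X(x')$, obtained by passing to the limit in~(\ref{eq:10151159}) and exchanging the limit with the countable sum---is exactly the computation the paper carries out (there it is phrased as an induction on the length of the partial run, with your identity appearing as the base case $i=0$ and the step case being a routine application of the induction hypothesis). Where you genuinely diverge is in how the consistent set function on cylinders is promoted to a measure: the paper stops after the one-step additivity check and declares the conclusion ``immediate from Carath\'eodory's extension theorem,'' which leaves implicit the verification that the set function is countably additive on the algebra generated by the cylinders---the delicate point when $X$ or $\myalphabet$ is countably infinite, since (as you note) no compactness or continuity-from-above argument is available off the shelf. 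Your sink-state reduction---routing the missing mass into an absorbing state, invoking Ionescu--Tulcea for the resulting genuinely stochastic chain, and pulling back along the measurable set of runs avoiding $\star$---supplies precisely this $\sigma$-additivity, and additionally gives a clean probabilistic reading of $\NDL_\X(x)$ as the probability of never reaching the sink. So your argument is, if anything, more complete than the paper's; the uniqueness step via the generating $\pi$-system of cylinders is the same standard argument in both.
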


\begin{proof}
We first prove that for each $i\in\mathbb{N}$,
$a_0,\ldots,a_{i-1}\in\myalphabet$ and $x_0,\ldots,x_i\in X$, we have:
\[
P_\X\bigl((a_0,x_0)\ldots(a_{i-1},x_{i-1})x_i\bigr)=
\sum_{a_i\in\myalphabet}\sum_{x_{i+1}\in X}
P_\X\bigl((a_0,x_0)\ldots(a_{i-1},x_{i-1})(a_i,x_i)x_{i+1}\bigr)\,.
\]

We prove it by the induction on $i$.
\begin{itemize}
\item
If $i=0$, then $\xi=x_0$ and hence we have:
\allowdisplaybreaks[4]
\begin{align*}
P_\X(\xi)
&=\NDL_\X(x_0) \\
&=\lim_{k\to\infty}\NDL_{\X,k}(x) \\
&=\lim_{k\to\infty}\sum_{a\in\myalphabet}\sum_{x_1\in X}\left(M(a)\right)_{x_0,x_1}\cdot\NDL_{\X,k-1}(x_1) \\
&=\lim_{k\to\infty}\sum_{a\in\myalphabet}\sum_{x_1\in X}\left(M(a)\right)_{x_0,x_1}\cdot\NDL_{\X,k}(x_1) \\
&= \sum_{a_0\in\myalphabet}\sum_{x_{1}\in X}\left(M(a_0)\right)_{x_0,x_{1}}\cdot\bigl(\lim_{k\to\infty}\NDL_{\X,k}(x_1)\bigr) \\
&= \sum_{a_0\in\myalphabet}\sum_{x_{1}\in X}\left(M(a_0)\right)_{x_0,x_{1}}\cdot P_\X(x_1)\\
&= \sum_{a_0\in\myalphabet}\sum_{x_{1}\in X} P_\X\left((a_0,x_0)x_{1}\right)
\end{align*}

\item
If $i>0$, then 
we have:
\begin{align*}
P_\X(\xi)
&= \bigl(M(a_0)\bigr)_{x_0,x_1}\cdot P_\X\bigl((a_1,x_1)\ldots(a_{i-1},x_{i-1})x_i\bigr)\\
&= \bigl(M(a_0)\bigr)_{x_0,x_1}\cdot \sum_{a_i\in\myalphabet}\sum_{x_{i+1}\in X} P_\X\bigl((a_1,x_1)\ldots(a_{i-1},x_{i-1})(a_i,x_i)x_{i+1}\bigr) \\
& \qquad\qquad\qquad\qquad\qquad\qquad\qquad\qquad\qquad\qquad\qquad(\text{by the induction hypothesis})\\
&=  \sum_{a_i\in\myalphabet}\sum_{x_{i+1}\in X} \bigl(M(a_0)\bigr)_{x_0,x_1}\cdot P_\X\bigl((a_1,x_1)\ldots(a_{i-1},x_{i-1})(a_i,x_i)x_{i+1}\bigr)\\
&=  \sum_{a_i\in\myalphabet}\sum_{x_{i+1}\in X} P_\X\bigl((a_0,x_0)(a_1,x_1)\ldots(a_{i-1},x_{i-1})(a_i,x_i)x_{i+1}\bigr)\,.
\end{align*}
\end{itemize}

Hence we have: 
\[
\mu_\X^{\Run^\giry_\X}((a_0,x_0)\ldots(a_{i-1},x_{i-1})x_i)=
\sum_{a_i\in\myalphabet}\sum_{x_{i+1}\in X}
\mu_\X^{\Run^\giry_\X}((a_0,x_0)\ldots(a_{i-1},x_{i-1})(a_i,x_i)x_{i+1})\,.
\]
Therefore Proposition~\ref{prop:wellDef:def:muXOnRunX} 
is immediate from Carath\'{e}odory's extension theorem (see~\cite{ashD2000probability} for example).
\end{proof}


Now we can define the language of a PBWA $\mathcal{X}$.

\begin{defi}[language of PBWA]
\label{def:muXOnTreeSigma}
$\X=(X,\myalphabet,M,\initVec,\Acc)$ be a PBWA.
A subprobability measure $\lang(\mathcal{X})$ over
 $(\myalphabet^\omega,\sigalg_{\myalphabet^\omega})$ is defined as follows:
for each $w\in\myalphabet^*$,
\begin{equation}\label{eq:01171734}
 \lang(\X)(\Cyl(w))
 \;:=\;
 \mu_{\X}^{\Run^\giry_{\X}}
\left(\,
  \DelSt^{-1}(\Cyl(w))
   \cap
  \AccRun^\giry_{\X}
\,\right)\enspace.
\end{equation}
Note here that 
 $\DelSt^{-1}(\Cyl(w)) 
 \;=\;
  \textstyle\bigcup_{\xi\in\DelSt^{-1}(\{w\})} \Cyl_{\X}(\xi)$.
As $w$ is a finite word and the state space $X$ is countable,
the union in the above equation is a countable one.
Hence the set $\DelSt^{-1}(\Cyl(w))$ is measurable.
\end{defi}

The following proposition can be proved
 in a similar manner to 
Proposition~\ref{prop:wellDef:def:muXOnRunX}.

\begin{prop}\label{prop:wellDef:def:muXOnTreeSigma}
Definition~\ref{def:muXOnTreeSigma} is well-defined.
That is, there exists a unique subprobability measure 
$\lang(\mathcal{X})$ over $(\Sigma^\omega,\sigalg_{\myalphabet^\omega})$ that satisfies the equation
(\ref{eq:01171734}).
\qed
\end{prop}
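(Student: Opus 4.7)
The plan is to mirror the proof of Proposition~\ref{prop:wellDef:def:muXOnRunX}, invoking Carath\'{e}odory's extension theorem to obtain a unique measure on $\sigalg_{\myalphabet^{\omega}}$ agreeing with the prescription~(\ref{eq:01171734}) on cylinder sets $\Cyl(w)$. Since the collection $\{\Cyl(w)\mid w\in\myalphabet^{*}\}$ is a $\pi$-system that generates $\sigalg_{\myalphabet^{\omega}}$, the only substantive condition to verify is the consistency identity
\[
 \lang(\X)(\Cyl(w)) \;=\; \sum_{a\in\myalphabet}\lang(\X)(\Cyl(wa))
\]
for each $w\in\myalphabet^{*}$, which reflects the canonical disjoint refinement $\Cyl(w)=\bigsqcup_{a\in\myalphabet}\Cyl(wa)$.

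To verify this identity, first note that $\DelSt$ is a function, so $\DelSt^{-1}$ commutes with all Boolean set operations, and in particular preserves the disjoint decomposition above:
\[
 \DelSt^{-1}(\Cyl(w)) \;=\; \bigsqcup_{a\in\myalphabet}\DelSt^{-1}(\Cyl(wa)).
\]
Each set on the right is $\sigalg_{\X}$-measurable, because it is a countable union of cylinders $\Cyl_{\X}(\xi)$ (as remarked in Definition~\ref{def:muXOnTreeSigma}, using countability of $X$ and $\myalphabet$), and $\AccRun^{\giry}_{\X}$ is measurable by Lemma~\ref{lem:acceptingRunsAreMeasurableWord}. Intersecting the above disjoint decomposition with $\AccRun^{\giry}_{\X}$ therefore preserves disjointness and measurability, so countable additivity of the measure $\mu_{\X}^{\Run^{\giry}_{\X}}$ (established in Proposition~\ref{prop:wellDef:def:muXOnRunX}) yields exactly the desired identity.

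With consistency on the generating $\pi$-system established, Carath\'{e}odory's extension theorem (as applied in Proposition~\ref{prop:wellDef:def:muXOnRunX}) produces a unique (sub)measure on $\sigalg_{\myalphabet^{\omega}}$ extending the assignment; the fact that it is a subprobability measure follows from the bound $\lang(\X)(\myalphabet^{\omega})\le\mu_{\X}^{\Run^{\giry}_{\X}}(\Run^{\giry}_{\X})\le 1$. The only real bookkeeping obstacle is the measurability check above, which is routine once one remembers that $\DelSt^{-1}(\{w\})$ is a countable subset of $(\myalphabet\times X)^{*}\times X$ and that the cylinder sets $\Cyl_{\X}(\xi)$ generate $\sigalg_{\X}$; after that, the argument is a direct analogue of the one already carried out for Proposition~\ref{prop:wellDef:def:muXOnRunX}.
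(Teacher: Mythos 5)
Your proposal is correct and follows essentially the same route the paper intends: the paper gives no explicit proof, stating only that the result "can be proved in a similar manner to Proposition~\ref{prop:wellDef:def:muXOnRunX}", and your argument is exactly that adaptation — verify the one-step consistency $\lang(\X)(\Cyl(w))=\sum_{a}\lang(\X)(\Cyl(wa))$ via the disjoint decomposition of $\DelSt^{-1}(\Cyl(w))\cap\AccRun^{\giry}_{\X}$ and countable additivity of $\mu_{\X}^{\Run^{\giry}_{\X}}$, then invoke Carath\'eodory plus the $\pi$-system uniqueness argument. The measurability bookkeeping and the subprobability bound are handled correctly, so nothing is missing relative to the level of detail of the paper's own proof of Proposition~\ref{prop:wellDef:def:muXOnRunX}.
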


\begin{exa}\label{example:PBWALang}
Let $\mathcal{X}$ be the PBWA in Example~\ref{example:PBWA}.
For each cylinder set $w\myalphabet^\omega$ where $w\in\myalphabet^*$, 
the subprobability measure $\lang(\mathcal{X})$ assigns the following probability.
\[
\lang(\mathcal{X})(w\myalphabet^\omega)=\begin{cases}
\frac{1}{2^{n}}\cdot\frac{1}{3} & (\text{$w=\underbrace{a\ldots a}_n$, \; $\underbrace{a\ldots a}_n b$ \; or \; $\underbrace{a\ldots a}_n b a\ldots a$}) \\
0 & (\text{otherwise})
\end{cases}
\]
\end{exa}

 \subsection{Fair Simulation for PBWAs}
 \label{subsec:defSimSoundnessProb}
We continue to introduce \emph{fair simulation} for PBWAs. This is one of our main contributions:
to the best of our knowledge this
is the first one for \emph{probabilistic} B\"{u}chi
(word)  automata. Note that our simulation is given by a matrix and not
by a relation; this follows our previous
work~\cite{hasuo06genericforward,urabeH17MatrixSimulationJourn}.

\begin{defi}[fair simulation for PBWAs] 
\label{def:fwdFairBuechiSimProbMatrix}
Let $\X=(X,\myalphabet,M_\X,\initVec_\X,\Acc_\X)$ and $\mathcal{Y}=(Y,\myalphabet,M_\mathcal{Y},\initVec_\mathcal{Y},\Acc_\mathcal{Y})$ 
be probabilistic B\"{u}chi word automata
with the same alphabet $\myalphabet$.
Let $A$ be a matrix such that $A\in [0,1]^{Y\times X}$.
 We define
 $X_1=X\setminus \Acc_\X$ and $X_2=\Acc_\X$ (like in
 Definition~\ref{def:simForNondetBuechiRel}), and similarly for $Y_{1}$ and $Y_{2}$.
 Moreover, let $M_{\mathcal{X},i}(a)\in[0,1]^{X_i\times X}$, $M_{\mathcal{Y},j}(a)\in[0,1]^{Y_j\times Y}$ and
 $A_{ji}\in[0,1]^{Y_j\times X_i}$ denote the obvious partial matrices
 of $M_{\mathcal{X}}(a)\in[0,1]^{X\times X}$,  $M_{\mathcal{Y}}(a)\in[0,1]^{Y\times Y}$ and $A\in[0,1]^{Y\times X}$, respectively.
We say that the matrix $A$ is a \emph{fair simulation} from 
 $\X$ to $\mathcal{Y}$ 
 if it is 
satisfies the following conditions. 
 \begin{enumerate}
 \item\label{item:beingProbMat}
 The matrix $A$ is a substochastic matrix:
 $\sum_{x\in X} A_{y,x} \leq 1$ for each $y\in Y$.
  \item\label{item:f21Andf22Mat}
  The matrix $A$ is a \emph{forward simulation
       matrix}~\cite{urabeH14CONCUR,urabeH17MatrixSimulationJourn}, that is,
\begin{math}
 \initVec_{\mathcal{X}} \leq \initVec_{\mathcal{Y}}\cdot A
\end{math}
and 
\begin{math}
  A\cdot M_{\mathcal{X}}(a) \leq M_{\mathcal{Y}}(a)\cdot A
\end{math}
for each $a\in\myalphabet$.

  \item \label{item:f11Andf12Mat} 
  The partial matrices $A_{11}\in[0,1]^{Y_1\times X_1}$ and $A_{12}\in[0,1]^{Y_1\times X_2}$ 
   come	with 
	their \emph{approximation sequences}.
	They are 
	increasing sequences of length 
	$\overline{\alpha}\leq\omega$:
       \[
       \begin{array}{ll}
        A_{11}^{(0)} \leq
        A_{11}^{(1)} \leq
        \cdots 
	\leq
	A_{11}^{(\overline{\alpha})} 
	\,\in
	[0,1]^{Y_{1}\times X_{1}}
	\quad\text{and}\quad	
        A_{12}^{(0)} \leq
        A_{12}^{(1)} \leq
        \cdots \leq
	A_{12}^{(\overline{\alpha})} 
	\,\in
	[0,1]^{Y_{1}\times X_{2}}
\end{array}       
\]
	such that:
\begin{enumerate}
 \item\label{item:f11Andf12_2Mat}
 \textbf{(Approximate $A_{11}$ and $A_{12}$)} 
       We have
       $A_{11}^{(\overline{\alpha})}=A_{11}$ and
       $A_{12}^{(\overline{\alpha})}=A_{12}$.
 \item\label{item:f11Andf12_3Mat}
 \textbf{($A_{11}^{(\alpha)}$)} 
       For each  $\alpha\le\overline{\alpha}$ and $a\in\myalphabet$
       we have:
       \begin{math}
       A_{11}^{(\alpha)}\cdot M_{\mathcal{X},1}(a)\;\leq\;
       M_{\mathcal{Y},1}(a)\cdot 
	       {\scriptsize
       \begin{pmatrix} A_{11}^{(\alpha)} & A_{12}^{(\alpha)} \\ 
       A_{21} & A_{22} \end{pmatrix}
	}
  \end{math}.
 \item\label{item:f11Andf12_4Mat}
 \textbf{($A_{12}^{(\alpha)}$,  base)} 
       	      The $0$-th approximant $A_{12}^{(0)}$ is the zero matrix $O$.
 \item\label{item:f11Andf12_5Mat}
 \textbf{($A_{12}^{(\alpha)}$,  step)} 
        For each  $\alpha<\overline{\alpha}$ and $a\in\myalphabet$:
%
	  \begin{math}
	         A_{12}^{(\alpha+1)}\cdot M_{\mathcal{X},2}(a)\;\leq\;
       M_{\mathcal{Y},1}(a)\cdot 
       {\scriptsize
	   \begin{pmatrix} A_{11}^{(\alpha)} & A_{12}^{(\alpha)} \\ 
	   A_{21} & A_{22} \end{pmatrix}}
	  \end{math}.

 \item \label{item:f11Andf12_6Mat}
 \textbf{($A_{12}^{(\alpha)}$,  limit)}
	$(A_{12}^{(\omega)})_{y,x} = \sup_{\alpha'<\omega} (A_{12}^{(\alpha')})_{y,x}$ 
              for each $y\in Y_1$ and $x\in X_2$, in case $\overline{\alpha}=\omega$. 
\end{enumerate}
 \end{enumerate}
\end{defi}

\noindent
This notion is the combination of: 1)
Kleisli simulation
(see~\cite{urabeH17MatrixSimulationJourn} and also Table~\ref{table:knownResultsOnCoalgTraceAndSimulation}(c) later) for 
mimicking one-step behaviors; and 2) progress measure~\cite{HasuoSC16}
that accounts for the nonlocal ``fairness'' constraint
(Section~\ref{sec:prelim}). Indeed, Condition~(\ref{item:f21Andf22Mat})
and~(\ref{item:f11Andf12_3Mat}) express the \emph{invariant/gfp} intuition---note that (bi)simulation (without fairness) is a coinductive notion---while Condition~(\ref{item:f11Andf12_4Mat})--(\ref{item:f11Andf12_6Mat})
bears the \emph{ranking function/lfp} flavor, mirroring
the Cousot-Cousot approximation sequence $\bot\sqsubseteq
f(\bot)\sqsubseteq\cdots$.

\begin{thm}[soundness]\label{thm:soundnessOfSimulationGiryMainMatrix}
Assume 
 $\mathcal{Y}$ has a finite state space. 
 Existence of a fair simulation (Definition~\ref{def:fwdFairBuechiSimProbMatrix})
implies trace inclusion:
$
L(\X)(P)
\leq 
L(\mathcal{Y})(P)$ for 
any measurable 
$P
\subseteq \myalphabet^{\omega}
$.
\end{thm}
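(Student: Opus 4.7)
My plan is to work at the level of per-state language measures $\ell_{\mathcal{X}}\colon X\to\Gleq(\myalphabet^\omega)$ and $\ell_{\mathcal{Y}}\colon Y\to\Gleq(\myalphabet^\omega)$ (so that $L(\mathcal{X})=\initVec_{\mathcal{X}}\cdot\ell_{\mathcal{X}}$ and similarly for $\mathcal{Y}$) and to derive the matrix-level inequality $A\cdot\ell_{\mathcal{X}}\le\ell_{\mathcal{Y}}$, from which the theorem will follow first on cylinder sets and then, by Carath\'eodory's extension theorem, on arbitrary measurables. To identify $\ell_{\mathcal{Y}}$, I will use the coalgebraic B\"uchi-trace theory reviewed in Section~\ref{sec:coalgebraicBackgrounds} (cf.~\cite{urabeSH16parityTrace}): $\ell_{\mathcal{Y}}$ is the unique solution of a nested fixed-point equational system
\[
v_1 =_{\eta_1}\Phi_{\mathcal{Y},1}(v_1,v_2),\qquad v_2 =_{\eta_2}\Phi_{\mathcal{Y},2}(v_1,v_2)
\]
over the posets $L_i = \Gleq(\myalphabet^\omega)^{Y_i}$ (pointwise order), where $\{\eta_1,\eta_2\}=\{\mu,\nu\}$ reflects the B\"uchi condition and $\Phi_{\mathcal{Y},i}(v)(y)(\Cyl(aw)) = \sum_{y'}(M_{\mathcal{Y}}(a))_{y,y'}\,v(y')(\Cyl(w))$, with $v$ assembled from $v_1,v_2$. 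Finiteness of $Y$ ensures the $\omega$- and $\omega^{\op}$-completeness and continuity required for Proposition~\ref{prop:suffCondSol} (or its refinement in Section~\ref{sec:soundnessProof}) to guarantee existence of this solution.

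The next step will be to assemble $A$ together with its approximation sequences into a progress measure (Definition~\ref{def:progMeas}) for this equational system whose top approximant is exactly $A\cdot\ell_{\mathcal{X}}$. With a single $\mu$-variable (so $k=1$ and prioritized ordinals collapse to scalars $\alpha\le\overline{\alpha}$), I will set
\[
p_1^{(\alpha)}(y):=\sum_{x\in X_1}\bigl(A_{11}^{(\alpha)}\bigr)_{y,x}\ell_{\mathcal{X}}(x)+\sum_{x\in X_2}\bigl(A_{12}^{(\alpha)}\bigr)_{y,x}\ell_{\mathcal{X}}(x)\qquad(y\in Y_1),
\]
\[
p_2(y):=\sum_{x\in X_1}(A_{21})_{y,x}\ell_{\mathcal{X}}(x)+\sum_{x\in X_2}(A_{22})_{y,x}\ell_{\mathcal{X}}(x)\qquad(y\in Y_2).
\]
Monotonicity of $p_1^{(\alpha)}$ in $\alpha$ is immediate from the approximation sequences being increasing. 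The $\nu$-variable clause at $v_2$ reduces, cylinder by cylinder, to the ``full matrix'' inequality $A\cdot M_{\mathcal{X}}(a)\le M_{\mathcal{Y}}(a)\cdot A$ of Condition~(\ref{item:f21Andf22Mat}) after unfolding $\ell_{\mathcal{X}}$ one step; the $\mu$-variable step and limit clauses reduce similarly to Conditions~(\ref{item:f11Andf12_5Mat}) and~(\ref{item:f11Andf12_6Mat}) (using monotone convergence for the latter).

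Once the progress measure is verified, Theorem~\ref{thm:correctnessOfProgMeasEqSys}(\ref{item:soundnessProgressMeas}) will yield $p_i^{(\overline{\alpha})}\le v_i^\sol$, which is exactly $A\cdot\ell_{\mathcal{X}}\le\ell_{\mathcal{Y}}$ componentwise. Combined with $\initVec_{\mathcal{X}}\le\initVec_{\mathcal{Y}}\cdot A$ from Condition~(\ref{item:f21Andf22Mat}), this will give
\[
L(\mathcal{X})(\Cyl(w))\;\le\; \initVec_{\mathcal{Y}}\cdot A\cdot\ell_{\mathcal{X}}(\Cyl(w))\;\le\; \initVec_{\mathcal{Y}}\cdot\ell_{\mathcal{Y}}(\Cyl(w)) \;=\; L(\mathcal{Y})(\Cyl(w))
\]
for every $w\in\myalphabet^*$. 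To extend from cylinders to all of $\sigalg_{\myalphabet^\omega}$, I will observe that cylinders of depth $n$ form a countable partition of $\myalphabet^\omega$, so finite disjoint unions of cylinders form a generating algebra on which the set function $L(\mathcal{Y})-L(\mathcal{X})$ is nonnegative and $\sigma$-additive; uniqueness of the Carath\'eodory extension of this set function to $\sigalg_{\myalphabet^\omega}$ will then force nonnegativity on every measurable set.

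The hard part will be the $\mu$-variable base clause $p_1^{(0)}=\bot$: while $A_{12}^{(0)}=O$ from Condition~(\ref{item:f11Andf12_4Mat}) kills the $X_2$-summand, the $X_1$-summand involves $A_{11}^{(0)}$, whose starting value is left unconstrained by Definition~\ref{def:fwdFairBuechiSimProbMatrix}. Making the base approximant genuinely zero will require exploiting that $\ell_{\mathcal{X}}$ itself is a solution of an analogous nested fixed-point system and admits its own dual approximation structure supplied by the coalgebraic machinery of Section~\ref{sec:soundnessProof}, so as to cancel the $A_{11}^{(0)}$ contribution; this fairness-meets-coalgebra interplay is where essentially all of the technical work in Section~\ref{sec:soundnessProof} is concentrated.
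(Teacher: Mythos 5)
Your setup (per-state languages, a progress measure for the equational system of $\mathcal{Y}$, soundness of progress measures, then Carath\'eodory on cylinders) is the right machinery, and you have correctly located the crux: the $\mu$-base clause $p_1^{(0)}=\bot$ fails because $A_{11}^{(0)}$ is unconstrained. But this is not a residual technicality that can be discharged by "cancelling the $A_{11}^{(0)}$ contribution against a dual approximation of $\ell_{\mathcal{X}}$" --- it is the entire content of the theorem, and the route you sketch cannot close it. The tell is that your argument uses finiteness of $Y$ only to guarantee existence/continuity of the fixed-point solution; but the solution already exists for countable $Y$ (Theorem~\ref{thm:sanityCheckResult}), so if your progress-measure construction went through it would prove soundness for countable $\mathcal{Y}$ --- which is false: Example~\ref{example:joinedSimNotSound} exhibits an infinite-state $\mathcal{Y}$ admitting a fair simulation without language inclusion. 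Any correct proof must therefore use finiteness in a way that your plan does not. (Approximating $\ell_{\mathcal{X}}$ as well, as in Sublemma~\ref{sublem:soundnessFwdFairBuechiSim1}, fixes the base case but breaks the step case, because without a dividing of $d_1$ one cannot route the $\mu$- and $\nu$-approximants separately through the two summands of $M_{\mathcal{Y},1}(a)\cdot A$.)

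The paper resolves this by preprocessing $\mathcal{Y}$ rather than by refining the progress measure. Lemma~\ref{lem:nonAcctoAccMain2}, which is where finiteness enters, uses the bottom-SCC fairness theorem for finite Markov chains to reclassify as accepting every non-accepting state of $\mathcal{Y}$ from which an accepting state is reachable with positive probability; this preserves $\lang(\mathcal{Y})$. After the reclassification the remaining $Y_1$-states satisfy $\trB(d'_1)=\bot$ and $d'_1=\oF(\id_{Y'_1}+\bot)\odot d'_1$, so Proposition~\ref{prop:unjoinedbotSoundMain2} applies: a double transfinite induction (with base cases supplied by $u_1(0)=\bot$ on the challenger's side and $A_{12}^{(0)}=O$ on the simulation side) shows that the entire $Y_1$-component of your would-be approximant is forced to equal $\bot$ --- i.e.\ $A_{11}\cdot\ell_{\mathcal{X}}|_{X_1}+A_{12}\cdot\ell_{\mathcal{X}}|_{X_2}=0$ for the modified automaton --- so the $\mu$-base clause holds trivially, and the $Y_2$-component is handled by a Knaster--Tarski (gfp) argument. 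If you want to salvage your plan, you should insert this reclassification step before constructing the progress measure; without it the base case is genuinely unprovable.
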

The proof is presented later in Section~\ref{sec:soundnessProof}, after we introduce coalgebraic machinery behind the definition of simulation.

We emphasize again that, differently from the nondeterministic setting, 
soundness of simulation is ensured only for \emph{word} automata with a
\emph{finite} state space on the simulating side.


A (nontrivial) example of such a fair simulation 
is given 
below.

\begin{exa}\label{example:fwdFairSimProbMain}
Let $\mathcal{X}$ and $\mathcal{Y}$ be the PBWAs 
illustrated 
below.
\[
\xy <1.8mm,0mm>:
(6,0)*+[Fo]{y_1} = "x0",
(6,10)*+[Foo]{y_2} = "x1",
(-15,5)*+[Fo]{x_1} = "y0", 
(-25,5)*+[Foo]{x_2} = "y1", 
(15,13)*{\mathcal{Y}}="",
(-26,10)*{\mathcal{X}}="",
\ar _(.3){1} (6,-6);"x0"*+++[o]{}
\ar _{a,\frac{1}{2}} "x0";"x1"*+++[o]{}
\ar @(ru,rd)^{a,\frac{1}{2}} "x0";"x0"
\ar @(ru,rd)^(.6){a,1} "x1";"x1"
\ar _(.3){\frac{1}{2}} (-20,-6);"y0"*+++[o]{}
\ar ^(.3){\frac{1}{2}} (-20,-6);"y1"*+++[o]{}
\ar @<.5mm>^{a,1} "y0";"y1"*+++[o]{}
\ar @<.5mm>^{a,1} "y1";"y0"*+++[o]{}
%
\ar @/^3mm/@{-->}_{\frac{1}{2}} "x0";"y0"*+++[o]{}
\ar @/^8mm/@{-->}^(.3){\frac{1}{2}} "x0";"y1"
\ar @/_3mm/@{-->}^{\frac{1}{2}} "x1";"y0"*+++[o]{}
\ar @/_8mm/@{-->}_(.8){\frac{1}{2}} "x1";"y1"
\endxy
\]
We define 
$A\in [0,1]^{\{y_1,y_2\}\times\{x_1,x_2\}}$ by
$A_{y_i,x_j}=\frac{1}{2}$ for each $i,j\in\{1,2\}$.
Then $A$ is a fair matrix simulation from $\mathcal{X}$ to $\mathcal{Y}$.
Here the approximation sequences 
$A_{11}^{(0)}\sqsubseteq A_{11}^{(1)}\sqsubseteq\cdots\sqsubseteq A_{11}^{(\omega)}$ and
$A_{12}^{(0)}\sqsubseteq A_{12}^{(1)}\sqsubseteq\cdots\sqsubseteq A_{12}^{(\omega)}$ are given by
$A_{11}^{(i)}=\bigl(\frac{1}{2}-\left(\frac{1}{2}\right)^{i+1}\bigr)\in[0,1]^{\{y_1\}\times\{x_1\}}$ and 
$A_{12}^{(i)}=\bigl(\frac{1}{2}-\left(\frac{1}{2}\right)^{i+1}\bigr)\in[0,1]^{\{y_1\}\times\{x_2\}}$
for each $i$.
\end{exa}

\section{Coalgebraic Background}
\label{sec:coalgebraicBackgrounds}

The  fair simulation notions in Sections~\ref{sec:fwdFairSimNondet}--\ref{sec:fwdFairSimProb} (for
nondeterminism and probability)  may look different, but they 
arise from the same source, namely our coalgebraic study of
B\"{u}chi automata~\cite{urabeSH16parityTrace}. 

\subsection{Modeling a System as a Function $X\to TFX$}
\label{subsec:XToTFX} 
The conventional coalgebraic modeling of
systems---as a function $X\to FX$---is known to capture
\emph{branching-time} semantics such as
bisimilarity~\cite{jacobs16CoalgBook,Rutten00a}. In contrast
accepted languages of B\"{u}chi automata with nondeterministic or
probabilistic branching constitute \emph{linear-time} semantics;
see~\cite{vanGlabbeek01} for the so-called \emph{linear time-branching
time spectrum}.

For the coalgebraic modeling of such linear-time semantics we 
follow the ``Kleisli modeling'' tradition~\cite{PowerT97,jacobs04tracesemantics,hasuo07generictrace}.
Here a system is parametrized by a monad $T$ and an endofunctor $F$ on $\Sets$:
the former represents the \emph{branching type} while
the latter represents the \emph{(linear-time) transition type}; and 
a system is modeled as a function of the type $X\to TFX$.\footnote{
Another
 eminent approach to coalgebraic linear-time semantics is the
 \emph{Eilenberg-Moore} one (see~\cite{Jacobs0S15,AdamekBHKMS12} for example):
 notably
in the latter
 a system is expressed as $X\to FTX$. 
The Eilenberg-Moore approach can be seen as a categorical
 generalization of \emph{determinization} or the \emph{powerset construction}. This however makes the approach hard to apply to
 \emph{infinite} words or trees, since already for B\"{u}chi word
 automata, it is known that deterministic ones are less expressive than
 general, nondeterministic ones.
}

A \emph{monad} $T$ is a construct from category theory~\cite{MacLane71}: it is
 a functor $T\colon \C\to\C$ equipped with \emph{unit}
$\eta^{T}_{X}\colon X\to TX$ and \emph{multiplication}
$\mu^{T}_{X}\colon T^{2}X\to TX$, both given by arrows in $\C$ for each
 object $X\in\C$,
subject to some axioms. 
In this paper we use two examples $T=\pow,\giry$: the
\emph{powerset monad} $\pow$  (on the category $\Sets$ of sets and
functions) for nondeterminism; and 
the \emph{sub-Giry monad} $\giry$ (on  $\Meas$ of measurable
spaces and
measurable functions) for probabilistic branching. The latter is a
``sub'' variant of the well-known \emph{Giry
monad}~\cite{giry82categoricalapproach}.

\begin{defi}[the monads $\pow$ and $\giry$]\label{def:powersetMonadAndGiryMonad}
The \emph{powerset monad} $\pow$ on $\Sets$ carries a set $X$ to $\pow
 X=\{S\subseteq X\}$, and a function $f\colon X\to Y$ to $\pow f\colon \pow
 X\to \pow Y$, $S\mapsto f[S]=\{f(x)\mid x\in S\}$. For each set $X$, its unit
 $\eta^{\pow}_{X}\colon X\to \pow X$ is given by the singleton map
 $x\mapsto \{x\}$; and its multiplication 
 $\mu^{\pow}_{X}\colon \pow^{2}X\to \pow X$ is given by union $M\mapsto
 \bigcup_{A\in M}A$. 

The \emph{sub-Giry monad} $\giry$ on $\Meas$ carries a measurable space
$(X,\sigalg_X)$ to $(\giry X, \sigalg_{\giry X})$, where $\giry X$ is
the set of all \emph{subprobability measures} on $X$
and
 $\sigalg_{\giry X}$ is the
 smallest $\sigma$-algebra such that, for each
$S\in\sigalg_X$, the function $\text{ev}_S:\giry X\to[0,1]$ defined by
$\text{ev}_S(P)=P(S)$ is measurable. The action of $\giry$ on arrows is
 given by the pushforward measure: 
for $f\colon X\to Y$, $P\in \giry X$
 and $T\in \sigalg_Y$,  $(\giry f)(P)(T)=P(f^{-1}(T))$. 
The unit
 $\eta^{\giry}_{X}\colon X\to \giry X$ is given by the \emph{Dirac measure}
 $\eta^{\giry}_{X}(x)= \delta_{x}$;
and
 $\mu^{\giry}_{X}\colon \giry^{2}X\to \giry X$ is given by
 $\Psi\mapsto 
\bigl(S\mapsto\int_{\giry (X,\sigalg_X)}
\text{ev}_S \,d\Psi
\bigr)
$.
\end{defi}
Intuitively
$\eta^{T}_{X}\colon X\to TX$ 
\emph{turns an element into a trivial branching}
while $\mu^{T}_{X}\colon T^{2}X\to TX$
\emph{suppresses two successive branchings into one}. See~\cite{hasuo07generictrace} for further illustration.

For the other parameter $F$---for the type of linear-time
behaviors---we use the following.
\begin{defi}[the functors $F_{\Sigma}$ on $\Sets$ and
 $F_{\myalphabet}$ on $\Meas$]\label{def:FSigmaAndFLambda}
Let $\Sigma$ be a ranked alphabet. The
 functor $F_{\Sigma}\colon \Sets\to\Sets$ carries a set $X$ to
 $F_{\Sigma}X=\coprod_{\sigma\in \Sigma}X^{|\sigma|}$; 
 and a function
  $f$ to $\coprod_{\sigma\in \Sigma}f^{|\sigma|}$. 
Let $\myalphabet$ be a countable alphabet, thought of as a measurable set with the
 discrete $\sigma$-algebra. 
 The functor $F_{\myalphabet}=\myalphabet\times(\place)\colon \Meas\to\Meas$ carries a measurable
 space $X$ to the product space $\myalphabet\times X$; and a measurable map
 $f$
 to $\id_{\myalphabet}\times f$. 
\end{defi}

Our system models
in Sections~\ref{sec:fwdFairSimNondet}--\ref{sec:fwdFairSimProb} readily
allow categorical modeling as arrows $X\to TFX$: the transition function
of an NBTA (Definition~\ref{def:nondetBuechiTreeAutom}) is a function $X\to
\pow F_{\Sigma}X$; and the transition matrices of a PBWA
(Definition~\ref{def:generativeProbBuechiTreeAutom}) collectively give a
(measurable) function $X\to \giry F_{\myalphabet}X$.

\subsection{Coalgebras in a Kleisli Category}
\label{subsec:coalgInKleisli}

Given a monad $T$ on a category $\C$, the standard construction of the \emph{Kleisli category}
$\Kl(T)$ is defined as follows (see~\cite{MacLane71} for example): its objects are those of $\C$; its arrows
$f\colon X\kto Y$ are precisely arrows $f\colon X\to TY$ in
$\C$; and its identity and composition $\odot$ are defined with the aid
of 
unit
$\eta^{T}$ and multiplication $\mu^{T}$.\footnote{For distinction we write $\kto$ for arrows in $\Kl(T)$ (not $\to$), and $\odot$ for composition in $\Kl(T)$
(not $\circ$).}
It is known that an arrow $f:X\to Y$ in $\C$ can be lifted to the Kleisli category $\Kl(T)$ 
by the \emph{Kleisli inclusion functor} $J\colon \Sets\to \Kl(T)$ that is defined by 
$f\mapsto \eta_Y\circ f$~\cite{MacLane71}. 

Intuitively
 a Kleisli arrow $f\colon
X\kto Y$ is a function from $X$ to $Y$ \emph{with
$T$-branching}. 
Then a system dynamics $X\to TFX$ with $T$-branching over
linear-time $F$-behaviors is a Kleisli arrow $X\kto {\oF}X$, a (proper)
$\oF$-coalgebra
in $\Kl(T)$.
Here $\oF\colon \Kl(T)\to\Kl(T)$ is a canonical lifting of $F\colon\C\to \C$, which
is formally defined as follows.

\begin{defi}\label{def:liftFunct}
For $F:\C\to\C$, 
a functor $\overline{F}:\Kl(T)\to\Kl(T)$ is called a \emph{lifting} of a functor $F:\C\to\C$ if
$\oF X= FX$ and
$\oF\circ J=J\circ F$. 
\end{defi}
A canonical lifting can be explicitly described
 when $T=\pow$ and $F=F_{\Sigma}$ on $\Sets$, and when $T=\giry$ and
$F=F_{A}$ on $\Meas$. See~\cite{hasuo07generictrace,urabeH15CALCO} for example.

%

Studies of coalgebras $X\kto {\oF}X$ are initiated in~\cite{PowerT97}
and developed henceforth 
in~\cite{jacobs04tracesemantics,hasuo07generictrace,cirstea10genericinfinite,kerstan13coalgebraictrace,urabeH17MatrixSimulationJourn,urabeH15CALCO} for example,
leading to the following \emph{coalgebraic}\newline\emph{theory of  trace
and simulation}. 

 \myparagraph{(Table~\ref{table:knownResultsOnCoalgTraceAndSimulation}(a))}
       In~\cite{hasuo07generictrace} it is shown that, for $T=\pow$
       (for nondeterminism) and $\dist$ (the 
       \emph{subdistribution} monad on $\Sets$ for discrete probabilities), and for
       a suitable functor $F$ on $\Sets$, an initial $F$-algebra
       $\alpha\colon FA\iso A$ in $\Sets$ yields a final
       $\oF$-coalgebra $J\alpha^{-1}\colon A\kto\oF
       A$. 
       In case $F=F_{\Sigma}$ 
       an initial algebra is given by
       the set of all \emph{finite} $\Sigma$-trees; and 
       the unique morphism $\tr(c)\colon X\kto A$---namely a function $\tr(c)\colon
       X\to TA$, see Table~\ref{table:knownResultsOnCoalgTraceAndSimulation}(a)---is nothing but the \emph{finite trace semantics} of 
       the automaton $c\colon X\kto \oF X$, capturing all the
       linear-time behaviors that \emph{eventually terminate}.

 \myparagraph{(Table~\ref{table:knownResultsOnCoalgTraceAndSimulation}(b))} 
       For \emph{infinitary trace
       semantics}  
       its coalgebraic characterization is more
       involved~\cite{jacobs04tracesemantics,cirstea10genericinfinite}. 
       Here we consider 
       all possibly nonterminating linear-time behaviors of an automaton.
       In the above setting, and also for $T=\giry$ on $\Meas$,
       it is
       shown that a final coalgebra $\zeta\colon Z\iso
       FZ$---we have $Z\iso\myTree_{\Sigma}$
       when
       $F=F_{\Sigma}$---yields
       a \emph{weakly final} coalgebra $J\zeta$ in $\Kl(T)$. Given  $c$ there is thus at least one morphism from $c$ to
       $J\zeta$;   there is also a \emph{maximal}
       such $\trinf(c)$, and this is how we capture infinitary trace. 
       In
       Table~\ref{table:knownResultsOnCoalgTraceAndSimulation}(b) we indicate
       this maximality by $\nu$.

      \myparagraph{(Table~\ref{table:knownResultsOnCoalgTraceAndSimulation}(c))} 
      In~\cite{hasuo06genericforward} it is shown that \emph{lax/oplax
      homomorphisms}
      (Table~\ref{table:knownResultsOnCoalgTraceAndSimulation}(c)) witness
      \emph{finite trace inclusion} $\tr(c)\sqsubseteq \tr(d)$. When $T=\pow$ these notions specialize to
       \emph{forward} and \emph{backward simulation}
      in~\cite{lynch95forwardand}, namely binary relations that
      ``mimic.''
      In~\cite{urabeH15CALCO} they
      are shown to witness \emph{infinitary trace inclusion} too;
      this is the starting point of the current study of (forward)  simulation
      for B\"{u}chi automata. 
      Note that, when $T=\giry$, our (forward) ``simulation'' is not a
      relation but a ``function with probabilistic branching''
      $f\colon Y\to \giry X$. The latter is roughly a matrix
      of dimension $|Y|\times |X|$; and algorithms to find such are
      studied in~\cite{urabeH17MatrixSimulationJourn}. 

 \begin{table}[tbp]
 \begin{tabular}{cccc}
    \begin{xy}\scriptsize
         \def\labelstyle{\textstyle}
     \xymatrix@R=1.4em@C=1.2em{
  {\overline{F} X} \ar@{}[drr]|{=} \kar@{-->}[rr]^{\overline{F}(\tr(c))} & & {\overline{F}A}  \\
  {X} \kar[u]^{c}  \kar@{-->}_{\tr(c)}[rr]  & & {A} \kar[u]_{J\alpha^{-1}}
  }
   \end{xy}
  &
    \begin{xy}\scriptsize
         \def\labelstyle{\textstyle}
 \xymatrix@R=1.4em@C=1.2em{
 {\overline{F} X} \ar@{}[drr]|{\color{red}=_{\mathbf{\nu}}} \kar@{->}[rr]^{\overline{F}(\trinf(c))} & & {\overline{F}Z}  \\
 {X} \kar[u]^{c}  \kar@{->}_{\trinf(c)}[rr]  & & {Z} \kar[u]_{J\zeta}
 }
   \end{xy}
  &
    \begin{xy}\scriptsize
         \def\labelstyle{\textstyle}
  \xymatrix@R=1.4em@C=1.2em{
  {\overline{F} X} \ar@{}[drr]|{\sqsubseteq}  & & {\overline{F} Y} \kar[ll]_{\overline{F} f } \\
  {X} \kar[u]^{c} 
  & & {Y} \kar[u]_{d}
    \ar_(.3){f}|-*\dir{|}[ll] 
}
   \end{xy}
  
    \begin{xy}\scriptsize
         \def\labelstyle{\textstyle}
  \xymatrix@R=1.4em@C=1.2em{
  {\overline{F} X} \ar@{}[drr]|{\sqsubseteq} \kar[rr]^{\overline{F} b } & & {\overline{F} Y}  \\
  {X} \kar[u]^{c} 
  \ar^(.3){b}|-*\dir{|}[rr]  &
    & {Y} \kar[u]_{d}  
}
   \end{xy}
   \\
  \begin{tabular}{l}   \scriptsize
    (a) \emph{Coalgebraic finite trace}:
   \\\scriptsize
   $FA\stackrel{\alpha}{\to}A$ is an init.\ alg.\ in $\Sets$
  \end{tabular}   
  \!\!\!\!\!\!\!\!\!
  & 
      \begin{tabular}{l}   \scriptsize
   (b)
   \emph{Coalgebraic infinitary trace}:
       \\\scriptsize
       $Z\stackrel{\zeta}{\to}FZ$
  is a final coalg.\ in $\Sets$
   \end{tabular}
  \!\!\!\!\!\!\!\!\!
   &
      \begin{tabular}{l}   \scriptsize
   (c)
   \emph{Coalgebraic fwd.\ and bwd.\ simulation}: 
       \\   \scriptsize
   here $c$ is simulating by $d$
       \end{tabular}
      \end{tabular}
 \caption{Some known results in the coalgebraic theory of trace and
  simulation}
 \label{table:knownResultsOnCoalgTraceAndSimulation}
 \end{table}

\begin{wrapfigure}[4]{r}{2.5cm}
\hspace{-.1em}\scriptsize
\raisebox{-.5cm}[0pt][0cm]{\
\begin{math}
 \vcenter{\xymatrix@C-.4em@R=1.7em{
  {FX}
     \ar[r]^-{Ff}
  &
  {FY}
  \\
  {X}
    \ar[r]_{f}
    \ar[u]^{c}
 &
  {Y}
    \ar[u]_{d}
}}
\end{math}
}
\end{wrapfigure}
\subsection{Coalgebraic Modeling of B\"{u}chi Automata}
\label{subsec:coalgebraicModelingOfBuechiAutomata}
In the above theory---and in the theory of coalgebra in
general---the B\"{u}chi acceptance condition has long been considered a big challenge:
its  nonlocal character (``visit $\accstate$ infinitely often'') 
does not go along  with the coalgebraic, local idea of behaviors that is
centered around \emph{homomorphisms} of coalgebras ($f$ in the
diagram).

Our answer~\cite{urabeSH16parityTrace} to the
challenge,
inspired by Table~\ref{table:knownResultsOnCoalgTraceAndSimulation}(b) 
and our recent~\cite{HasuoSC16},
 consists of: 1) regarding the distinction of $\nonaccstate$ vs.\
$\accstate$ as a \emph{partition} $X=X_{1}+X_{2}$ of the state space; and 2) introducing explicit $\mu$'s
and $\nu$'s in commuting diagrams, hence regarding them as part of
\emph{equational systems} (Section~\ref{sec:prelim}). This forces our departure
from the  coalgebraic reasoning principle of
\emph{finality}---namely \emph{existence} of a \emph{unique}
homomorphism---by moving from 
Table~\ref{table:knownResultsOnCoalgTraceAndSimulation}(a)
to~(\ref{eq:diagramsForEqSysForBuechiAcceptance}) below.  We however believe this is a necessary step forward,
for the theory of coalgebras to cope with its long-standing challenges
like
the B\"{u}chi condition and weak bisimilarity. 

We review the part of the theory
in~\cite{urabeSH16parityTrace} that is relevant to us.

\begin{defi}[B\"{u}chi $(T,F)$-system]\label{def:buechiTFSys}
 Let $T$ be a monad, and $F$ be an endofunctor, both on some category
 $\C$ with binary coproducts $+$ and a nullary product $1$. Assume also 
 that $F$ lifts to $\oF\colon \Kl(T)\to\Kl(T)$ (Definition~\ref{def:liftFunct}). 

 A \emph{B\"{u}chi $(T,F)$-system}
is given by a tuple
$\X=
 \bigl(
  (X_{1},X_{2}),
c\colon X\kto \oF X,
  s\colon 1\kto X)$
where:
\begin{itemize}
 \item $X_{1}$ and $X_{2}$ are objects of $\C$ (with the intuition that
       $X_{1}=\{\text{non-accepting states }\nonaccstate\}$ and
       $X_{2}=\{\text{non-accepting states }\accstate\}$), and 
we define $X:=X_{1}+X_{2}$;
 \item $c\colon X \kto \oF X$ is an arrow in $\Kl(T)$ for  \emph{dynamics}; and
 \item $s\colon 1\kto X$ is an arrow in $\Kl(T)$ for \emph{initial states}.
\end{itemize}
For each $i\in
 \{1,2\}$,
we define $c_{i}\colon X_{i}\kto \oF X$ to be
the restriction $c\circ \kappa_{i}\colon X_{i}\to TFX$ of $c$ along the
 coprojection $\kappa_{i}\colon X_{i}\hookrightarrow X$.
\end{defi}
Thus a B\"uchi $(T,F)$-system is a (Kleisli) coalgebra $X\kto \oF X$
augmented with the information on accepting and initial states. 
We can regard NBTAs and PBWAs as B\"uchi $(T,F)$-systems as follows; note that an arrow $1\kto X$ in
$\Kl(\giry)$ is nothing but a probability subdistribution over $X$. 

\begin{exa}\label{exa:NBTAandPBWAInduceBuechiSystems}
  \leavevmode
\begin{enumerate}[beginpenalty=99]
 \item 
An NBTA $\mathcal{X}=(X,\Sigma,\delta,\initSet,\Acc)$ 
(Definition~\ref{def:nondetBuechiTreeAutom})
gives rise to
a B\"uchi $(\pow,F_{\Sigma})$-system 
  $\X'=
 \bigl(
  (X_{1},X_{2}),
c\colon X\kto \overline{F_\Sigma} X,
  s\colon 1\kto X)$ that is defined by:
    \begin{itemize}
  \item $X_1=\Acc$ and $X_2=X\setminus \Acc$;
  \item $c(x)=\delta(x)$; and
  \item $s(*)=I$.
  \end{itemize}


 \item 
 A PBWA $\mathcal{X}=(X,\myalphabet,M,\initVec,\Acc)$ (Definition~\ref{def:generativeProbBuechiTreeAutom}) 
 gives rise to a B\"uchi $(\giry,F_{\myalphabet})$-system 
 $\X'=
 \bigl(
  (X_{1},X_{2}),
c\colon X\kto \overline{F_\myalphabet} X,
  s\colon 1\kto X)$ that is defined by:
  \begin{itemize}
  \item $X_1=(\Acc,\pow\Acc)$ and $X_2=(X\setminus \Acc,\pow(X\setminus \Acc))$;
  \item $c(x)(\{(a,x')\})=(M(a))_{x,x'}$; and
  \item $s(*)(\{x\})=\initVec_x$.
  \end{itemize}
  Here $c$ and $s$ are well-defined as $X_1+X_2\in\Meas$ is equipped with the discrete $\sigma$-algebra.
%
\end{enumerate}
\end{exa}

The next is  the main theorem of~\cite{urabeSH16parityTrace}.\footnote{In fact this is a special case of 
the main theorem because the original theorem considers \emph{parity $(\pow,\FSigma)$-systems}, which 
generalizes B\"uchi $(\pow,\FSigma)$-systems and is identified with parity tree automata. 
Note that the B\"uchi acceptance condition is a special case of the parity acceptance condition.} 
Recall that 
  $\myTree_{\Sigma}$ is the set of  (possibly infinite)
  $\Sigma$-trees
 (Section~\ref{subsec:basicNBTA}); it 
 carries a final coalgebra
 $\zeta\colon \myTree_{\Sigma}\iso F_{\Sigma}(\myTree_{\Sigma})$ in
 $\Sets$.
 We will be using natural orders $\sqsubseteq_{X,Y}$ on the homsets $\Kl(\pow)(X,Y)$ and
$\Kl(\giry)(X,Y)$, given by inclusion and pointwise extension of the
  order on $[0,1]$, respectively. Namely,
  \begin{equation}\label{eq:deforder}
  \begin{aligned}
  f\sqsubseteq_{X,Y} g\;&\defarrow\; \forall x\in X.\; f(x)\subseteq g(x) & \quad\text{for}\quad T=\pow\quad\text{and} &\\
  f\sqsubseteq_{X,Y} g\;&\defarrow\; \forall x\in X.\;\forall A\in\sigalg_Y.\; f(x)(A)\leq g(x)(A) & \quad\text{for}\quad T=\giry\,. &
  \end{aligned}
  \end{equation}
%
\begin{thmC}[{\cite{urabeSH16parityTrace}}] \leavevmode
\label{thm:sanityCheckResult}
\begin{enumerate}
 \item 
  Let
 $\X=
 \bigl(
  (X_{1},X_{2}),
 c,
  s)$
  be a B\"uchi
 $(\pow,F_{\Sigma})$-system. 
 Consider an equational system
 	\begin{equation}\label{eq:eqSysForBuechiAcceptance}
	 \begin{aligned}
	  u_{1} 
	  &\;=_{\mu}\;
	  (J\zeta)^{-1}
	  \odot
	  \overline{F_{\Sigma}}[u_{1},u_{2}]
	  \odot
	  c_{1}\enspace,\qquad
	  &
	  u_{2} 
	  &\;=_{\nu}\;
	  (J\zeta)^{-1}
	  \odot
	  \overline{F_{\Sigma}}[u_{1},u_{2}]
	  \odot
	  c_{2}
	 \end{aligned}
	\end{equation}		
	where $u_{i}$ ranges over the homset $\Kl(\pow)(X_{i},\myTree_{\Sigma})$ for
	$i\in\{1,2\}$. Diagrammatically:
	\begin{equation}\label{eq:diagramsForEqSysForBuechiAcceptance}
	    \vcenter{\xymatrix@C+3.3em@R=.8em{
   {F_{\Sigma}X}
       \kar[r]^{\overline{F_{\Sigma}}[
       u_{1}, u_{2}
 ]}
       \ar@{}[rd]|{\color{blue}=_{\mu}}
   &
   {F_{\Sigma}(\myTree_{\Sigma})}
   \\
   {X_{1}}
       \kar[u]^{c_{1}}
       \kar[r]_(.6){
       u_{1}
       }
   &
   {\myTree_{\Sigma} \mathrlap{\enspace,}}
       \kar[u]_{J\zeta}^{\cong}
 }}
 \qquad
	    \vcenter{\xymatrix@C+3.3em@R=.8em{
   {F_{\Sigma}X}
       \kar[r]^{\overline{F_{\Sigma}}[
       u_{1}, u_{2}
       ]}
       \ar@{}[rd]|{\color{red}=_{\nu}}
   &
   {F_{\Sigma}(\myTree_{\Sigma})}
   \\
   {X_{2}}
       \kar[u]^{c_{2}}
       \kar[r]_(.6){
       u_{2}
 }
   &
   {\myTree_{\Sigma} \mathrlap{\enspace.}}
       \kar[u]_{J\zeta}^{\cong}
 }}
	\end{equation}
 \begin{enumerate}
 \item\label{item:thm:sanityCheckResult1a} The equational system has a solution, denoted by
       $\trB(c_{i})\colon X_{i}\kto \myTree_{\Sigma}$ for $i\in\{1,2\}$.
 \item Let
       \begin{math}
		 \trB(\X )
	 :=
	 \bigl(\,
	 \{*\}=1
	 \stackrel{s}{\longkto} 
	 X=X_{1}+X_{2}
	 \stackrel{[\trB(c_{1}),\trB(c_{2})]}{\longkto}
	 \myTree_{\Sigma}
	 \,\bigr)
       \end{math} be a composite in $\Kl(\pow)$. In case $\X$ is induced by an NBTA, the set
       $\trB(\X)(*)\subseteq \myTree_{\Sigma}$ coincides with the
       (B\"{u}chi) language $L(\X)$ of $\X$
       (Definition~\ref{def:BuechiLangConventionally}).
 \end{enumerate}
 \item Let $\X$ be a B\"uchi $(\giry,F_{\myalphabet})$-system, and consider the
       same equational system as~(\ref{eq:eqSysForBuechiAcceptance}),
       but with $\giry, F_{\myalphabet}, \myalphabet^{\omega}$ replacing $\pow, F_{\Sigma}, \myTree_{\Sigma}$. Then:
       \begin{enumerate}
	\item\label{item:thm:sanityCheckResult2a} The equational system has a solution.
	\item\label{item:thm:sanityCheckResult:ProbSanity}
	 Let $\X$ be induced by a PBWA
	      (Definition~\ref{def:generativeProbBuechiTreeAutom}).  For the
	      same composite $\trB(\X )\colon 1\kto \myalphabet^{\omega}$ as above
	      we have $\trB(\X)(*)=\lang(\X)\in \giry(\myalphabet^{\omega})$, the
	      B\"uchi language of the PBWA
	      (Section~\ref{sec:fwdFairSimProb}). 
	      \qed
       \end{enumerate}
\end{enumerate}
\end{thmC}

In the proof of the above theorem, (\ref{item:thm:sanityCheckResult1a}) and (\ref{item:thm:sanityCheckResult2a}) are 
proved using Proposition~\ref{prop:suffCondSol}. More concretely, 
if $T=\pow$ and $F=F_\Sigma$ then Condition~(\ref{item:prop:suffCondSol1}) of Proposition~\ref{prop:suffCondSol} is
satisfied by the equational system.
In contrast, if $T=\giry$ and $F=F_\myalphabet$ then
Condition~(\ref{item:prop:suffCondSol2}) 
is satisfied.

\section{Coalgebraic Account on Fair Simulations and Soundness
 Proofs}\label{sec:soundnessProof}
Here we lay out our coalgebraic study of fair simulations. We will be firstly led to 
a simulation notion ``with dividing'' that is coalgebraically neat
but not desirable from a practical viewpoint. Circumventing the
dividing construct  we obtain the
simulation notions that we have presented
in Sections~\ref{sec:fwdFairSimNondet}--\ref{sec:fwdFairSimProb}. 

The last part of circumventing dividing is different for $T=\pow$
and $\giry$; this is why we have different definitions of simulation. While one would
hope for uniformity, we suspect it to be hard, for the
following reason. We observed~\cite{urabeH15CALCO} that the characterization of
infinite trace
(Table~\ref{table:knownResultsOnCoalgTraceAndSimulation}(b)) is true for $T=\pow$ and $\giry$, but because of 
 different categorical machineries. Since infinite trace is a
special case of B\"uchi acceptance (where every state is accepting) and 
our soundness proof should rely on its characterization, we expect that this sharp
contrast would still stand.

\subsection{$\Cppo$-enriched Categories and Functors; Codomain Restrictions and Joins}\label{subsec:catPrelim}
In this section, we review four categorical constructs that are used in the definition of 
our categorical simulation notion.

\subsubsection{$\Cppo$-enriched category and $\Cppo$-enriched functor}\label{subsubsec:Cppoenriched}
Recall that in the categorical definition of B\"uchi languages,
we used a partial order $\sqsubseteq_{X,Y}$ on each homset $\Kl(T)(X,Y)$.
The first two notions---$\Cppo$-enriched category and $\Cppo$-enriched functor, see e.g.~\cite{borceux1994Handbook2}---add
certain assumptions to the ordered structure.
The same assumptions are also used in~\cite{hasuo07generictrace} where finite trace semantics of nondeterministic and probabilistic
systems are captured categorically.

\begin{defi}[$\Cppo$-enriched category and $\Cppo$-enriched functor]\label{def:CppoEnriched}
A category $\mathbb{C}$ is called a \emph{$\Cppo$-enriched category} if it satisfies the following conditions:

\begin{enumerate}
\item\label{item:def:CppoEnrichedCat1}
 Each homset $\mathbb{C}(X,Y)$ carries a partial order $\sqsubseteq_{X,Y}$. 
Moreover each homset $\mathbb{C}(X,Y)$ is a \emph{pointed cpo} with respect to the order, i.e.\ 
it has the least element $\bot_{X,Y}$ and each increasing sequence $f_0\sqsubseteq_{X,Y} f_1\sqsubseteq_{X,Y}\cdots\in\mathbb{C}(X,Y)$ has
the least upper bound $\bigsqcup_{i\in\omega}f_i:X\to Y$.

\item\label{item:def:CppoEnrichedCat2}
For each $X,Y,Z\in\mathbb{C}$, the composition 
$(\place\circ\place):\mathbb{C}(Y,Z)\times\mathbb{C}(X,Y)\to\mathbb{C}(X,Z)$ is monotone with respect to the product order.

\item\label{item:def:CppoEnrichedCat3}
The composition $\circ$ is $\omega$-continuous. That is, for an increasing
sequence $f_0\sqsubseteq_{X,Y} f_1$ ${\sqsubseteq_{X,Y}\cdots}$ of arrows, 
\begin{equation}\label{eq:1702101518}
\bigl(\bigsqcup_{i<\omega} f_i\bigr)\circ g=\bigsqcup_{i<\omega} \bigl(f_i\circ g\bigr)
\qquad\text{and}\qquad
h\circ \bigl(\bigsqcup_{i<\omega} f_i\bigr)=\bigsqcup_{i<\omega} \bigl(h\circ f_i\bigr)\,.
\end{equation}
\end{enumerate}
Let $\mathbb{C}$ be a $\Cppo$-enriched category.
A functor $F:\mathbb{C}\to\mathbb{C}$ is called a \emph{$\Cppo$-enriched functor} if
it satisfies the following conditions.

\begin{enumerate}
  \renewcommand{\labelenumi}{(\alph{enumi})}
  \renewcommand{\theenumi}{\alph{enumi}}
\item\label{item:def:CppoEnrichedFunct1}
It is \emph{locally monotone}, that is, for each $X,Y\in\mathbb{C}$ and $f,g:X\to Y$, 
$f\sqsubseteq_{X,Y} g$ implies $Ff\sqsubseteq_{FX,FY} Fg$.

\item\label{item:def:CppoEnrichedFunct2}
It is \emph{locally $\omega$-continuous}, that is, for each $X,Y\in\mathbb{C}$ and 
increasing sequence $f_0\sqsubseteq_{X,Y} f_1\sqsubseteq_{X,Y}\cdots\in\mathbb{C}(X,Y)$, 
we have $F\bigl(\bigsqcup_{i<\omega}f_i\bigr)=\bigsqcup_{i<\omega}(F f_i)$.
\end{enumerate}
\end{defi}

If confusion is unlikely,
we omit subscripts and just write $\sqsubseteq$ and $\bot$ for $\sqsubseteq_{X,Y}$ and $\bot_{X,Y}$.

\begin{rem}\label{rem:monotonicityExplicitly}
In a definition of $\Cppo$-enriched category, monotonicity of compositions (Condition~(\ref{item:def:CppoEnrichedCat2}) in the definition above)
 is often omitted (see~\cite{hasuo07generictrace,bonchiMSZ14killEpsilons} for example).
However we require it explicitly to ensure that if $(f_i)_{i\in\omega}$ is an increasing sequence then
$(f_i\circ g)_{i\in\omega}$ and $(h\circ f_i)_{i\in\omega}$ are also increasing sequences
and hence the suprema  $\bigsqcup_{i<\omega} \bigl(f_i\circ g\bigr)$ and $\bigsqcup_{i<\omega} \bigl(h\circ f_i\bigr)$ 
in (\ref{eq:1702101518}) in the definition is well-defined.
We require a $\Cppo$-enriched functor to be locally monotone (Condition~(\ref{item:def:CppoEnrichedFunct1}) in the definition) 
for the same reason.
\end{rem}

\begin{rem}\label{rem:ContNotNecessary}
The notions of $\Cppo$-enriched category and $\Cppo$-enriched functor are instances of
well-known categorical notions of $\mathbb{V}$-enriched category and $\mathbb{V}$-enriched functor 
 (see~\cite{borceux1994Handbook2} for example).
%
Later in the soundness proof of our categorical  fair simulation, 
 we will be assuming $\Kl(T)$ and $\oF$ to be $\Cppo$-enriched.
We do so for conceptual simplicity: technically speaking this is stronger than needed, since $\omega$-continuity of composition is not used in the proof. 

\end{rem}
It is not so hard to see that $\Kl(\pow)$ are $\Kl(\giry)$ are both $\Cppo$-enriched categories 
and
$\overline{F_\Sigma}$ and $\overline{F_\myalphabet}$ are both $\Cppo$-enriched functors,
with respect to the
orders in (\ref{eq:deforder}).


\begin{rem}\label{rem:buechiLangCppo}
Let $T$ be a monad and $F$ be a functor with a final coalgebra $\zeta:Z\to FZ$.
If $\Kl(T)$ and $\overline{F}$ are $\Cppo$-enriched then 
the equational system~(\ref{eq:diagramsForEqSysForBuechiAcceptance}) 
(with $T$, $F$ and $Z$ replacing $\pow$, $F_\Sigma$ and $\myTree_\Sigma$) satisfies
the assumptions (\ref{item:thm:correctnessOfProgMeasEqSys1}) and (\ref{item:thm:correctnessOfProgMeasEqSys2})
in Theorem~\ref{thm:correctnessOfProgMeasEqSys}.
Therefore by Theorem~\ref{thm:correctnessOfProgMeasEqSys}, 
progress measures for the equational system satisfy soundness and completeness.

%
There is another way to ensure soundness and completeness of progress measures for the equational system.
In the original version of Theorem~\ref{thm:correctnessOfProgMeasEqSys} in~\cite{HasuoSC16},
instead of the assumptions
(\ref{item:thm:correctnessOfProgMeasEqSys1}) and (\ref{item:thm:correctnessOfProgMeasEqSys2}), 
the following assumption is required: 
\begin{itemize}
\item for each $i\in[1,m]$, the poset $L_i$ is a complete lattice.
\end{itemize}
This implies that 
soundness and completeness of progress measures for the equational system~(\ref{eq:diagramsForEqSysForBuechiAcceptance}) 
are satisfied if we have the following condition:
\begin{enumerate}
\setcounter{enumi}{1}
\item[($\ddagger$)]
each homset of $\Kl(T)$ carries a complete lattice;
 Kleisli compositions are monotone; and $\overline{F}$ is locally monotone.
\end{enumerate}
However, as we have mentioned in Remark~\ref{rem:diffFromHSC16_1}, a homset of $\Kl(\giry)$ does not necessarily carries a complete lattice, and hence $T=\giry$ does not satisfy
($\ddagger$) above.
\end{rem}

\begin{exa}\label{example:giryNotDCpo}
A homset of $\Kl(\giry)$ is not necessarily a dcpo; here is a counterexample.

Let $\sigalg_{[0,1]}$ be the $\sigma$-algebra over the unit interval $[0,1]$ consisting of the Borel sets (see~\cite{ashD2000probability} for example). 
It is known that there exists $V\subseteq [0,1]$ such that $V\notin\sigalg_{[0,1]}$
(see~\cite{herrlich06axiomofchoice} for example).
It is easy to see that 
$\giry(1,\pow 1)\cong([0,1],\sigalg_{[0,1]})$.

We define a set $\mathfrak{A}_V\subseteq \Kl(\giry)\bigl(([0,1],\sigalg_{[0,1]}),(1,\pow 1)\bigr)$ of 
Kleisli arrows by 
\[
\mathfrak{A}_V\;:=\;
\bigl\{\chi_X:([0,1],\sigalg_{[0,1]})\to\giry(1,\pow 1)
\,\mid\,
X\in \sigalg_{[0,1]}\;\text{and}\; X\subseteq V \bigr\}\,.
\]
%
Here $\chi_X$ denotes 
the characteristic function of $X$, that is,
$\chi_X(x)=1$ if $x\in X$ and $\chi_X(x)=0$ otherwise.
Note that a Kleisli arrow $\chi_X:([0,1],\sigalg_{[0,1]})\to\giry(1,\pow 1)$ 
is 
a measurable function $\chi_X:([0,1],\sigalg_{[0,1]})\to([0,1],\sigalg_{[0,1]})$\,.
It is easy to see that $\chi_X\sqsubseteq \chi_{X'}$ if and only if $X\subseteq X'$.

Assume that $\Kl(\giry)(([0,1],\sigalg_{[0,1]}),(1,\pow 1))$ is a dcpo.
Then there exists the least upper bound 
$\bigsqcup_{X\subseteq V}\chi_X:([0,1],\sigalg_{[0,1]})\to\giry(1,\pow 1)$ of $\mathfrak{A}_V$.

Let $V':=(\bigsqcup_{X\subseteq V}\chi_X)^{-1}(\{1\})$.
Then as $\{1\}\in\sigalg_{[0,1]}$ and $\bigsqcup_{X\subseteq V}\chi_X$ is a measurable function,
we have $V'\in\sigalg_{[0,1]}$. 
Moreover as $\bigsqcup_{X\subseteq V}\chi_X$ is an upper bound of $\mathfrak{A}_V$,
we have $V\subseteq V'$.
Therefore by $V\notin \sigalg_{[0,1]}$, there exists $v\in V'$ such that $V\subseteq V'\setminus \{v\}$.
As $\{v\},V'\in\sigalg_{[0,1]}$, we have $V'\setminus \{v\}\in\sigalg_{[0,1]}$.
It is easy to see that $\chi_{V'\setminus\{v\}}$ is an upper bound of $\mathfrak{A}_V$.
This contradicts the fact $\bigsqcup_{X\subseteq V}\chi_X$ is the least upper bound of $\mathfrak{A}_V$.
Hence $\Kl(\giry)\bigl(([0,1],\sigalg_{[0,1]}),(1,\pow 1)\bigr)$ is not a dcpo.
%
\end{exa}

\subsubsection{Codomain Restriction and Codomain Join}\label{subsubsec:codomRJ}
The other two notions we describe in Section~\ref{subsec:catPrelim} are codomain restriction and codomain join of Kleisli arrows in $\Kl(T)$. 
The latter combines two arrows $f_1:X\kto Y_1$ and $f_2:X\kto Y_2$ into $f:X\kto Y_1+Y_2$ while
the former does the inverse. We note that the former operation is not necessarily total (see Example~\ref{exa:codomJR}).
It is known that if the monad $T$ satisfies a certain condition, then its Kleisli category comes with the two operations.

\begin{defC}[\cite{cirstea13fromBranching,jacobs10trace}]\label{def:PAM}
Let $\mathbb{C}$ be a category with an initial object $0$, binary products and binary coproducts.
A monad $T$ on $\mathbb{C}$ is called a \emph{partially additive monad} if it satisfies the following conditions:
\begin{enumerate}
\item\label{item:def:PAM1}
The object $T0$ is a final object in $\mathbb{C}$.\footnote{This implies that $0$ is both an initial and final object in $\Kl(T)$. Such $0$ is called a \emph{zero object}.}

\item\label{item:def:PAM2}
Let $X_1,X_2\in\mathbb{C}$. We define $p_1:X_1+X_2\to TX_1$ and $p_2:X_1+X_2\to TX_2$ 
by $p_1:=[\eta_{X_1},\bot_{X_2,X_1}]$ and 
$p_2:=[\bot_{X_1,X_2},\eta_{X_2}]$.
Here for each $X,Y\in\mathbb{C}$, 
$\bot_{X,Y}:X\to TY$ is given by $\bot_{X,Y}:=X\xrightarrow{\finalarrow_X} T0\xrightarrow{T\initarrow_{Y}} TY$ where
$\finalarrow_X:X\to T0$ and $\initarrow_T:0\to Y$ denote the unique arrows
(see also Remark~\ref{rem:botNoConfusion}). 

We require the following arrow be a monomorphism.
\[
T(X_1+X_2) \xrightarrow{\langle\mu_{X_1}\circ Tp_1,\mu_{X_2}\circ Tp_2 \rangle} TX_1\times T X_2
\]
\end{enumerate}
\end{defC}

\begin{defi}[codomain restriction and codomain join, \cite{cirstea13fromBranching,jacobs10trace}]\label{def:codomJoinRestr}
Let $T$ be a 
partially  additive monad.
Then $\Kl(T)$ comes with two operations on arrows called \emph{codomain restriction} and \emph{codomain join}.
Codomain restriction takes an arrow $g\colon V\to X_1+X_2$ and returns $\codomRestr{g}{X_{i}}\colon V\to X_{i}$ for $i\in\{1,2\}$. 
Here $\codomRestr{g}{X_{i}}$ is defined by
\[
\codomRestr{g}{X_{i}}\colon
V\xrightarrow{g} T(X_1+X_2)\xrightarrow{\langle\mu_{X_1}\circ Tp_1,\mu_{X_2}\circ Tp_2 \rangle} TX_1\times TX_2\xrightarrow{\pi_i}TX_i\,.
\]
Codomain join is a \emph{partial} operation that takes a pair
$g_1:V\to X_1$ and $g_2:V\to X_2$, and returns a (necessarily unique) arrow $\codomJoin{g_1,g_2}\colon V\to X_1+X_2$ such that
\[
\langle g_1,g_2\rangle = \langle\mu_{X_1}\circ Tp_1,\mu_{X_2}\circ Tp_2 \rangle\circ \codomJoin{g_1,g_2}\,.
\] 
The situation is illustrated below.
\[
 \vcenter{\xymatrix@C=6.4em@R=2.7em{
  {T(X_1+X_2)}
     \ar@{>->}[r]^-{\langle\mu_{X_1}\circ Tp_1,\mu_{X_2}\circ Tp_2 \rangle}
  &
  {TX_1\times TX_2}
  \\
 &
  {V}
    \ar[u]_{\langle g_1,g_2\rangle}
        \ar[ul]^{\codomJoin{g_1,g_2}}
}}
\]
\end{defi}

These operations may look
unfamiliar, but $T=\pow,\giry$ are partially additive monads, and
codomain joins and operations are given
by suitably
restricting/joining subsets/distributions. 

\begin{exa}\label{exa:codomJR}
For $T=\pow$, we can define codomain restrictions and codomain joins 
by
\[
\codomRestr{g}{X_{i}}(v)=\{x\in X_i\mid x\in g(v)\}
\qquad\text{and}\qquad
\codomJoin{g_1,g_2}(v)=g_1(v)\cup g_2(v)\,.
\]

For $T=\giry$, the definitions are as follows.
\begin{align*}
\codomRestr{g}{X_{i}}(v)(A)&=g(v)(A),
\qquad \text{and} \\
\codomJoin{g_1,g_2}(v)(A)&=\begin{cases}
g_1(v)(A\cap X_1)+g_2(v)(A\cap X_2) & (g_1(v)(A\cap X_1)+g_2(v)(A\cap X_2)\leq 1) \\
\text{undefined} & (\text{otherwise})\,.
\end{cases}
\end{align*}
Note that codomain join for $T=\giry$ is a partial operation in the sense that
it is not always defined.
\end{exa}

\begin{rem}\label{rem:botNoConfusion}
Let $T$ be a partially additive monad such that $\Kl(T)$ is a $\Cppo$-enriched category.
We have used the same symbol $\bot_{X,Y}$ for (1) the least element in a homset of a $\Cppo$-enriched category 
(Definition~\ref{def:CppoEnriched})
and (2) the arrow $T\initarrow_Y\circ \finalarrow_X$ in the Kleisli category of a partially additive monad.
In the next section we assume that composition in $\Kl(T)$ is left-strict, i.e.\ $\bot_{X,Y}\odot g=\bot_{Z,Y}$ for each $g:Z\kto Y$, where $\bot_{X,Y}$ and $\bot_{Z,Y}$ refer to the least arrows
(see Theorem~\ref{thm:soundnessFwdFairBuechi}).
It is easy to see that under this assumption, the arrow $T\initarrow_Y\circ \finalarrow_X$ coincides with the least element in
$\Kl(T)(X,Y)$. This justifies  overriding  the symbol $\bot$. 
\end{rem}

We conclude this section with some properties of codomain restrictions and joins.
They are proved by easy diagram-chasing.

\begin{lem}\label{lem:propertyCodomRJ}
Let $T$ be a partially additive monad.
We have the following:
\begin{enumerate}
\item\label{item:def:codomJoinRestr1}
Codomain restrictions and codomain joins are partially mutually inverse, in the following sense. 
Given $g\colon V\to X_1+X_2$, the
codomain join $\codomJoin{\codomRestr{g}{X_{1}},\codomRestr{g}{X_{2}}}$
	    is always \linebreak
	    defined and equal to $g$. Conversely, provided
	    that $\codomJoin{g_1,g_2}$ is defined,
	    we have
	    $\codomRestr{(\codomJoin{g_1,g_2})}{X_{i}}$ $=g_{i}$ for $i\in\{1,2\}$.

\item\label{item:lem:propertyCodomRJ1}
For  $f:W\to V$, $g_1:V\to X_1$, $g_2:V\to X_2$, $h_1:X_1\to Y_1$ and $h_2:X_2\to Y_2$ such that
$\codomJoin{g_1,g_2}$ is defined, we have:
\[
\codomJoin{g_1,g_2}\odot f=\codomJoin{g_1\circ f,g_2\circ f}\quad\text{and}\quad
(h_1+h_2)\circ\codomJoin{g_1,g_2}=\codomJoin{h_1\circ g_1,h_2\circ g_2}\,.
\]

\item\label{item:lem:propertyCodomRJ2}
For $g:V\to X$, $\codomJoin{g,\bot_{V,X}}$ and $\codomJoin{\bot_{V,X},g}$ are always defined and we have
\[
[\id_X,\id_X]\odot\codomJoin{g,\bot_{V,X}}=[\id_X,\id_X]\odot\codomJoin{\bot_{V,X},g}=g\,.
\tag*{\qEd}
\]
\end{enumerate}
\end{lem}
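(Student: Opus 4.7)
The plan is to derive all three items from the defining universal property of codomain join in Definition~\ref{def:codomJoinRestr}: $\codomJoin{g_{1},g_{2}}$ is characterized as the unique arrow factoring $\langle g_{1},g_{2}\rangle$ through the monomorphism $m:=\langle\mu_{X_{1}}\circ Tp_{1},\mu_{X_{2}}\circ Tp_{2}\rangle$, where uniqueness comes from the monicity assumed in Definition~\ref{def:PAM}. Each item then reduces to a short equational calculation leveraging this universal property.

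For item~(\ref{item:def:codomJoinRestr1}), the observation is that by the definition of codomain restriction, $\langle\codomRestr{g}{X_{1}},\codomRestr{g}{X_{2}}\rangle$ is precisely $m\circ g$, so $g$ itself exhibits a factorization of this pair through $m$; hence $\codomJoin{\codomRestr{g}{X_{1}},\codomRestr{g}{X_{2}}}$ is defined and equals $g$. Conversely, whenever $\codomJoin{g_{1},g_{2}}$ is defined, the defining equation $m\circ\codomJoin{g_{1},g_{2}}=\langle g_{1},g_{2}\rangle$ immediately yields $\codomRestr{(\codomJoin{g_{1},g_{2}})}{X_{i}}=g_{i}$ for $i\in\{1,2\}$ by the definition of codomain restriction.

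For item~(\ref{item:lem:propertyCodomRJ1}), both equations are proved by showing that the left-hand side factors the required pair through $m$ and then invoking uniqueness of the codomain join. The precomposition case is immediate: chasing $m$ through $\codomJoin{g_{1},g_{2}}\odot f$ using associativity and the defining equation of $\codomJoin{g_{1},g_{2}}$ gives $\langle g_{1}\odot f,g_{2}\odot f\rangle$. The postcomposition case requires an auxiliary compatibility between $m$ and $h_{1}+h_{2}$ built from naturality of $\mu$ together with the behaviour of $Tp_{i}\circ T\kappa_{j}$ (namely $T\eta$ when $i=j$ and $T\bot$ otherwise); this compatibility step is where I expect the main, though still routine, diagram-chasing effort.

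For item~(\ref{item:lem:propertyCodomRJ2}), I propose to show the stronger statements $\codomJoin{g,\bot_{V,X}}=\kappa_{1}\odot g$ and $\codomJoin{\bot_{V,X},g}=\kappa_{2}\odot g$, which simultaneously establish definedness and allow computing the composite with the codiagonal. Each factorization claim is verified by a short computation of $m\circ(\kappa_{i}\odot g)$ using the definitions of $p_{1},p_{2}$ in Definition~\ref{def:PAM} and left-strictness of Kleisli composition (Remark~\ref{rem:botNoConfusion}), so that composing with the opposite coprojection collapses one component to $\bot$. The target equation then follows directly: $[\id_{X},\id_{X}]\odot\codomJoin{g,\bot_{V,X}}=[\id_{X},\id_{X}]\odot\kappa_{1}\odot g=g$, using the standard Kleisli identity $[\id_{X},\id_{X}]\odot\kappa_{i}=\id_{X}$, and symmetrically for $\codomJoin{\bot_{V,X},g}$.
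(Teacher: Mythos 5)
Your proof is correct and is essentially the diagram-chasing argument the paper alludes to but omits (the paper only remarks that these facts "are proved by easy diagram-chasing"): all three items reduce to the universal property of $\codomJoin{-,-}$ as the unique factorization through the monomorphism $\langle\mu_{X_1}\circ Tp_1,\mu_{X_2}\circ Tp_2\rangle$, exactly as you do. The only cosmetic point is that in item~(3) the identity $\bot_{X,X}\odot g=\bot_{V,X}$ for the zero arrows of Definition~\ref{def:PAM} already follows from finality of $T0$ (naturality of $\mu$ plus uniqueness of arrows into $T0$), so no enrichment or strictness assumption beyond partial additivity is needed.
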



\subsection{Coalgebraic Fair Simulation with Dividing}\label{subsec:coalgFairSimWithDividing}
%
We make the following requirements in this section
so that our definitions will make sense.
\begin{asm}
\label{asm:wellDefinedSimulation}
In this section
we
assume the following conditions on $T$ and $F$ on $\C$. 
\begin{enumerate}\setcounter{enumi}{\value{asmenumi}}


\item\label{item:asm:wellDefinedTrace2}
The functor $F$ has a final coalgebra $\zeta:Z\iso FZ$ in
      $\mathbb{C}$.

\item\label{item:asm:wellDefinedTrace3}
The functor $F:\mathbb{C}\to\mathbb{C}$ lifts to
$\overline{F}:\Kl(T)\to\Kl(T)$ (see Definition~\ref{def:liftFunct}).

      
\item\label{item:asm:wellDefinedTrace4}
The Kleisli category $\Kl(T)$ and the lifting $\overline{F}:\Kl(T)\to\Kl(T)$ of $F$ are both \linebreak
$\Cppo$-enriched (Definition~\ref{def:CppoEnriched}).



%
%

\item\label{item:asm:wellDefinedSimulation1}
The monad $T$ is a partially additive monad.
Moreover 
the codomain join is downward closed.
That is,
for $f_{i}, g_{i}\colon V\kto X_{i}$  such that
       $f_{i}\sqsubseteq g_{i}$ for each $i\in I$, if
 $\codomJoin{g_{i}}_{i\in I}
	    $ is defined, then 
       so is $\codomJoin{f_{i}}_{i\in I}
	    $. 

%
     
\item\label{item:asmthm:soundnessFwdFairBuechi2}
Codomain restriction $\codomRestr{(\place)}{X_i}$, codomain join $\codomJoin{\place,\place}$ 
and cotupling $[\place,\place]$ of Kleisli arrows are all monotone 
with respect to 
the order 
$\sqsubseteq$.
     
\setcounter{asmenumi}{\value{enumi}}
\end{enumerate}
\end{asm}
%
%
%
%
Note that by Definition~\ref{def:codomJoinRestr},
 Condition~(\ref{item:asm:wellDefinedSimulation1}) implies that $\Kl(T)$ comes with codomain restrictions and joins.

With the help of codomain restrictions/joins we define a
categorical fair simulation.

\begin{defi}[(forward) fair simulation with dividing] 
\label{def:fwdFairBuechiSimWithDiv}
Let $T$ and $F$ be subject to
Assumption~\ref{asm:wellDefinedSimulation};
\begin{math}
  \X= 
  \bigl(
   (X_{1},X_{2}),
   c,
   s
\bigr)
\end{math}
and
\begin{math}
  \mathcal{Y}=
  \bigl(
   (Y_{1},Y_{2}),
    d,
    t
 \bigr)
 \end{math} 
be B\"uchi $(T,F)$-systems; and 
 $\overline{\alpha}$ be an ordinal.
A \emph{(forward, $\overline{\alpha}$-bounded) fair simulation with dividing} from
 $\X$ to $\mathcal{Y}$ is 
 an arrow $f:Y\kto X$ 
in $\Kl(T)$
 subject to the following conditions. 
 Below, for simplicity,  
 a domain-and-codomain restriction
 $\codomRestr{(f\odot \kappa_j)}{X_{i}}\colon Y_{j}\kto X_{i}$
 (Definition~\ref{def:codomJoinRestr}) shall be denoted by $f_{ji}$;
 and we refer to $f_{11},f_{12},f_{21},f_{22}$ as \emph{components}
 of a fair simulation $f$. 
 \begin{enumerate}

  \item\label{item:f21Andf22}
  The arrow 
  $f:Y\kto X$ 
  is a forward  simulation from $\mathcal{X}$ to $\mathcal{Y}$
  in the sense of~\cite{hasuo06genericforward} (see also
       Table~\ref{table:knownResultsOnCoalgTraceAndSimulation}(c)). 
       That is: $c\odot f\sqsubseteq \oF f\odot d$ and
       $s\sqsubseteq f\odot t$.

  \item \label{item:f11Andf12} 
  The components $f_{11}\colon Y_{1}\kto
	X_{1}$ and $f_{12}\colon Y_{1}\kto X_{2}$ come
	with a
	 \emph{dividing} $d_{11},d_{12}$ of the component $d_1\colon Y_{1}\kto \overline{F}Y$ of $d$,  and 
	 \emph{approximation sequences}.
	The former is a pair $d_{11},d_{12}:Y_1\kto \overline{F}Y$ 
	such that
	$[\id_{\overline{F}Y},\id_{\overline{F}Y}]\odot\codomJoin{d_{11},d_{12}}=d_1$.
    The latter are (possibly
	transfinite) increasing sequences 
	of length $\overline{\alpha}$:
       \[
\begin{array}{ll}
         f_{11}^{(0)} \sqsubseteq
        f_{11}^{(1)} \sqsubseteq
        \cdots
	\sqsubseteq
	f_{11}^{(\overline{\alpha})} 
	\,\colon
	Y_{1}\kto X_{1},
	\quad\text{and}\quad
        &
        f_{12}^{(0)} \sqsubseteq
        f_{12}^{(1)} \sqsubseteq
        \cdots 
	\sqsubseteq
	f_{12}^{(\overline{\alpha})} 
	\,\colon
	Y_{1}\kto X_{2},
	\quad\text{such that}
\end{array}       
\]
\begin{enumerate}

 \item\label{item:f11Andf12_2}
 \textbf{(Approximate $f_{11}$ and $f_{12}$)} 
       We have
       $f_{11}^{(\overline{\alpha})}=f_{11}$ and
       $f_{12}^{(\overline{\alpha})}=f_{12}$.
 \item\label{item:f11Andf12_3}
 \textbf{($f_{11}^{(\alpha)}$)} 
       For each ordinal $\alpha$ such that $\alpha\le\overline{\alpha}$,
       the inequality~(\ref{eq:f11PostFixedPoint}) below holds.
       Note that the required
       codomain joins 
	 do exist.

 \item\label{item:f11Andf12_4}
 \textbf{($f_{12}^{(\alpha)}$, the base case)} 
 For the $0$-th approximant, we have $f_{12}^{(0)}=\bot$.

 \item\label{item:f11Andf12_5}
 \textbf{($f_{12}^{(\alpha)}$, the step case)} 
        For each ordinal $\alpha$ such that $\alpha<\overline{\alpha}$,
      the inequality~(\ref{eq:item:f11Andf12_3}) holds.

 \item \label{item:f11Andf12_6}
 \textbf{($f_{12}^{(\alpha)}$, the limit case)} 
 If $\alpha$ is a limit ordinal, then 
 the supremum $\bigsqcup_{\alpha'<\alpha}f_{12}^{(\alpha')}$ exists and 
       \begin{math}
	f_{12}^{(\alpha)} \sqsubseteq \bigsqcup_{\alpha'<\alpha} f_{12}^{(\alpha')}
       \end{math}. 
\end{enumerate}
 \end{enumerate}
\noindent
 \begin{minipage}{0.45\hsize}
    \begin{equation}\label{eq:f11PostFixedPoint}
  \small
   \vcenter{\xymatrix@C+8em@R=.8em{
   {FY}
       \kar[r]^{\overline{F}\bigl[\,
         \codomJoin{f_{11}^{(\alpha)},f_{12}^{(\alpha)}},\,
	 \codomJoin{f_{21},f_{22}}
        \,\bigr]}
       \ar@{}[rd]|{\sqsupseteq}
   &
   {FX}
   \\
   {Y_{1}}
       \kar[u]^{d_{11}}
       \kar[r]_{f_{11}^{(\alpha)}}
   &
   {X_{1} }
       \kar[u]_{c_{1}}
 }}
  \end{equation}
 \end{minipage}
\qquad
 \begin{minipage}{0.45\hsize}
  	  \begin{equation}\label{eq:item:f11Andf12_3}
	   \small
	   \vcenter{\xymatrix@C+8em@R=.8em{
	   {FY}
	   \kar[r]^{\overline{F}\bigl[\,
	   \codomJoin{f_{11}^{(\alpha)},f_{12}^{(\alpha)}},\,
	   \codomJoin{f_{21},f_{22}}
	   \,\bigr]}
	       \ar@{}[rd]|{\sqsupseteq}
	   &
	   {FX}
	   \\
	   {Y_{1}}
	       \kar[u]^{d_{12}}
	       \kar[r]_{f_{12}^{(\alpha+1)}}
	   &
	   {X_{2}}
	       \kar[u]_{c_{2}}
	}}
\end{equation}
 \end{minipage}
 
\end{defi}
In the definition above, note the direction:
a simulation from $\mathcal{X}$ to $\mathcal{Y}$ has type $Y\kto X$.

\begin{thm}[soundness]
\label{thm:soundnessFwdFairBuechi}
Let $\overline{\alpha}$ be an ordinal.
Assume
Assumption~\ref{asm:wellDefinedSimulation} and
the following.
\begin{enumerate}\setcounter{enumi}{\value{asmenumi}}
\item\label{item:asmthm:soundnessFwdFairBuechi1}
For an arbitrary B\"uchi $(T,F)$-system $\X$, 
the equational system~(\ref{eq:eqSysForBuechiAcceptance}), with $F,
     Z$ replacing $F_{\Sigma}, \myTree_{\Sigma}$, has a
     (necessarily unique) 
     solution $\trB(c_{1}), \trB(c_{2})$. 




\item\label{item:asmthm:soundnessFwdFairBuechi3-2}
The Kleisli composition $\odot$ is 
both \emph{left} and \emph{right-strict}: $\bot\odot f=\bot$ and $f\odot \bot=\bot$.



\item\label{item:asmthm:soundnessFwdFairBuechi5}
For each limit ordinal $\alpha\leq\overline{\alpha}$,
post-composition in $\Kl(T)$ is $\alpha$-continuous, i.e.\ if the supremum $\bigsqcup_{i<\alpha} f_i$ exists then
$\bigsqcup_{i<\omega}(g\odot f_i)$ also exists and 
$g\odot (\bigsqcup_{i<\alpha} f_i)=\bigsqcup_{i<\omega}(g\odot f_i)$.

\setcounter{asmenumi}{\value{enumi}}
\end{enumerate}
Then a fair $\overline{\alpha}$-bounded simulation 
with dividing,
 from one B\"uchi $(T,F)$-system $\X$ to another $\mathcal{Y}$,
witnesses
trace inclusion $\trB(\X )\sqsubseteq
 \trB(\mathcal{Y})\colon 1\to TZ$.
\end{thm}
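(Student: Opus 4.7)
The plan is to reduce the trace inclusion to a component-wise inequality between Kleisli arrows and then prove that inequality by a nested fixed-point argument. Using the initial-state condition $s\sqsubseteq f\odot t$ together with monotonicity of post-composition in $\Kl(T)$, the goal $\trB(\X)\sqsubseteq\trB(\mathcal{Y})$ reduces to $[\trB(c_1),\trB(c_2)]\odot f\sqsubseteq[\trB(d_1),\trB(d_2)]$. Precomposing with the coprojections $\kappa_j$ and invoking $f\odot\kappa_j=\codomJoin{f_{j1},f_{j2}}$ (Lemma~\ref{lem:propertyCodomRJ}), this splits into the two inequalities $g_j\sqsubseteq\trB(d_j)$ for $j\in\{1,2\}$, where $g_j:=[\trB(c_1),\trB(c_2)]\odot\codomJoin{f_{j1},f_{j2}}$.

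For the accepting-side inequality $g_2\sqsubseteq\trB(d_2)$, I would exploit the Bekic presentation of the equational system~(\ref{eq:eqSysForBuechiAcceptance}). Writing $R_j(u_1,u_2):=(J\zeta)^{-1}\odot\oF[u_1,u_2]\odot d_j$ and $u_1^{\star}(u_2):=\mu u_1.\,R_1(u_1,u_2)$, we have $\trB(d_2)=\nu u_2.\,R_2(u_1^{\star}(u_2),u_2)$, and Knaster--Tarski reduces the problem to $g_2\sqsubseteq R_2(u_1^{\star}(g_2),g_2)$. By monotonicity of $R_2$ in its first argument, this is in turn implied by the pair $g_2\sqsubseteq R_2(g_1,g_2)$ and $g_1\sqsubseteq u_1^{\star}(g_2)$. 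The first member of the pair drops out of a direct diagram chase: restrict the forward-simulation condition $c\odot f\sqsubseteq\oF f\odot d$ of Condition~(\ref{item:f21Andf22}) along $\kappa_2$, post-compose with $(J\zeta)^{-1}\odot\oF[\trB(c_1),\trB(c_2)]$, and simplify using the defining equations of $\trB(c_1),\trB(c_2)$ and cotupling laws. Once $g_2\sqsubseteq\trB(d_2)$ is settled, the inequality $g_1\sqsubseteq\trB(d_1)$ follows from the second member of the pair and monotonicity of $u_1^{\star}$: $g_1\sqsubseteq u_1^{\star}(g_2)\sqsubseteq u_1^{\star}(\trB(d_2))=\trB(d_1)$.

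The bulk of the work is the remaining inequality $g_1\sqsubseteq u_1^{\star}(g_2)$. Setting $T(u_1):=R_1(u_1,g_2)$, Assumption~(\ref{item:asmthm:soundnessFwdFairBuechi5}) together with Kleene's theorem identifies $u_1^{\star}(g_2)$ with $\bigsqcup_{\alpha\le\overline{\alpha}}T^{\alpha}(\bot)$. Using the approximation sequences, I would build an ordinal-indexed family $(g_1^{(\alpha)})_{\alpha\le\overline{\alpha}}$ with $g_1^{(\alpha)}\sqsubseteq T^{\alpha}(\bot)$ whose supremum dominates $g_1$. The natural candidate is $g_1^{(\alpha)}:=[\trB(c_1),\trB(c_2)]\odot\codomJoin{f_{11}^{(\alpha)},f_{12}^{(\alpha)}}$; its base case uses $f_{12}^{(0)}=\bot$ (Condition~(\ref{item:f11Andf12_4})) together with left/right-strictness of $\odot$ (Assumption~(\ref{item:asmthm:soundnessFwdFairBuechi3-2})), and its step case combines the invariant-like inequality~(\ref{eq:f11PostFixedPoint}) (post-composed with $\trB(c_1)$, supplying a bound along $d_{11}$) with the ranking-like inequality~(\ref{eq:item:f11Andf12_3}) (post-composed with $\trB(c_2)$, supplying a bound along $d_{12}$), fused through the dividing identity $d_1=[\id,\id]\odot\codomJoin{d_{11},d_{12}}$ to yield a single bound along $d_1$. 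Limit cases are handled using Condition~(\ref{item:f11Andf12_6}) and the continuity assumption.

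The hard part is aligning ordinals in the step case: Condition~(\ref{eq:item:f11Andf12_3}) is genuinely one-step decreasing ($\alpha{+}1$ on the left, $\alpha$ on the right), but Condition~(\ref{eq:f11PostFixedPoint}) uses the same ordinal on both sides---it is an invariant, not a step---so a naive combination only yields $g_1^{(\alpha+1)}\sqsubseteq T(g_1^{(\alpha+1)})$, a post-fixed-point condition that points the wrong way. Resolving this is the crux of the proof; it most likely proceeds by recasting the entire third step as the construction of a progress measure (Definition~\ref{def:progMeas}) for the equational system defining $\trB(\mathcal{Y})$, after which Theorem~\ref{thm:correctnessOfProgMeasEqSys}(\ref{item:soundnessProgressMeas}) directly delivers $g_j\sqsubseteq\trB(d_j)$. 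In the progress-measure framework the invariant content of $f_{11}^{(\alpha)}$ is routed through the $\nu$-variable condition and the ranking content of $f_{12}^{(\alpha)}$ through the $\mu$-step condition, so the apparent ordinal mismatch dissolves into two separately well-typed clauses of the progress-measure definition.
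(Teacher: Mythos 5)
Your skeleton matches the paper's: reduce via $s\sqsubseteq f\odot t$ to the component-wise claim $[\trB(c_1),\trB(c_2)]\odot\codomJoin{f_{j1},f_{j2}}\sqsubseteq\trB(d_j)$ (this is exactly Lemma~\ref{lem:soundnessFwdFairBuechiSim}), handle the $Y_2$-component as a $\nu$-invariant, and isolate the $Y_1$-component as the hard part. You have also correctly diagnosed the crux: combining the invariant-type condition~(\ref{eq:f11PostFixedPoint}) with the ranking-type condition~(\ref{eq:item:f11Andf12_3}) only yields a post-fixed-point inequality for the candidate $g_1^{(\alpha+1)}$, which points the wrong way for a least fixed point. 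But your proposed resolution does not work as stated. In the equational system defining $\trB(\mathcal{Y})$ the sole $\nu$-variable lives over $Y_2$; both $f_{11}^{(\alpha)}$ and $f_{12}^{(\alpha)}$ feed the approximant of the $\mu$-variable over $Y_1$, and the $\mu$-step clause of Definition~\ref{def:progMeas} demands a genuine rank decrease there. Condition~(\ref{eq:f11PostFixedPoint}) supplies no decrease, so there is no way to ``route the invariant content of $f_{11}$ through the $\nu$-variable condition'' of that two-variable system. The mismatch does not dissolve inside the progress-measure framework for $\mathcal{Y}$ alone; it must be paid for with a second, independent source of well-foundedness.

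The missing idea is to import that well-foundedness from the challenger's side. The paper first applies \emph{completeness} of progress measures (Theorem~\ref{thm:correctnessOfProgMeasEqSys}.\ref{item:completenessProgressMeas}) to the equational system defining $\trB(c_1),\trB(c_2)$, obtaining an approximation sequence $u_1(\gamma_1)$ for the $\mu$-solution $\trB(c_1)$ (of length $\le\omega$). It then forms the composites $h_1(\gamma_1,\gamma_2)=u_1(\gamma_1)\odot f_{11}^{(\gamma_2)}$ and $h_2(\gamma_1,\gamma_2)=u_2(\overline{\gamma_1})\odot f_{12}^{(\gamma_2)}$ and verifies that these give a progress measure with a \emph{two-component} prioritized ordinal for an auxiliary \emph{three}-variable equational system~(\ref{eq:eqSysSublemsoundnessFwdFairBuechiSim10}), in which $d_1$ is replaced by its dividing $d_{11},d_{12}$ and which has two $\mu$-variables (Sublemma~\ref{sublem:soundnessFwdFairBuechiSim1}). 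The rank decrease for $h_1$ comes from $\gamma_1$, i.e.\ from the $\mu$-structure of $\trB(c_1)$ itself, so the fact that $f_{11}^{(\gamma_2)}$ is merely invariant in $\gamma_2$ is harmless; the rank decrease for $h_2$ comes from $\gamma_2$, i.e.\ from the approximation sequence of $f_{12}$, so $u_2$ may be used at its full value. A separate argument (Sublemma~\ref{sublem:soundnessFwdFairBuechiSim2}) then relates the solution of the auxiliary system back to $\trB(d_1),\trB(d_2)$. Without invoking completeness on the $\X$-side and without this two-ordinal bookkeeping through the auxiliary system, the step case of your family $g_1^{(\alpha)}$ cannot be closed, so the proof as proposed has a genuine gap precisely at the point you flagged as the crux.
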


This theorem follows immediate from the following lemma.

\begin{lem}\label{lem:soundnessFwdFairBuechiSim}
Assume Assumption~\ref{asm:wellDefinedSimulation} and the assumptions (\ref{item:asmthm:soundnessFwdFairBuechi1})--(\ref{item:asmthm:soundnessFwdFairBuechi5}) in Thm.~\ref{thm:soundnessFwdFairBuechi}.
Let $f:Y\kto X$ be a forward, $\overline{\alpha}$-bounded simulation with dividing from 
\begin{math}
  \X= 
  \bigl(
   (X_{1},X_{2}),
   c,
   s
\bigr)
\end{math}
and
\begin{math}
  \mathcal{Y}=
  \bigl(
   (Y_{1},Y_{2}),
    d,
    t
 \bigr)\,
 \end{math} 
 (Definition~\ref{def:fwdFairBuechiSimWithDiv}).
We define arrows $\trB(c_1):X_1\kto Z$, $\trB(c_2):X_2\kto Z$,
$\trB(d_1):Y_1\kto Z$ and $\trB(d_2):Y_2\kto Z$ as in 
Theorem~\ref{thm:sanityCheckResult}.
Then we have: 
\[
[\trB(c_1),\trB(c_2)]\odot\codomJoin{f_{11},f_{12}}
\sqsubseteq
\trB(d_1),
\quad\text{and}\quad
[\trB(c_1),\trB(c_2)]\odot\codomJoin{f_{21},f_{22}}
\sqsubseteq
\trB(d_2).
\]
\end{lem}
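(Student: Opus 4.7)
The plan is to apply the soundness clause of Theorem~\ref{thm:correctnessOfProgMeasEqSys} to the equational system~(\ref{eq:eqSysForBuechiAcceptance}) for $\mathcal{Y}$, whose solution is $(\trB(d_1),\trB(d_2))$. The target is to exhibit a progress measure whose top approximants for $u_1$ and $u_2$ dominate $g_1 := [\trB(c_1),\trB(c_2)]\odot\codomJoin{f_{11},f_{12}}$ and $g_2 := [\trB(c_1),\trB(c_2)]\odot\codomJoin{f_{21},f_{22}}$ respectively; Theorem~\ref{thm:correctnessOfProgMeasEqSys}(\ref{item:soundnessProgressMeas}) then delivers the two inequalities of the lemma.

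The first ingredient is the joint post-fixed-point property
\begin{displaymath}
g_i \;\sqsubseteq\; (J\zeta)^{-1}\odot\overline{F}[g_1,g_2]\odot d_i \qquad (i=1,2),
\end{displaymath}
obtained by unfolding $\trB(c) = [\trB(c_1),\trB(c_2)]$ through its fixed-point equation from Theorem~\ref{thm:sanityCheckResult}, applying the forward-simulation inequality $c\odot f\sqsubseteq\overline{F}f\odot d$ of Definition~\ref{def:fwdFairBuechiSimWithDiv}(\ref{item:f21Andf22}), and simplifying with Lemma~\ref{lem:propertyCodomRJ}. The $i=2$ instance immediately supplies the $\nu$-variable condition of a progress measure for $p_2:=g_2$. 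For the $\mu$-variable $u_1$, I would further combine the dividing $d_1=[\id,\id]\odot\codomJoin{d_{11},d_{12}}$ with the invariant~(\ref{eq:f11PostFixedPoint}) and the step inequality~(\ref{eq:item:f11Andf12_3}) to derive, for each $\alpha$,
\begin{displaymath}
\trB(c_2)\odot f_{12}^{(\alpha+1)}\;\sqsubseteq\;(J\zeta)^{-1}\odot\overline{F}[g_1^{(\alpha)},g_2]\odot d_{12},
\quad
\trB(c_1)\odot f_{11}^{(\alpha)}\;\sqsubseteq\;(J\zeta)^{-1}\odot\overline{F}[g_1^{(\alpha)},g_2]\odot d_{11},
\end{displaymath}
with $g_1^{(\alpha)} := [\trB(c_1),\trB(c_2)]\odot\codomJoin{f_{11}^{(\alpha)},f_{12}^{(\alpha)}}$; the limit case uses condition~(\ref{item:f11Andf12_6}) together with the $\omega$-continuity of post-composition from Assumption~\ref{asm:wellDefinedSimulation}.

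The principal obstacle is packaging these ingredients into a bona fide progress measure whose base case $p_1(0)=\bot$ holds as required by Definition~\ref{def:progMeas}(\ref{item:progressMeasDefMuVarBaseCase}). The naive choice $p_1(\alpha):=g_1^{(\alpha)}$ fails because $f_{11}^{(0)}$ is not forced to vanish by Definition~\ref{def:fwdFairBuechiSimWithDiv}, and so $g_1^{(0)}$ still carries the non-$\bot$ summand $\trB(c_1)\odot f_{11}^{(0)}$. I expect the remedy to be an interleaving with the Cousot--Cousot approximation $\trB(c_1)=\bigsqcup_{\beta} c_1^{(\beta)}$ of the inner $\mu$-variable in $\mathcal{X}$'s own equational system (which does satisfy $c_1^{(0)}=\bot$): the joint approximant $[c_1^{(\beta)},\trB(c_2)]\odot\codomJoin{f_{11}^{(\alpha)},f_{12}^{(\alpha)}}$ vanishes at the origin, and the bounds above extend with $c_1^{(\beta)}$ replacing $\trB(c_1)$ throughout. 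Folding this two-dimensional indexing into the single $\mu$-index of $\mathcal{Y}$'s equational system---for instance by extending the maximum ordinal to $\bar{\alpha}\cdot\omega$ and defining $p_1$ along a suitable diagonal that increments $\alpha$ and $\beta$ in lockstep---is the delicate bookkeeping step, after which Theorem~\ref{thm:correctnessOfProgMeasEqSys}(\ref{item:soundnessProgressMeas}) concludes.
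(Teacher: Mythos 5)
Your ingredients are the right ones---the post-fixed-point inequality supplying the $\nu$-condition for $g_2$, and the interleaving of the simulation's approximation sequence with a Cousot--Cousot-style approximation $u_1(\beta)$ of $\trB(c_1)$ so that the joint approximant vanishes at the origin by left- and right-strictness---and you have correctly located the crux in the base case. But the proposed resolution, folding the two indices $(\beta,\alpha)$ into the \emph{single} $\mu$-index of $\mathcal{Y}$'s two-variable equational system, does not go through. The two indices are decremented \emph{independently}: the step inequality for $u_1(\beta+1)\odot f_{11}^{(\alpha)}$ (via $d_{11}$ and~(\ref{eq:f11PostFixedPoint})) lowers only $\beta$ and keeps $\alpha$ fixed, while the step inequality for $\trB(c_2)\odot f_{12}^{(\alpha+1)}$ (via $d_{12}$ and~(\ref{eq:item:f11Andf12_3})) lowers only $\alpha$ and places $u_1$ at an essentially arbitrary (possibly top) index on the right. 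Consequently a lockstep diagonal $\delta\mapsto(\delta,\delta)$ bounds $p_1(\delta+1)$ only by approximants at $(\delta,\delta+1)$, which dominate rather than are dominated by $p_1(\delta)$; and a linearization such as $\overline{\alpha}\cdot\omega$ violates Condition~(\ref{item:progressMeasDefMonotonicity}) of Definition~\ref{def:progMeas}, since arrows like $u_1(\beta)\odot f_{11}^{(0)}$ and $\trB(c_2)\odot f_{12}^{(\alpha)}$ end up at linearly comparable indices while being order-incomparable (and, homsets of $\Kl(\giry)$ not being complete lattices, one cannot repair monotonicity by taking suprema over down-sets). There is no single-ordinal progress measure for $\mathcal{Y}$'s system built from these data.

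The paper instead keeps the two indices separate by routing through the auxiliary equational system~(\ref{eq:eqSysSublemsoundnessFwdFairBuechiSim10}), in which $\mathcal{Y}$'s single $\mu$-variable is split---using the dividing $d_1=[\id,\id]\odot\codomJoin{d_{11},d_{12}}$---into \emph{two} $\mu$-variables $h_1,h_2$ plus the $\nu$-variable $h_3$. For that system a prioritized ordinal is a pair $(\gamma_1,\gamma_2)$, and $h_1(\gamma_1,\gamma_2)=u_1(\gamma_1)\odot f_{11}^{(\gamma_2)}$, $h_2(\gamma_1,\gamma_2)=u_2(\overline{\gamma_1})\odot f_{12}^{(\gamma_2)}$, together with $h_3=[u_1(\overline{\gamma_1}),u_2(\overline{\gamma_1})]\odot\codomJoin{f_{21},f_{22}}$, form a genuine progress measure (Sublemma~\ref{sublem:soundnessFwdFairBuechiSim1}): the two separate $\mu$-step clauses of Definition~\ref{def:progMeas} absorb the two independent descents and permit resetting the lower-priority index. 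Soundness then bounds the top approximants by the auxiliary system's solution, and a \emph{second} transfer step (Sublemma~\ref{sublem:soundnessFwdFairBuechiSim2})---itself using completeness for the auxiliary system and soundness for $\mathcal{Y}$'s---shows that this solution is below $(\trB(d_1),\trB(d_2))$. This auxiliary system and the second transfer are precisely the missing content of your ``delicate bookkeeping step''; as written, the argument has a genuine gap there.
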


To prove this lemma we need two sublemmas.

\begin{sublem}
\label{sublem:soundnessFwdFairBuechiSim1}
We assume that $T$ and $F$ satisfy
Assumption~\ref{asm:wellDefinedSimulation} and
the assumptions 
in Theorem~\ref{thm:soundnessFwdFairBuechi}.
We further assume the situation in Definition~\ref{def:fwdFairBuechiSimWithDiv}.
Let $d_{11},d_{12}:X_1\kto \overline{F}X$ be the dividing of $d_1:X_1\kto \overline{F}X$.
Recall that 
$\trB(c_1):X_1\kto Z$ and $\trB(c_2):X_2\kto Z$
are given by the solutions $u_1^\sol$ and $u_2^\sol$ of the following equational system.
(Theorem~\ref{thm:sanityCheckResult}).
	\begin{equation}\label{eq:1702080948}
	 \begin{aligned}
	  u_{1} 
	  &\;=_{\mu}\;
	  (J\zeta)^{-1}
	  \odot
	  \oF[u_{1},u_{2}]
	  \odot
	  c_{1}
	  &\in\Kl(T)(X_1,Z)
	  \\
	  u_{2} 
	  &\;=_{\nu}\;
	  (J\zeta)^{-1}
	  \odot
	  \oF[u_{1},u_{2}]
	  \odot
	  c_{2}
	  	  &\in\Kl(T)(X_2,Z)
	 \end{aligned}
	\end{equation}	
%
By completeness of progress measure (Theorem~\ref{thm:correctnessOfProgMeasEqSys}.\ref{item:completenessProgressMeas}),
there exists a progress measure
\begin{equation*}
 p_{\mathcal{X}}=((\overline{\beta_1}),(u_{1}(\beta_1):X_1\kto Z,u_{2}(\beta_1):X_2\kto Z)_{\beta_1\leq\overline{\beta_1}})
\end{equation*}
for (\ref{eq:1702080948}) 
such that $\overline{\beta_1}\leq\omega$, $u_1(\overline{\beta_1})=\trB(c_1)$ and $u_2=\trB(c_2)$.
We define two ordinals $\overline{\gamma_1}$ and 
$\overline{\gamma_2}$ by $\overline{\gamma_1}=\overline{\beta_1}$ and
$\overline{\gamma_2}=\overline{\alpha}$.
Moreover for each pair of ordinals $\gamma_1\leq \overline{\gamma_1}$ 
and $\gamma_2\leq \overline{\gamma_2}$,
we define three arrows 
$h_1(\gamma_1,\gamma_2):Y_1\kto Z$, 
$h_2(\gamma_1,\gamma_2):Y_1\kto Z$, and
$h_3(\gamma_1,\gamma_2):Y_2\kto Z$ by:
\begin{multline*}
h_1(\gamma_1,\gamma_2)= u_{1}(\gamma_1)\odot f^{(\gamma_2)}_{11},  \quad
h_2(\gamma_1,\gamma_2)= u_{2}(\overline{\gamma_1})\odot f^{(\gamma_2)}_{12}, \\
 \text{and}\quad
h_3(\gamma_1,\gamma_2)= [u_{1}(\overline{\gamma_1}),u_{2}(\overline{\gamma_1})]\odot \codomJoin{f_{21},f_{22}}\,.
\end{multline*}
%

We claim that,
if we let
\begin{equation}\label{eq:probMeasSoundnessFwdFairBuechiSim1}
p:=\bigl((\overline{\gamma_1},\overline{\gamma_2}),
(h_1(\gamma_1,\gamma_2),h_2(\gamma_1,\gamma_2),h_3(\gamma_1,\gamma_2))_{\gamma_1\leq\overline{\gamma_1},\gamma_2\leq\overline{\gamma_2}}\bigr)\,,
\end{equation}
then it is a progress measure for the following equational system.
\begin{equation}
\label{eq:eqSysSublemsoundnessFwdFairBuechiSim10}
 \begin{aligned}
	  h_{1} 
	  &\;=_{\mu}\;
	  (J\zeta)^{-1}
	  \odot
	  \oF\bigl[[\id_{Z},\id_{Z}]\odot\codomJoin{h_{1},h_{2}},h_{3}\bigr]
	  \odot
	  d_{11}
	  &\in\Kl(T)(Y_1,Z)
	  \\ 
	  h_{2} 
	  &\;=_{\mu}\;
	  (J\zeta)^{-1}
	  \odot
	  \oF\bigl[[\id_{Z},\id_{Z}]\odot\codomJoin{h_{1},h_{2}},h_{3}\bigr]
	  \odot
	  d_{12}
	  &	  \in\Kl(T)(Y_1,Z)
	  \\ 
	  h_{3} 
	  &\;=_{\nu}\;
	  (J\zeta)^{-1}
	  \odot
	  \oF\bigl[[\id_{Z},\id_{Z}]\odot\codomJoin{h_{1},h_{2}},h_{3}\bigr]
	  \odot
	  d_{2}
	  &	  \in\Kl(T)(Y_2,Z)
 \end{aligned}
\end{equation}
 Note here that if $h_1,h_2\in\Kl(T)(Y_1,Z)$ satisfy the equations above,
 then their codomain join $\codomJoin{h_1,h_2}$ is always defined.
\end{sublem}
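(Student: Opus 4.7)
The plan is to verify, one by one, the five conditions of Definition~\ref{def:progMeas} for the tuple $p$ in~(\ref{eq:probMeasSoundnessFwdFairBuechiSim1}) against the equational system~(\ref{eq:eqSysSublemsoundnessFwdFairBuechiSim10}). Here the prioritized ordinals are the pairs $(\gamma_1,\gamma_2)$ corresponding to the two $\mu$-variables $h_1,h_2$, and $h_3$ is the only $\nu$-variable. Each required inequality will be obtained by chaining two kinds of ingredients: inequalities provided by the hypothesis that $p_\X$ is a progress measure for~(\ref{eq:1702080948}), and the simulation inequalities from Definition~\ref{def:fwdFairBuechiSimWithDiv}---namely~(\ref{eq:f11PostFixedPoint}), (\ref{eq:item:f11Andf12_3}), the base/step/limit conditions~(\ref{item:f11Andf12_4})--(\ref{item:f11Andf12_6}), and the overall forward-simulation inequality $c\odot f\sqsubseteq \overline{F}f\odot d$ restricted via the coprojection $\kappa_2\colon Y_2\hookrightarrow Y$.

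A preliminary observation simplifies the bookkeeping: because $u_2$ is the only $\nu$-variable of~(\ref{eq:1702080948}) and~(\ref{eq:1702080948}) has a single $\mu$-variable, the truncated order $\leq_2$ for $p_\X$ is trivial, so condition~(\ref{item:progressMeasDefMonotonicity}) forces $u_2(\beta_1)=u_2(\overline{\beta_1})=\trB(c_2)$ uniformly in $\beta_1$. Consequently $h_2(\gamma_1,\gamma_2)$ and $h_3(\gamma_1,\gamma_2)$ do not depend on $\gamma_1$. Condition~(\ref{item:progressMeasDefMonotonicity}) for $p$ is then discharged by monotonicity of Kleisli composition, cotupling, and codomain join applied to the increasing sequences $u_1(\cdot)$, $f_{11}^{(\cdot)}$, $f_{12}^{(\cdot)}$. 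The $\mu$-base clauses~(\ref{item:progressMeasDefMuVarBaseCase}) follow from $u_1(0)=\bot$ and $f_{12}^{(0)}=\bot$ together with left- and right-strictness of $\odot$ (Theorem~\ref{thm:soundnessFwdFairBuechi}(\ref{item:asmthm:soundnessFwdFairBuechi3-2})); the $\mu$-limit clauses~(\ref{item:progressMeasDefMuVarLimitCase}) follow from the limit inequalities for $u_1$ and $f_{12}^{(\cdot)}$ after interchanging the supremum with composition, using $\omega$-continuity from $\Cppo$-enrichment for $h_1$ (since $\overline{\gamma_1}\le\omega$) and the additional $\alpha$-continuity provided by Theorem~\ref{thm:soundnessFwdFairBuechi}(\ref{item:asmthm:soundnessFwdFairBuechi5}) for $h_2$ when $\overline{\gamma_2}>\omega$.

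The three fixed-point clauses form the substantive part. For the $\mu$-step case of $h_1$ no $\beta$-choice is needed (since $i_a=1$); I chain $u_1(\gamma_1+1)\sqsubseteq (J\zeta)^{-1}\odot \overline{F}[u_1(\gamma_1),u_2(\gamma_1)]\odot c_1$ with~(\ref{eq:f11PostFixedPoint}), then use Lemma~\ref{lem:propertyCodomRJ} to rewrite the intermediate composite $[u_1(\gamma_1),u_2(\gamma_1)]\odot\codomJoin{f_{11}^{(\gamma_2)},f_{12}^{(\gamma_2)}}$ as $[\id,\id]\odot\codomJoin{h_1(\gamma_1,\gamma_2),h_2(\gamma_1,\gamma_2)}$, and bound $[u_1(\gamma_1),u_2(\gamma_1)]\odot\codomJoin{f_{21},f_{22}}$ above by $h_3(\gamma_1,\gamma_2)$ using $u_1(\gamma_1)\sqsubseteq u_1(\overline{\gamma_1})$ and monotonicity of cotupling. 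For the $\mu$-step case of $h_2$ the witness ordinal $\beta_1=\overline{\gamma_1}$ is crucial: it sets $u_1(\beta_1)=\trB(c_1)$ so that, after invoking the fixed-point equation $\trB(c_2)=(J\zeta)^{-1}\odot\overline{F}[\trB(c_1),\trB(c_2)]\odot c_2$ together with~(\ref{eq:item:f11Andf12_3}), the whole composite factors through $[\trB(c_1),\trB(c_2)]$ as required. The $\nu$-condition on $h_3$ is handled analogously with $(\beta_1,\beta_2)=(\overline{\gamma_1},\overline{\gamma_2})$, combining the fixed-point equation for $[\trB(c_1),\trB(c_2)]$ with the restriction of $c\odot f\sqsubseteq \overline{F}f\odot d$ along $\kappa_2$, namely $c\odot\codomJoin{f_{21},f_{22}}\sqsubseteq \overline{F}f\odot d_2$.

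The main obstacle is not conceptual but combinatorial: repeatedly reshaping composites of the form $[u_1(\cdot),u_2(\cdot)]\odot[\codomJoin{f_{11}^{(\cdot)},f_{12}^{(\cdot)}},\codomJoin{f_{21},f_{22}}]$ into the shape $[[\id,\id]\odot\codomJoin{h_1,h_2},h_3]$ dictated by the right-hand sides of $f_1,f_2,f_3$. This is a routine diagram chase that keeps invoking Lemma~\ref{lem:propertyCodomRJ}, the identity $[u_1,u_2]=[\id,\id]\odot(u_1+u_2)$, and monotonicity of cotupling and codomain join from Assumption~\ref{asm:wellDefinedSimulation}; well-definedness of every required codomain join is ensured by the downward closure stipulated in Assumption~\ref{asm:wellDefinedSimulation}(\ref{item:asm:wellDefinedSimulation1}).
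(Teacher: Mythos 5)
Your proposal is correct and follows essentially the same route as the paper's proof: it checks the five axioms of Definition~\ref{def:progMeas} one by one, using left/right-strictness for the base cases, $\Cppo$-enrichment (for pre-composition) and the $\alpha$-continuity of post-composition from Theorem~\ref{thm:soundnessFwdFairBuechi} for the limit cases, and chaining the progress-measure inequalities for $p_{\mathcal{X}}$ with~(\ref{eq:f11PostFixedPoint}), (\ref{eq:item:f11Andf12_3}) and the forward-simulation inequality restricted along $\kappa_2$, with the same witness ordinal $\beta_1=\overline{\gamma_1}$ in the $h_2$-step and $\nu$-clauses. The only (harmless) addition is your preliminary observation that the trivial truncated order forces $u_2(\beta_1)$ to be constant, which the paper achieves instead by hard-wiring $u_2(\overline{\gamma_1})$ into the definitions of $h_2$ and $h_3$.
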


\begin{figure}
  \[
\underbrace{\!\!\!\!\!\!\xymatrix@C+2em@R=1.4em{
   {FY}
       \kar[rr]^{\overline{F}\bigl[
         \codomJoin{f_{11}^{(\gamma_2)},f_{12}^{(\gamma_2)}},
	 \codomJoin{f_{21},f_{22}}
        \bigr]}
       \ar@{}[rrd]|{\sqsupseteq}
   & &
   {FX}
       \kar[r]^{\text{\raisebox{2mm}{$
       \overline{F}\bigl[u_{1}(\gamma_1),u_{2}(\gamma_1)
        \bigr]$}}}
       \ar@{}[rd]|{\sqsupseteq}
   &
   {FZ}
   \\
   {Y_1}
       \kar[u]_{d_{11}}
       \kar[rr]_{f_{11}^{(\gamma_2)}}
   & &
   {X_1}
       \kar[u]_{c_{1}}
       \kar[r]_{u_{1}(\gamma_1+1)}
   &
   {Z}
       \kar[u]^{J\zeta}_{\cong}
 }\!\!\!\!\!\!}_{h_1(\gamma_1,\gamma_2)}
    \qquad \underbrace{\!\!\!\!\!\!\xymatrix@C+2em@R=1.4em{
   {FY}
       \kar[rr]^{\overline{F}\bigl[
         \codomJoin{f_{11}^{(\gamma_2)},f_{12}^{(\gamma_2)}},
	 \codomJoin{f_{21},f_{22}}
        \bigr]}
       \ar@{}[rrd]|{\sqsupseteq}
   & &
   {FX}
       \kar[r]^{\text{\raisebox{2mm}{$
       \overline{F}\bigl[u_{1}(\overline{\gamma_1}),u_{2}(\overline{\gamma_1})
        \bigr]$}}}
       \ar@{}[rd]|{\sqsupseteq}
   &
   {FZ}
   \\
   {Y_1}
       \kar[u]_{d_{12}}
       \kar[rr]_{f_{12}^{(\gamma_2+1)}}
   & &
   {X_2}
       \kar[u]_{c_{2}}
       \kar[r]_{u_{2}(\overline{\gamma_1})}
   &
   {Z}
       \kar[u]^{J\zeta}_{\cong}
 }\!\!\!\!\!\!}_{h_2(\gamma_1,\gamma_2)}
\]
\\
\[
    \underbrace{\!\!\!\!\!\!\xymatrix@C+3.3em@R=1.4em{
   {FY}
       \kar[rr]^{\overline{F}\bigl[
         \codomJoin{f_{11}^{(\overline{\gamma_2})},f_{12}^{(\overline{\gamma_2})}},
	 \codomJoin{f_{21},f_{22}}
        \bigr]}
       \ar@{}[rrd]|{\sqsupseteq}
   & &
   {FX}
          \kar[r]^{\overline{F}\bigl[u_{1}(\overline{\gamma_1}),u_{2}(\overline{\gamma_1})
        \bigr]}
       \ar@{}[rd]|{\sqsupseteq}
   &
   {FZ}
   \\
   {Y_2}
       \kar[u]_{d_{2}}
       \kar[rr]_{\codomJoin{f_{21},f_{22}}}
   & &
   {X}
       \kar[u]_{c}
       \kar[r]_{[u_{1}(\overline{\gamma_1}),u_{2}(\overline{\gamma_1})]}
   &
   {Z}
       \kar[u]^{J\zeta}_{\cong}
 }\!\!\!\!\!\!}_{h_3(\gamma_1,\gamma_2)}
\]
\caption{The progress measure $p$ in (\ref{eq:probMeasSoundnessFwdFairBuechiSim1}) as diagrams.}
\label{fig:probMeasSoundnessFwdFairBuechiSim1}
\end{figure}

\begin{proof}
We check that
$p$ in (\ref{eq:probMeasSoundnessFwdFairBuechiSim1})
satisfies the axioms of progress measure (Definition~\ref{def:progMeas}). 
See also Figure~\ref{fig:probMeasSoundnessFwdFairBuechiSim1}.
\begin{enumerate}
\item{\bf(Monotonicity)}
We assume $\gamma_1\leq\gamma'_1$ and $\gamma_2\leq\gamma'_2$. 
Then 
 by 
 Assumption~\ref{asm:wellDefinedSimulation}.\ref{item:asm:wellDefinedTrace4} and that 
 $(f_{11}^{(\alpha)})_{\alpha\leq\overline{\alpha}}$ and $(f_{12}^{(\alpha)})_{\alpha\leq\overline{\alpha}}$
 are increasing sequence, we have:
\[
h_1(\gamma_1,\gamma_2)
\,=\,u_{1}(\gamma_1)\odot f^{(\gamma_2)}_{11}
\,\sqsubseteq\, u_{1}(\gamma'_1)\odot f^{(\gamma'_2)}_{11}
\,=\,h_1(\gamma'_1,\gamma'_2)\,.
\]
Hence monotonicity of $h_1$ is proved. Monotonicity of $h_2$ and $h_3$ are proved similarly.

\item{\bf($\mu$-variables, base case)}
By Condition~(\ref{item:f11Andf12_4}) in Definition~\ref{def:fwdFairBuechiSimWithDiv} and 
Condition~(\ref{item:asmthm:soundnessFwdFairBuechi3-2}) in Theorem~\ref{thm:soundnessFwdFairBuechi},
we have:
\begin{equation*}
h_1(0,\gamma_2) =u_{1}(0)\odot f^{(\gamma_2)}_{11}=\bot\odot f^{(\gamma_2)}_{11}=\bot\, \quad\text{and} \quad
h_2(\gamma_1,0)=u_2(\overline{\gamma_1})\odot f^{(0)}_{12}=u_2\odot\bot =\bot\,.
\end{equation*}

\item{\bf($\mu$-variables, step case)}
Let $\gamma_1\leq\overline{\gamma_1}$ and $\gamma_2\leq\overline{\gamma_2}$.
%
We have the following (see also Figure~\ref{fig:probMeasSoundnessFwdFairBuechiSim1}).
\allowdisplaybreaks[4]
\begin{align*}
&h_1(\gamma_1+1,\gamma_2) \\
&= u_{1}(\gamma_1+1)\odot f^{(\gamma_2)}_{11} \\ 
&\sqsubseteq (J\zeta)^{-1}\odot \overline{F}[u_{1}(\gamma_1),u_2(\gamma_1)]\odot c_1\odot f^{(\gamma_2)}_{11} \\ 
&\sqsubseteq (J\zeta)^{-1}\odot \overline{F}[u_{1}(\gamma_1),u_2(\overline{\gamma_1})]\odot c_1\odot f^{(\gamma_2)}_{11} \\ 
&\sqsubseteq (J\zeta)^{-1}\odot \overline{F}[u_{1}(\gamma_1),u_2(\overline{\gamma_1})]\odot 
\overline{F}\bigl[\codomJoin{f_{11}^{(\gamma_2)},f_{12}^{(\gamma_2)}},\codomJoin{f_{21},f_{22}}\bigr]
\odot d_{11} \\ 
&= (J\zeta)^{-1}\odot\overline{F}\bigl[[u_{1}(\gamma_1),u_2(\overline{\gamma_1})]\odot\codomJoin{f_{11}^{(\gamma_2)}, f_{12}^{(\gamma_2)}},
[u_{1}(\gamma_1),u_2(\overline{\gamma_1})]\odot\codomJoin{f_{21},f_{22}}\bigr]\odot d_{11}  \\
&= (J\zeta)^{-1}\odot
\overline{F}\bigl[[\id_Z,\id_Z]\odot\codomJoin{u_{1}(\gamma_1)\odot f_{11}^{(\gamma_2)},u_2(\overline{\gamma_1})\odot f_{12}^{(\gamma_2)}}, 
\\
&\qquad\qquad\qquad\qquad\qquad\qquad\qquad\qquad
[u_{1}(\gamma_1),u_2(\overline{\gamma_1})]\odot\codomJoin{f_{21},f_{22}}\bigr]\odot d_{11}  \\
&\sqsubseteq (J\zeta)^{-1}\odot
\overline{F}\bigl[[\id_Z,\id_Z]\odot\codomJoin{u_{1}(\gamma_1)\odot f_{11}^{(\gamma_2)}, u_2(\overline{\gamma_1})\odot f_{12}^{(\gamma_2)}},
\\
&\qquad\qquad\qquad\qquad\qquad\qquad\qquad\qquad
[u_{1}(\overline{\gamma_1}),u_2(\overline{\gamma_1})]\odot\codomJoin{f_{21},f_{22}}\bigr]\odot d_{11}  \\
&= (J\zeta)^{-1}\odot
\overline{F}\bigl[[\id_Z,\id_Z]\odot
\codomJoin{h_1(\gamma_1,\gamma_2), h_2(\gamma_1,\gamma_2)},h_3(\gamma_1,\gamma_2)\bigr]\odot d_{11}\,.  
\end{align*}
%
%
We can prove in a similar manner that there exists an ordinal $\gamma'_1$ such that
\[
h_2(\gamma_1,\gamma_2+1)\sqsubseteq\overline{F}\bigl[\codomJoin{h_1(\gamma'_1,\gamma_2),h_2(\gamma_1,\gamma_2)},h_3(\gamma_1,\gamma_2)\bigr]\odot d_{12}\,.
\]

\item{\bf($\mu$-variables, limit case)}
Let $\gamma_1$  be a limit ordinal such that $\gamma_1\leq\overline{\gamma_1}$.
By $\overline{\gamma_1}=\overline{\beta_1}\leq\omega$ and 
Assumption~\ref{asm:wellDefinedSimulation}.\ref{item:asm:wellDefinedTrace4},
Kleisli composition in $\Kl(T)$
is $\gamma_1$-continuous.
Hence for 
each 
ordinal $\gamma_2$, we have:
\begin{align*}
h_1(\gamma_1,\gamma_2) 
\,=\, u_{1}(\gamma_1)\odot f^{(\gamma_2)}_{11} 
&\,\sqsubseteq\,\bigl(\bigsqcup_{\gamma'_1<\gamma_1} u_{1}(\gamma'_1)\bigr)\odot f^{(\gamma_2)}_{11} \\
&\,=\,\bigsqcup_{\gamma'_1<\gamma_1}\bigl( u_{1}(\gamma'_1)\odot f^{(\gamma_2)}_{11}\bigr) 
\,=\,\bigsqcup_{\gamma'_1<\gamma_1}h_1(\gamma'_1,\gamma_2)\,. 
\end{align*}
In a similar manner we can prove that 
for an ordinal $\gamma_1$ and a limit ordinal $\gamma_2$,
\begin{equation*}
h_2(\gamma_1,\gamma_2) 
\,=\, \bigsqcup_{\gamma'_2<\gamma_2} h_2(\gamma_1,\gamma'_2)\,.
\end{equation*}

\item{\bf($\nu$-variables)}
Similarly to the step cases of $\mu$-variables, we can prove that
for ordinals $\gamma_1\leq\overline{\gamma_1}$ and $\gamma_2\leq\overline{\gamma_2}$ we have:
\[
h_3(\gamma_1,\gamma_2)
\sqsubseteq
(J\zeta)^{-1}\odot 
\overline{F}\bigl[\codomJoin{h_1(\overline{\gamma_1},\overline{\gamma_2}),h_2(\overline{\gamma_1},\overline{\gamma_2})},h_3(\overline{\gamma_1},\overline{\gamma_2})\bigr]\odot d_2 \,.
\]
\end{enumerate}
Hence $p$ is a progress measure for the equational system (\ref{eq:eqSysSublemsoundnessFwdFairBuechiSim10}).
\end{proof}

\begin{sublem}\label{sublem:soundnessFwdFairBuechiSim2}
We assume 
Assumption~\ref{asm:wellDefinedSimulation} and
the assumptions 
in Theorem~\ref{thm:soundnessFwdFairBuechi}.
Let $f:Y\kto X$ be a forward, $\overline{\alpha}$-bounded simulation with dividing from 
\begin{math}
  \X= 
  \bigl(
   (X_{1},X_{2}),
   c,
   s
\bigr)
\end{math}
and
\begin{math}
  \mathcal{Y}=
  \bigl(
   (Y_{1},Y_{2}),
    d,
    t
 \bigr)\,
 \end{math} 
%
%
Let $(v^{\sol}_1,v^{\sol}_2,v^{\sol}_3)$ be the solution of equational system 
(\ref{eq:eqSysSublemsoundnessFwdFairBuechiSim10}) in Sublemma~\ref{sublem:soundnessFwdFairBuechiSim1}. 
Then we have:
\begin{equation}
\label{eq:claimSublemSoundnessFwdFairBuechiSim2}
[\id_Z,\id_Z]\odot\codomJoin{v^{\sol}_1,v^{\sol}_2}\sqsubseteq \trB(d_1)\qquad\text{and}\qquad
v^{\sol}_3\sqsubseteq \trB(d_2)\, 
\end{equation}
where $\trB(d_1):Y_1\kto Z$ and $\trB(d_2):Y_2\kto Z$ are defined as 
$\trB(c_1)$ and $\trB(c_2)$.
\end{sublem}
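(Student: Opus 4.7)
The proof proceeds in three stages: first, I show that $v := [\id_Z,\id_Z]\odot\codomJoin{v_1^{\sol}, v_2^{\sol}}$ together with $v_3^{\sol}$ constitutes a fixed point of the two equations defining $(\trB(d_1), \trB(d_2))$; second, I refine this to $v = \mu u_1.\,F_1(u_1, v_3^{\sol})$, the \emph{least} fixed point of the first equation with $u_2 = v_3^{\sol}$ held constant; and finally the two required inequalities follow from the $\nu$-structure of the second equation combined with monotonicity of $F_1$. Here I write $F_1(u_1,u_2) := (J\zeta)^{-1}\odot\oF[u_1,u_2]\odot d_1$ and $F_2(u_1,u_2) := (J\zeta)^{-1}\odot\oF[u_1,u_2]\odot d_2$ for the operators of the $\trB$-system~(\ref{eq:1702080948}) on the $\mathcal{Y}$-side.

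For the first stage, set $\phi(u_1,u_2) := (J\zeta)^{-1}\odot\oF[u_1,u_2]$ and $H(h_1,h_2) := [\id_Z,\id_Z]\odot\codomJoin{h_1,h_2}$. The defining equations of~(\ref{eq:eqSysSublemsoundnessFwdFairBuechiSim10}) for the two $\mu$-variables read $v_i^{\sol} = \phi(v, v_3^{\sol})\odot d_{1i}$ for $i\in\{1,2\}$, because the first argument of $\oF$ there depends on $(h_1,h_2)$ only through $H$. Applying the second identity of Lemma~\ref{lem:propertyCodomRJ}.\ref{item:lem:propertyCodomRJ1} with $\phi(v,v_3^{\sol})$ in both slots, and then the dividing condition $[\id_{FY},\id_{FY}]\odot\codomJoin{d_{11},d_{12}} = d_1$, yields
\begin{equation*}
v \;=\; H(v_1^{\sol},v_2^{\sol}) \;=\; \phi(v,v_3^{\sol})\odot d_1 \;=\; F_1(v, v_3^{\sol}),
\end{equation*}
and the $h_3$-equation directly gives $v_3^{\sol} = F_2(v, v_3^{\sol})$.

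For the second stage, take any $u_1^*$ with $u_1^* = F_1(u_1^*, v_3^{\sol})$ and set $h_i^* := \phi(u_1^*, v_3^{\sol})\odot d_{1i}$. The same codomain-join manipulation as above yields $H(h_1^*, h_2^*) = u_1^*$, so $(h_1^*, h_2^*)$ is a fixed point of the pair $\bigl(G_1(\cdot,\cdot,v_3^{\sol}), G_2(\cdot,\cdot,v_3^{\sol})\bigr)$. Unfolding Definition~\ref{def:solOfEqSys}, the interim solution satisfies $l_1^{(1)}(h_2^*, v_3^{\sol}) \sqsubseteq h_1^*$ (since $h_1^*$ is a fixed point of $G_1(\cdot, h_2^*, v_3^{\sol})$ and the lfp is smallest), whence monotonicity of $G_2$ gives $G_2(l_1^{(1)}(h_2^*, v_3^{\sol}), h_2^*, v_3^{\sol}) \sqsubseteq h_2^*$; by Knaster--Tarski this yields $v_2^{\sol} = \mu h_2.\,G_2(l_1^{(1)}(h_2,v_3^{\sol}),h_2,v_3^{\sol}) \sqsubseteq h_2^*$. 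A parallel monotonicity-then-lfp argument gives $v_1^{\sol} = l_1^{(1)}(v_2^{\sol}, v_3^{\sol}) \sqsubseteq l_1^{(1)}(h_2^*, v_3^{\sol}) \sqsubseteq h_1^*$. Monotonicity of $H$ (Assumption~\ref{asm:wellDefinedSimulation}.\ref{item:asmthm:soundnessFwdFairBuechi2}) then delivers $v \sqsubseteq u_1^*$, so $v = \mu u_1.\,F_1(u_1, v_3^{\sol})$.

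Assembling the stages, $v_3^{\sol} = F_2(v, v_3^{\sol}) = F_2(\mu u_1.\,F_1(u_1, v_3^{\sol}), v_3^{\sol})$ exhibits $v_3^{\sol}$ as a (post-)fixed point of the outer function $G(u_2) := F_2(\mu u_1.\,F_1(u_1, u_2), u_2)$, so $v_3^{\sol} \sqsubseteq \nu G = \trB(d_2)$; monotonicity of $F_1$ in its second argument then yields $v = \mu u_1.\,F_1(u_1, v_3^{\sol}) \sqsubseteq \mu u_1.\,F_1(u_1, \trB(d_2)) = \trB(d_1)$. The main obstacle lies in the second stage, where the codomain-join manipulations must be carefully justified (definedness of $\codomJoin{h_1^*,h_2^*}$ following from Lemma~\ref{lem:propertyCodomRJ}.\ref{item:def:codomJoinRestr1} and the dividing condition) and the nested-$\mu$ argument must go through uniformly in both regimes of Proposition~\ref{prop:suffCondSol}---Knaster--Tarski for the nondeterministic setting and Kleene iteration for the probabilistic one.
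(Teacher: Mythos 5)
Your proposal is correct in substance but takes a genuinely different route from the paper's. The paper reformulates the three-variable system~(\ref{eq:eqSysSublemsoundnessFwdFairBuechiSim10}) as a two-variable system over the product poset $\bigl(\Kl(T)(Y_1,Z)\bigr)^2$, extracts by completeness (Theorem~\ref{thm:correctnessOfProgMeasEqSys}.\ref{item:completenessProgressMeas}) a progress measure achieving its solution, pushes that measure forward along $H=[\id_Z,\id_Z]\odot\codomJoin{\place,\place}$, verifies the five progress-measure axioms for the system defining $\trB(d_1),\trB(d_2)$, and concludes by soundness of progress measures. You instead argue directly on the fixed-point structure: your Stage~1 (that $H$ sends the solution to a simultaneous fixed point, via Lemma~\ref{lem:propertyCodomRJ}.\ref{item:lem:propertyCodomRJ1} and the dividing equation) matches the paper's step-case computation, but Stages~2--3 replace the progress-measure transfer by a Beki\'{c}-style argument showing that $H(v_1^{\sol},v_2^{\sol})$ is the \emph{least} fixed point of $F_1(\place,v_3^{\sol})$, whence $v_3^{\sol}$ is a fixed point of the outer operator $G$ and sits below $\nu G=\trB(d_2)$. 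This is arguably more illuminating---it isolates exactly where the dividing hypothesis earns its keep, namely in splitting an arbitrary fixed point $u_1^*$ along $d_{11},d_{12}$ into a fixed point of the pair dominating the nested solution---and it establishes the slightly stronger equality $H(v_1^{\sol},v_2^{\sol})=\mu u_1.\,F_1(u_1,v_3^{\sol})$. What the paper's route buys is uniformity with Sublemma~\ref{sublem:soundnessFwdFairBuechiSim1} (both construct progress measures) and the delegation of all transfinite and continuity bookkeeping to Theorem~\ref{thm:correctnessOfProgMeasEqSys}; your route needs ``$\mathrm{lfp}\sqsubseteq$ every pre-fixed point'' for operators built from the interim solution $l_1^{(1)}$, which is not automatic from Definition~\ref{def:solOfEqSys} and, since homsets of $\Kl(\giry)$ are not complete lattices (Example~\ref{example:giryNotDCpo}), must be obtained by Kleene iteration under the $\omega$-continuity guaranteed by the standing assumptions---you flag this correctly at the end. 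Two small repairs: definedness of $\codomJoin{h_1^*,h_2^*}$ follows from Lemma~\ref{lem:propertyCodomRJ}.\ref{item:lem:propertyCodomRJ1} (apply $\phi+\phi$ to the given $\codomJoin{d_{11},d_{12}}$), not from Lemma~\ref{lem:propertyCodomRJ}.\ref{item:def:codomJoinRestr1}; and you should record why the nested solution of Definition~\ref{def:solOfEqSys} is a simultaneous fixed point of the equations (routine back-propagation of the interim solutions, but it is the entry point of your Stage~1).
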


\begin{proof}
It is easy to see that 
the equational system in (\ref{eq:eqSysSublemsoundnessFwdFairBuechiSim10}) is equivalent to the following 
equational system, in the sense that 
$w^{\sol}_1=(v^{\sol}_1,v^{\sol}_2)$ and $w^{\sol}_2=v^{\sol}_3$.
\begin{equation}
\begin{array}{lll}
\label{eq:eqSysSublemsoundnessFwdFairBuechiSim3}
	  w_{1} 
	  &\,=_{\mu}\,
	  \begin{pmatrix}
	  (J\zeta)^{-1}
	  \odot
	  \oF\bigl[[\id_{Z},\id_{Z}]\odot\codomJoin{w_{11},w_{12}},w_{2}\bigr]
	  \odot
	  d_{11},
	  \\ 
	  \quad(J\zeta)^{-1}
	  \odot
	  \oF\bigl[[\id_{Z},\id_{Z}]\odot\codomJoin{w_{11},w_{12}},w_{2}\bigr]
	  \odot
	  d_{12}
	  \end{pmatrix}
	  &\in 
	  \bigl(\Kl(T)(Y_1,Z)\bigr)^2
	  \\ 
	  w_{2} 
	  &\,=_{\nu}\,
	  (J\zeta)^{-1}
	  \odot
	  \oF\bigl[[\id_{Z},\id_{Z}]\odot\codomJoin{w_{11},w_{12}},w_{2}\bigr]
	  \odot
	  d_{2} 
	  	  &\in\Kl(T)(Y_2,Z) 
     \end{array}
\end{equation}
Here $(\Kl(T)(Y_1,Z)\bigr)^2$ is equipped with the product order, and $w_{11}$ and $w_{12}$ denote the first and the second component of $w_1\in\bigl(\Kl(T)(Y_1,Z)\bigr)^2$ respectively.

By completeness of progress measure (Theorem~\ref{thm:correctnessOfProgMeasEqSys}.\ref{item:completenessProgressMeas}), 
there exists a progress measure 
$q=\bigl((\alpha),(w_1(\alpha)=(w_{11}(\alpha),w_{12}(\alpha)),w_2(\alpha))_{\alpha\leq\overline{\alpha}}\bigr)$ for
(\ref{eq:eqSysSublemsoundnessFwdFairBuechiSim3}) such that 
$(w_{11}(\overline{\alpha}),w_{12}(\overline{\alpha}))=w^{\sol}_{1}=(v^{\sol}_1,v^{\sol}_2)$ and 
$w_2(\overline{\alpha})=w^{\sol}_2=v^{\sol}_3$.

For each $\alpha\leq\overline{\alpha}$, we define $v'_1(\alpha):Y_1\kto Z$ and $v'_2(\alpha):Y_2\kto Z$ by 
$v'_1(\alpha)=[\id_Z,\id_Z]\odot\codomJoin{w_{11}(\alpha),w_{12}(\alpha)}$ and
$v'_2(\alpha)=w_2(\alpha)$.

In what follows, we show that $p':=\bigl((\overline{\alpha}),(v'_1(\alpha),v'_2(\alpha))_{\alpha\leq\overline{\alpha}}\bigr)$ 
is a progress measure for the equational system 
that defines $\trB(d_1):Y_1\kto Z$ and $\trB(d_2):Y_2\kto Z$
(see (\ref{eq:eqSysForBuechiAcceptance}) in Theorem~\ref{thm:sanityCheckResult}).
\begin{enumerate}
\item{\bf (Monotonicity)} By the monotonicity of $w_{1}(\alpha)$ and $w_{2}(\alpha)$, $v'_1(\alpha)$ 
and $v'_2(\alpha)$ are also monotone.

\item{\bf ($\mu$-variables, base case)} 
%
We have $(w_{11}(0),w_{12}(0))=w_1(0)=(\bot,\bot)$ by the definition. Hence
 by Condition~(\ref{item:asmthm:soundnessFwdFairBuechi3-2}) of Theorem~\ref{thm:soundnessFwdFairBuechi},
 we have:
\[
v'_1(0)\,=\,[\id_Z,\id_Z]\odot\codomJoin{w_{11}(0),w_{12}(0)}
\,=\,[\id_Z,\id_Z]\odot\codomJoin{\bot,\bot}\,=\,
\bot\,.
\]

\item{\bf ($\mu$-variables, step case)} 
For an ordinal $\alpha\leq\overline{\alpha}$, we have:
\allowdisplaybreaks[4]
\begin{align*}
v'_1(\alpha+1) 
&= [\id_Z,\id_Z]\odot\codomJoin{w_{11}(\alpha+1),w_{12}(\alpha+1)} \\ 
&\sqsubseteq 
[\id_Z,\id_Z]\odot
\codomJoin{
	  (J\zeta)^{-1}
	  \odot
	  \oF\bigl[[\id_{Z},\id_{Z}]\odot\codomJoin{w_{11}(\alpha),w_{12}(\alpha)},w_{2}\bigr]
	  \odot
	  d_{11}, \\
&\qquad\qquad\qquad\qquad\,	  
(J\zeta)^{-1}
	  \odot
	  \oF\bigl[[\id_{Z},\id_{Z}]\odot\codomJoin{w_{11}(\alpha),w_{12}(\alpha)},w_{2}\bigr]
	  \odot
	  d_{12}
	  }   \\
&=
 (J\zeta)^{-1}
	  \odot 	  
	  \oF\bigl[[\id_{Z},\id_{Z}]\odot\codomJoin{w_{11}(\alpha),w_{12}(\alpha)},w_{2}\bigr]
	  \odot 
[\id_{FY},\id_{FY}]\odot\codomJoin{d_{11},d_{12}}   \\
&\sqsubseteq
 (J\zeta)^{-1}
	  \odot
	  \oF\bigl[[\id_{Z},\id_{Z}]\odot\codomJoin{w_{11}(\alpha),w_{12}(\alpha)},w_{2}\bigr]
	  \odot d_1   \\
&=(J\zeta)^{-1}
	  \odot
	  \oF\bigl[v'_{1}(\alpha),v'_{2}\bigr]
	  \odot d_1\,.
\end{align*}

\item{\bf ($\mu$-variables, limit case)} 
For a limit ordinal $\alpha\leq\overline{\alpha}$, we have:
\begin{align*}
v'_1(\alpha) 
&= [\id_Z,\id_Z]\odot\codomJoin{w_{11}(\alpha),w_{12}(\alpha)} \\
&\sqsubseteq [\id_Z,\id_Z]\odot\codomJoin{\textstyle{\bigsqcup_{\beta<\alpha}w_{11}(\beta),\bigsqcup_{\beta<\alpha}v'_{12}(\beta)}} \\
&= \textstyle{\bigsqcup_{\beta<\alpha}[\id_Z,\id_Z]\odot\codomJoin{w_{11}(\beta),w_{12}(\beta)} }\\
&=\textstyle{\bigsqcup_{\beta<\alpha} v'_1(\beta)}\,. 
\end{align*}

\item{\bf ($\nu$-variables)} 
For an ordinal $\alpha\leq\overline{\alpha}$, there exists an ordinal $\beta\leq\overline{\alpha}$ such that:
\begin{align*}
v'_2(\alpha)
&= w_2(\alpha) \\
&\sqsubseteq 
(J\zeta)^{-1}
	  \odot \oF\bigl[[\id_{Z},\id_{Z}]\odot\codomJoin{w_{11}(\beta),w_{12}(\beta)},w_{2}(\beta)\bigr]
	  \odot
	  d_{2} \\
&=
(J\zeta)^{-1}
	  \odot
	  \oF\bigl[v'_{1}(\beta),v'_{2}(\beta)\bigr]
	  \odot
	  d_{2}\,.
\end{align*}
\end{enumerate}
Hence 
$p'=\bigl((\overline{\alpha}),(v'_1(\alpha),v'_2(\alpha))_{\alpha}\bigr)$ 
is a progress measure and by  soundness of progress measures (Theorem~\ref{thm:correctnessOfProgMeasEqSys}.\ref{item:soundnessProgressMeas}), 
we have (\ref{eq:claimSublemSoundnessFwdFairBuechiSim2}).
\end{proof}

\begin{proof}[Proof (Lemma~\ref{lem:soundnessFwdFairBuechiSim})]
Let $E_{\mathcal{X}}$ be the equational system that defines $\trB(c_1)$ and 
$\trB(c_2)$ 
(see (\ref{eq:eqSysForBuechiAcceptance}) in Theorem~\ref{thm:sanityCheckResult}).
Let $p_{\mathcal{X}}=((\overline{\beta_1}),(u_{1}(\beta_1),u_{2}(\beta_1))_{\beta_1\leq\overline{\beta_1}})$ 
be the progress measure in Sublemma~\ref{sublem:soundnessFwdFairBuechiSim1}.
%
By Sublemma~\ref{sublem:soundnessFwdFairBuechiSim1} and soundness of progress measures (Theorem~\ref{thm:correctnessOfProgMeasEqSys}.\ref{item:soundnessProgressMeas}), 
we have:
\begin{multline}
\label{eq:proofSoundnessFwdFairBuechiSim2}
u_1(\overline{\beta_1})\odot f_{11}^{(\overline{\alpha_1})}\sqsubseteq v^{\sol}_1, \quad
u_2(\overline{\beta_1})\odot f_{12}^{(\overline{\alpha_1})}\sqsubseteq v^{\sol}_2, \; \\
\text{and}\qquad
[u_1(\overline{\beta_1}),u_2(\overline{\beta_1})]\odot\codomJoin{f_{21},f_{22}} \sqsubseteq v^{\sol}_3 \,.
\end{multline}
By Sublemma~\ref{sublem:soundnessFwdFairBuechiSim2}, we have:
\begin{equation}
\label{eq:proofSoundnessFwdFairBuechiSim3}
[\id_Z,\id_Z]\odot[v^{\sol}_1,v^{\sol}_2]\sqsubseteq \trB(d_1)
\quad\text{and}\quad
 v^{\sol}_3\sqsubseteq \trB(d_1)\,.
\end{equation}
Therefore we have:
\allowdisplaybreaks[4]
\begin{align*}
[\trB(c_1),\trB(c_2)]\odot\codomJoin{f_{11},f_{12}} 
&=[\trB(c_1),\trB(c_2)]\odot\codomJoin{f^{(\overline{\alpha_1})}_{11},f^{(\overline{\alpha_1})}_{12}}&
(\text{by Definition~\ref{def:fwdFairBuechiSimWithDiv}}) \\
&=[u_1(\overline{\beta_1}),u_2(\overline{\beta_1})]\odot\codomJoin{f^{(\overline{\alpha_1})}_{11},f^{(\overline{\alpha_1})}_{12}} 
&(\text{by definition}) \\
&=[\id_Z,\id_Z]\odot\codomJoin{u_1(\overline{\beta_1})\odot f_{11}^{(\overline{\alpha_1})},
u_2\odot f_{12}^{(\overline{\alpha_1})}}\hspace{-12mm} \\
&\sqsubseteq [\id_Z,\id_Z]\odot\codomJoin{v^{\sol}_1,v^{\sol}_2}
&(\text{by (\ref{eq:proofSoundnessFwdFairBuechiSim2})})\\
&\sqsubseteq \trB(d_1) &(\text{by (\ref{eq:proofSoundnessFwdFairBuechiSim3})}). \\
\intertext{In a similar manner, we can prove:}
[\trB(c_1),\trB(c_2)]\odot\codomJoin{f_{21},f_{22}} 
&\sqsubseteq  \trB(d_2)\,. 
\end{align*}
These conclude the proof. 
\end{proof}

\begin{proof}[Proof (Theorem~\ref{thm:soundnessFwdFairBuechi})]
We have:
\allowdisplaybreaks[4]
\begin{align*}
&\trB(\mathcal{X}) \\
&= [\trB(c_1),\trB(c_2)]\odot s & \tag*{(\text{by definition})} \\
&\sqsubseteq [\trB(c_1),\trB(c_2)]\odot [\codomJoin{f_{11},f_{12}},\codomJoin{f_{21},f_{22}}] \odot t 
& \tag*{(\text{by Condition~(\ref{item:f21Andf22}) 
 in Definition~\ref{def:fwdFairBuechiSimWithDiv}})} \\
&\sqsubseteq  \bigl[[\trB(c_1),\trB(c_2)]\odot\codomJoin{f_{11},f_{12}}, 
[\trB(c_1),\trB(c_2)]\odot\codomJoin{f_{21},f_{22}}\bigr] \odot t &\\
&\sqsubseteq  [\trB(d_1),\trB(d_2)]\odot t & \tag*{(\text{by Lemma~\ref{lem:soundnessFwdFairBuechiSim}})} \\
&=\trB(\mathcal{Y}). &\tag*{(\text{by definition})}
\end{align*}
This concludes the proof.
\end{proof}

We have thus obtained a sound simulation notion.
%
%
The proposition below shows that 
 soundness theorem (Theorem~\ref{thm:soundnessFwdFairBuechi}) applies to the combinations of
monads and functors in Definition~\ref{def:powersetMonadAndGiryMonad}--\ref{def:FSigmaAndFLambda}.
%
\begin{prop}\label{prop:concreteCondSatisfied}
The combinations of $\pow$ and $F_\Sigma$, and $\giry$ and $F_\myalphabet$,
respectively,
satisfy the assumptions in Assumption~\ref{asm:wellDefinedSimulation} and Theorem~\ref{thm:soundnessFwdFairBuechi}.
%
\qed
\end{prop}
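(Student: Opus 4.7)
My plan is to verify each of the conditions listed in Assumption~\ref{asm:wellDefinedSimulation} and in the hypotheses of Theorem~\ref{thm:soundnessFwdFairBuechi} for both combinations $(\pow,F_\Sigma)$ on $\Sets$ and $(\giry,F_\myalphabet)$ on $\Meas$. Most items are either classical results in the coalgebraic literature or follow directly from the explicit descriptions already recorded earlier in the paper, so the verification amounts to collecting them systematically.

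For $(\pow,F_\Sigma)$ on $\Sets$: existence of a final $F_\Sigma$-coalgebra on $\myTree_\Sigma$ and of the standard lifting $\overline{F_\Sigma}\colon\Kl(\pow)\to\Kl(\pow)$ are classical (see~\cite{jacobs16CoalgBook,hasuo07generictrace}). Each homset $\Kl(\pow)(X,Y)\cong\pow(X\times Y)$ is a complete lattice under inclusion, relational composition preserves arbitrary unions, and $\overline{F_\Sigma}$ preserves unions componentwise, so both $\Kl(\pow)$ and $\overline{F_\Sigma}$ are $\Cppo$-enriched. That $\pow$ is partially additive is folklore (see~\cite{jacobs10trace,cirstea13fromBranching}); the formulas in Example~\ref{exa:codomJR} make downward-closedness of codomain join and monotonicity of codomain restriction, join, and cotupling immediate. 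Existence of solutions to~(\ref{eq:eqSysForBuechiAcceptance}) is Theorem~\ref{thm:sanityCheckResult}.\ref{item:thm:sanityCheckResult1a}. Two-sided strictness of Kleisli composition and $\alpha$-continuity of post-composition for every ordinal $\alpha$ are clear from the set-theoretic formulas.

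For $(\giry,F_\myalphabet)$ on $\Meas$: a final $F_\myalphabet$-coalgebra is carried by $\myalphabet^\omega$ with its cylindrical $\sigma$-algebra, and the canonical lifting $\overline{F_\myalphabet}$ is the one used throughout Section~\ref{subsec:coalgebraicModelingOfBuechiAutomata}. The subtle point, highlighted in Example~\ref{example:giryNotDCpo}, is that $\Kl(\giry)$-homsets need not be dcpos; nevertheless each is a pointed $\omega$-cpo, since suprema of pointwise-increasing $\omega$-chains of subprobability measures exist by the monotone convergence theorem. Both composition in $\Kl(\giry)$ and the lifting $\overline{F_\myalphabet}$ are $\omega$-continuous for the same reason, so the $\Cppo$-enrichment required by Assumption~\ref{asm:wellDefinedSimulation}.\ref{item:asm:wellDefinedTrace4} holds. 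Partial additivity of $\giry$ together with the formulas in Example~\ref{exa:codomJR} give the codomain-operation conditions; downward-closedness of codomain join holds because replacing two substochastic summands by pointwise smaller ones preserves the constraint $g_1+g_2\leq 1$. Existence of solutions to~(\ref{eq:eqSysForBuechiAcceptance}) is Theorem~\ref{thm:sanityCheckResult}.\ref{item:thm:sanityCheckResult2a}, and two-sided strictness of composition is immediate from the integration formula.

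The main obstacle is the $\alpha$-continuity condition~(\ref{item:asmthm:soundnessFwdFairBuechi5}) of Theorem~\ref{thm:soundnessFwdFairBuechi} when $T=\giry$, since for uncountable limit ordinals monotone convergence does not apply directly to nets of measures. In the applications relevant to this paper (Definition~\ref{def:fwdFairBuechiSimProbMatrix}, together with Theorem~\ref{thm:correctnessOfProgMeasEqSys}.\ref{item:completenessProgressMeas}) the approximation sequences have length $\overline{\alpha}\leq\omega$, so only $\omega$-continuity is needed; this follows by applying monotone convergence to the Kleisli composition formula $\bigl(g\odot \bigsqcup_{i<\omega}f_i\bigr)(x)(A)=\int g(y)(A)\,d\bigl(\sup_{i<\omega}f_i(x)\bigr)(y)=\sup_{i<\omega}\,(g\odot f_i)(x)(A)$, completing the verification.
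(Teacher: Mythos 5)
Your verification is correct and supplies exactly the routine case-by-case check that the paper omits (the proposition is stated with a bare \qed and no proof body). Your explicit handling of the one genuinely delicate point — that $\Kl(\giry)$-homsets are only $\omega$-cpos, so condition~(\ref{item:asmthm:soundnessFwdFairBuechi5}) is secured only for $\overline{\alpha}\leq\omega$, which suffices because the probabilistic approximation sequences of Definition~\ref{def:fwdFairBuechiSimProbMatrix} and the completeness bound of Theorem~\ref{thm:correctnessOfProgMeasEqSys}.\ref{item:completenessProgressMeas} never require more — is a welcome precision that the paper's statement glosses over.
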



Therefore by Theorem~\ref{thm:soundnessFwdFairBuechi} and Theorem~\ref{thm:sanityCheckResult},
by regarding NBTAs and PBWAs as B\"uchi $(T,F)$-systems as in Example~\ref{exa:NBTAandPBWAInduceBuechiSystems},
we can obtain fair simulation notions for these systems whose soundness comes for free.

A problem here is that the coalgebraic definition in Definition~\ref{def:fwdFairBuechiSimWithDiv} requires
a dividing $d_{11}, d_{12}\colon Y_{1}\kto \oF Y$ of $d_{1}\colon Y_{1}\kto \oF Y$. 
Intuitively this is to divide the simulator's
``resources'' of transitions into two parts, 
%
one for the challenger's
\emph{non-accepting} states and the other for \emph{accepting} states.

To describe an intuition, 
we hereby interpret the notion of dividing for NBTAs and PBWAs,
with respect to the  correspondence in Example~\ref{exa:NBTAandPBWAInduceBuechiSystems}.
For the NBTA $\mathcal{Y}$
in Def.~\ref{def:simForNondetBuechiRel},
a dividing is understood as a pair $\delta_{\mathcal{Y},11},\delta_{\mathcal{Y},12}:Y_1\to\pow(\coprod_{\sigma\in\Sigma} X^{|\sigma|})$ of functions such that $\delta_{\mathcal{Y},11}(x)\cup\delta_{\mathcal{Y},12}(y)=\delta_{\mathcal{Y}}(y)$ for each $y\in Y_1$.
If $\mathcal{Y}$ is the PBWA 
in Def.~\ref{def:fwdFairBuechiSimProbMatrix}, then
a dividing is understood as a pair $M_{\mathcal{Y},11}(a),M_{\mathcal{Y},12}(a)\in [0,1]^{Y_1\times Y}$ of matrices such that
$M_{\mathcal{Y},11}(a)+M_{\mathcal{Y},12}(a)=M_{\mathcal{Y},1}(a)$ for each $a\in\myalphabet$.


This dividing requirement is naturally inherited by the resulting concrete simulation notions for 
NBTAs and PBWAs.
Unfortunately finding such ``resource allocation'' is a challenge in practice;
  additionally, insistence on such allocation being
\emph{static} is overly restrictive, as
we will later see in Example~\ref{ex:staticDividingIsRestrictive}.

The following definition is more desirable in this respect; it
indeed yields Definition~\ref{def:simForNondetBuechiRel}
and~\ref{def:fwdFairBuechiSimProbMatrix}---the concrete simulation notions that we have introduced earlier---as its instances.
Note that the following definition is \emph{not sound} in the general sense of
Theorem~\ref{thm:soundnessFwdFairBuechi} (see Example~\ref{example:joinedSimNotSound} for a counterexample).
The
rest of the paper is devoted
to finding special cases in which it is sound. 

\begin{defi}[fair simulation without dividing] 
\label{def:fwdFairBuechiSimWithoutDividing}
 In the setting of Definition~\ref{def:fwdFairBuechiSimWithDiv}, 
a \emph{(forward, $\overline{\alpha}$-bounded) fair simulation without
 dividing} 
 is defined almost the same way as one with dividing in Definition~\ref{def:fwdFairBuechiSimWithDiv}, except that 
 Condition~(\ref{item:f11Andf12}) is replaced by the following.
 \begin{enumerate}
   \item[(\ref{item:f11Andf12}')]
  The components $f_{11}\colon Y_{1}\kto
	X_{1}$ and $f_{12}\colon Y_{1}\kto X_{2}$ come
	with 
	 \emph{approximation sequences}
       \[
\begin{array}{ll}
         f_{11}^{(0)} \sqsubseteq
        f_{11}^{(1)} \sqsubseteq
        \cdots
	\sqsubseteq
	f_{11}^{(\overline{\alpha})} 
	\,\colon
	Y_{1}\kto X_{1},
	\quad\text{and}\quad
        &
        f_{12}^{(0)} \sqsubseteq
        f_{12}^{(1)} \sqsubseteq
        \cdots 
	\sqsubseteq
	f_{12}^{(\overline{\alpha})} 
	\,\colon
	Y_{1}\kto X_{2},
\end{array}       
\]
that satisfies \ref{item:f11Andf12_2}, \ref{item:f11Andf12_4} and \ref{item:f11Andf12_6} in 
Definition~\ref{def:fwdFairBuechiSimWithDiv} and the following two conditions.
\begin{enumerate}


 \item[(b')]\label{item:f11Andf12_3ND}
 \textbf{($f_{11}^{(\alpha)}$)} 
       For each ordinal $\alpha$ such that $\alpha\le\overline{\alpha}$,
       the inequality~(\ref{eq:f11PostFixedPointND}) below holds.
    \begin{equation}\label{eq:f11PostFixedPointND}
  \small
   \vcenter{\xymatrix@C+8em@R=1.4em{
   {FY}
       \kar[r]^{\overline{F}\bigl[\,
         \codomJoin{f_{11}^{(\alpha)},f_{12}^{(\alpha)}},\,
	 \codomJoin{f_{21},f_{22}}
        \,\bigr]}
       \ar@{}[rd]|{\sqsupseteq}
   &
   {FX}
   \\
   {Y_{1}}
       \kar[u]^{d_{1}}
       \kar[r]_{f_{11}^{(\alpha)}}
   &
   {X_{1} }
       \kar[u]_{c_{1}}
 }}
  \end{equation}


 \item[(d')]\label{item:f11Andf12_5ND}
 \textbf{($f_{12}^{(\alpha)}$, the step case)} 
        For each 
        $\alpha<\overline{\alpha}$,
      the inequality~(\ref{eq:item:f11Andf12_3ND}) below holds.
  	  \begin{equation}\label{eq:item:f11Andf12_3ND}
	   \small
	   \vcenter{\xymatrix@C+8em@R=1.4em{
	   {FY}
	   \kar[r]^{\overline{F}\bigl[\,
	   \codomJoin{f_{11}^{(\alpha)},f_{12}^{(\alpha)}},\,
	   \codomJoin{f_{21},f_{22}}
	   \,\bigr]}
	       \ar@{}[rd]|{\sqsupseteq}
	   &
	   {FX}
	   \\
	   {Y_{1}}
	       \kar[u]^{d_{1}}
	       \kar[r]_{f_{12}^{(\alpha+1)}}
	   &
	   {X_{2}}
	       \kar[u]_{c_{2}}
	}}
\end{equation}
\end{enumerate}
\end{enumerate}
\end{defi}

The following example shows why this definition is more desirable.

%
\begin{exa}\label{ex:staticDividingIsRestrictive}
Let $\mathcal{X}=((X_1,X_2),c,s)$ and $\mathcal{Y}=((Y_1,Y_2),d,t)$ be
B\"uchi $(\giry,\{a\}\times(\place))$-systems
that model PBWAs illustrated 
below.
\[
\begin{xy}
(70,0)*+[Fo]{y_1} = "x1",
(70,15)*+[Foo]{y_2} = "x2",
(18,0)*+[Fo]{x_1} = "y1",
(18,15)*+[Foo]{x_{21}} = "y21",
(32,0)*+[Foo]{x_{22}} = "y22",
(32,15)*+[Foo]{x_{23}} = "y23",
(80,20)*{\mathcal{Y}}="",
(12,20)*{\mathcal{X}}="",
\ar _{1} (70,-10);"x1"*+++[o]{}
\ar _{a,1} "x1"*+++[o]{};"x2"*+++[o]{}
\ar @(ul,ur)^{a,1} "x2"*+++[o]{};"x2"
\ar ^{\frac{1}{2}} (25,-10);"y1"*+++[o]{}
\ar ^{a,1} "y1"*+++[o]{};"y21"*+++[o]{}
\ar _{\frac{1}{2}} (25,-10);"y22"*+++[o]{}
\ar _{a,1} "y22"*+++[o]{};"y23"*+++[o]{}
\ar @(ul,ur)^{a,1} "y21"*+++[o]{};"y21"*+++[o]{}
\ar @(ul,ur)^{a,1} "y23"*+++[o]{};"y23"*+++[o]{}
\end{xy}
\]
%
It is easy to see that they exhibit language inclusion.
We define 
$f:Y\kto X$ by 
%
\begin{equation*}
f(y_1)(\{x\})=\begin{cases} \frac{1}{2} & (x\in\{x_1,x_{22}\}) \\ 0 & (\text{otherwise}) \end{cases}
\qquad\text{and}\qquad
f(y_2)(\{x\})=\begin{cases} \frac{1}{2} & (x\in\{x_{21},x_{23}\}) \\ 0 & (\text{otherwise})\,. \end{cases}
\end{equation*}
%
Then $f$ is a fair simulation \emph{without} dividing 
from 
$\mathcal{X}$ to $\mathcal{Y}$.
In contrast, $f$ is not a fair simulation \emph{with} dividing.
In fact, there exists no fair simulation with dividing from $\mathcal{X}$ to $\mathcal{Y}$.
\end{exa}

In what follows we seek for conditions under which this desirable fair simulation notion (without dividing, Definition~\ref{def:fwdFairBuechiSimWithoutDividing}) turns out to be sound. In Section~\ref{subsec:nondetSetting} we study the nondeterministic setting, and in Section~\ref{subsec:specificSettingIIProb} we study the probabilistic setting. The identified conditions and soundness arguments are rather different between Section~\ref{subsec:nondetSetting}
and  Section~\ref{subsec:specificSettingIIProb}.

\subsection{Circumventing Dividing: the Nondeterministic Case}\label{subsec:nondetSetting}
For $T=\pow$ we  show that a simulation
\emph{without} dividing
yields 
one \emph{with} dividing. 
Therefore in the nondeterministic setting fair simulations without dividing are
sound unconditionally.
Here we exploit the \emph{idempotency} property of
$T=\pow$---\emph{one can copy resources as many times as one likes}. 

  \begin{prop} [soundness under idempotency]
 \label{prop:soundnessFwdFairBuechiProb}
 Under  Assumption~\ref{asm:wellDefinedSimulation}, let us assume
 that each arrow  $f:X\kto Y$ in $\Kl(T)$ is \emph{idempotent}, that is,
 the codomain join $\codomJoin{f,f}:X\kto Y+Y$ necessarily exists and
 we have $[\id_Y,\id_Y]\odot \codomJoin{f,f}=f$.
 \begin{enumerate}
  \item\label{item:prop:soundnessFwdFairBuechiProb1} A  simulation without dividing yields one with
	dividing, with the  dividing  $d_{11}=d_{12}=d_{1}$. 
  \item\label{item:prop:soundnessFwdFairBuechiProb2} Under the assumptions of
	Theorem~\ref{thm:soundnessFwdFairBuechi}, a 
	simulation without dividing witnesses trace inclusion.
 \qed
 \end{enumerate}
 \end{prop}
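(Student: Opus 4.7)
The plan is to observe that statement~(\ref{item:prop:soundnessFwdFairBuechiProb2}) follows immediately from (\ref{item:prop:soundnessFwdFairBuechiProb1}) combined with Theorem~\ref{thm:soundnessFwdFairBuechi}, so the substantive content lies in~(\ref{item:prop:soundnessFwdFairBuechiProb1}). For~(\ref{item:prop:soundnessFwdFairBuechiProb1}) the key observation is that idempotency of each arrow---in particular of $d_1\colon Y_1\kto \oF Y$---makes the simplest imaginable dividing $d_{11}:=d_1$, $d_{12}:=d_1$ legal. I would therefore take a given fair simulation without dividing, declare this trivial dividing, and verify that the resulting data meets every clause of Definition~\ref{def:fwdFairBuechiSimWithDiv}.

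First I would check that the dividing condition $[\id_{\oF Y},\id_{\oF Y}]\odot \codomJoin{d_{11},d_{12}}=d_1$ holds. With $d_{11}=d_{12}=d_1$ this is precisely the idempotency of $d_1$: by hypothesis the codomain join $\codomJoin{d_1,d_1}$ exists and the codiagonal composite equals $d_1$. Next I would run through the remaining clauses of Definition~\ref{def:fwdFairBuechiSimWithDiv}. Clause~(\ref{item:f21Andf22}) (ordinary forward simulation), clause~(\ref{item:f11Andf12_2}) (the approximants terminate at $f_{11}$ and $f_{12}$), clause~(\ref{item:f11Andf12_4}) (the base case $f_{12}^{(0)}=\bot$), and clause~(\ref{item:f11Andf12_6}) (the limit step) are stated identically in Definition~\ref{def:fwdFairBuechiSimWithoutDividing} and so are provided by hypothesis. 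The only two clauses that mention the dividing are~(\ref{item:f11Andf12_3}) and~(\ref{item:f11Andf12_5})---the inequalities~(\ref{eq:f11PostFixedPoint}) and~(\ref{eq:item:f11Andf12_3}). Upon substituting $d_{11}=d_{12}=d_1$ they become exactly the inequalities~(\ref{eq:f11PostFixedPointND}) and~(\ref{eq:item:f11Andf12_3ND}) of the without-dividing definition, which are assumed. Existence of the codomain joins appearing inside the functor $\oF$ is inherited from the without-dividing simulation via downward closure (Assumption~\ref{asm:wellDefinedSimulation}.\ref{item:asm:wellDefinedSimulation1}) together with $f_{11}^{(\alpha)}\sqsubseteq f_{11}$ and $f_{12}^{(\alpha)}\sqsubseteq f_{12}$.

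I do not anticipate a real obstacle: the argument is essentially definitional once the choice $d_{11}=d_{12}=d_1$ is made, and the role of the idempotency assumption is precisely to license this choice. Part~(\ref{item:prop:soundnessFwdFairBuechiProb2}) then follows by chaining~(\ref{item:prop:soundnessFwdFairBuechiProb1}) with Theorem~\ref{thm:soundnessFwdFairBuechi}. The mildly interesting conceptual point is the contrast with the probabilistic setting treated in the next subsection, where idempotency fails and the simpler ``copy'' strategy is therefore unavailable---necessitating a genuinely different route to circumventing dividing.
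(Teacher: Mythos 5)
Your proof is correct and follows exactly the route the paper takes: part~(\ref{item:prop:soundnessFwdFairBuechiProb1}) is obtained by choosing the trivial dividing $d_{11}=d_{12}=d_{1}$ (legitimised by idempotency of $d_1$), under which the with-dividing clauses literally become the without-dividing ones, and part~(\ref{item:prop:soundnessFwdFairBuechiProb2}) follows by composing with Theorem~\ref{thm:soundnessFwdFairBuechi}. The paper records this only as ``immediate from the definitions''; your write-up just makes the same check explicit.
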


\begin{proof}
The item (\ref{item:prop:soundnessFwdFairBuechiProb1}) is immediate 
from the definitions of simulation with dividing and one without dividing 
(Definition~\ref{def:fwdFairBuechiSimWithDiv} and Definition~\ref{def:fwdFairBuechiSimWithoutDividing}).
%
The item (\ref{item:prop:soundnessFwdFairBuechiProb2}) follows from 
(\ref{item:prop:soundnessFwdFairBuechiProb1}) and soundness of forward fair simulation with dividing (Theorem~\ref{thm:soundnessFwdFairBuechi}).
\end{proof}
 
%
 

\begin{lem} 
\label{lem:idempotencyOfArrowsPow}
Arrows in $\Kl(\pow)$ are idempotent. Hence all the conditions in
 Proposition~\ref{prop:soundnessFwdFairBuechiProb} are satisfied by $T=\pow$
 and $F=F_{\Sigma}$, where $\Sigma$ is a ranked alphabet.
\qed
\end{lem}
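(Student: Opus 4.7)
The plan is to verify idempotency directly from the concrete description of codomain restriction and join in $\Kl(\pow)$ given in Example~\ref{exa:codomJR}. Fix an arrow $f\colon X\kto Y$ in $\Kl(\pow)$, i.e.\ a function $f\colon X\to \pow Y$. Unlike the probabilistic case, the codomain join carries no subprobability constraint to check, so $\codomJoin{f,f}\colon X\kto Y+Y$ is always defined; moreover, regarding the first $f$ as mapping into the first summand and the second $f$ as mapping into the second, the formula from Example~\ref{exa:codomJR} yields
\begin{equation*}
  \codomJoin{f,f}(x)\;=\;\kappa_{1}[f(x)]\cup\kappa_{2}[f(x)]\;\in\;\pow(Y+Y),
\end{equation*}
where $\kappa_{1},\kappa_{2}\colon Y\hookrightarrow Y+Y$ are the coprojections.

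Next I would compose with the cotupling $[\id_{Y},\id_{Y}]\colon Y+Y\kto Y$, which in $\Kl(\pow)$ is the Kleisli lifting of the set-theoretic cotupling and hence sends both $\kappa_{1}(y)$ and $\kappa_{2}(y)$ to the singleton $\{y\}$. Unfolding Kleisli composition, for each $x\in X$ we obtain
\begin{equation*}
  \bigl([\id_{Y},\id_{Y}]\odot\codomJoin{f,f}\bigr)(x)
  \;=\;
  \bigcup_{(y,i)\,\in\,\codomJoin{f,f}(x)}\{y\}
  \;=\;f(x)\cup f(x)\;=\;f(x),
\end{equation*}
the last equality being the idempotency of set-theoretic union. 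Hence $[\id_{Y},\id_{Y}]\odot\codomJoin{f,f}=f$, establishing idempotency of every arrow in $\Kl(\pow)$.

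To conclude the second assertion I would combine this with Proposition~\ref{prop:concreteCondSatisfied}, which already states that the pair $(\pow,F_{\Sigma})$ satisfies Assumption~\ref{asm:wellDefinedSimulation}; together with the idempotency just verified, all hypotheses of Proposition~\ref{prop:soundnessFwdFairBuechiProb} are met. I do not anticipate any real obstacle, since the argument is essentially an unfolding of definitions; the conceptual content is simply that idempotency of union ($A\cup A = A$) is the feature that distinguishes $\pow$ from $\giry$ (where ``adding $f(x)$ to itself'' would typically violate the subprobability bound), which is precisely why this shortcut to soundness is available in the nondeterministic setting but not in the probabilistic one.
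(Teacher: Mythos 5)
Your verification is correct: the paper states this lemma without proof (treating it as immediate), and your computation—unfolding the concrete codomain join $\codomJoin{f,f}(x)=\kappa_{1}[f(x)]\cup\kappa_{2}[f(x)]$ from Example~\ref{exa:codomJR}, post-composing with $[\id_{Y},\id_{Y}]$, and reducing to $f(x)\cup f(x)=f(x)$—is exactly the intended argument, as is the appeal to Proposition~\ref{prop:concreteCondSatisfied} for the remaining hypotheses. Your closing remark correctly identifies idempotency of union as the precise point where $\pow$ diverges from $\giry$, which is the conceptual content the paper relies on in Section~\ref{subsec:nondetSetting}.
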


There is still a gap between 
the simulation notion  in Definition~\ref{def:fwdFairBuechiSimWithoutDividing} (defined by inequalities)
and that in Definition~\ref{def:simForNondetBuechiRel} (defined by an
equational system).
The gap is filled by
another specific  property of 
$\Kl(\pow)$---\emph{reversibility}. It is much like in the following ``must''
predicate transformers.
%

\begin{lem}
\label{lem:propertyBoxf}
Let $f:B\kto C$ 
be an arrow in $\Kl(\pow)$.
We define 
$\Box_f:\Kl(\pow)(A,C)\to\Kl(\pow)(A,B)$ by
\begin{equation*}
\Box_f(g)(x):=\{y\in B\mid f(y)\subseteq g(x)\}\,.
\end{equation*}
Then we have the following:
\begin{enumerate}
\item\label{item:lem:propertyBoxf1} $f\odot \Box_f(g)\sqsubseteq g$
\item\label{item:lem:propertyBoxf2} $\forall h:A\kto B.\; f\odot h\sqsubseteq g\,\Rightarrow\,h\sqsubseteq \Box_f(g)$\,.
\qed
\end{enumerate}
\end{lem}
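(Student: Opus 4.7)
The plan is to prove the two claims by directly unfolding the Kleisli composition in $\Kl(\pow)$ and working set-theoretically. Recall that for Kleisli arrows $h\colon A\kto B$ and $f\colon B\kto C$ in $\Kl(\pow)$, composition is given by $(f\odot h)(x)=\bigcup_{y\in h(x)}f(y)$, and the order $\sqsubseteq$ is pointwise inclusion. The key observation is that the lemma simply asserts that $\Box_{f}$ is right adjoint to post-composition with $f$, which is an instance of the standard adjunction between image and preimage.

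For claim~(\ref{item:lem:propertyBoxf1}), I would fix $x\in A$ and compute
\begin{equation*}
 (f\odot \Box_f(g))(x)\;=\;\bigcup_{y\in \Box_f(g)(x)}f(y)\;=\;\bigcup_{\{y\in B\mid f(y)\subseteq g(x)\}}f(y)\,,
\end{equation*}
which is a union of subsets of $g(x)$, hence contained in $g(x)$.

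For claim~(\ref{item:lem:propertyBoxf2}), I would assume $f\odot h\sqsubseteq g$ and fix $x\in A$ and $y\in h(x)$. Then $f(y)\subseteq \bigcup_{y'\in h(x)}f(y')=(f\odot h)(x)\subseteq g(x)$, so by the defining condition of $\Box_f(g)(x)$ we obtain $y\in \Box_f(g)(x)$. Thus $h(x)\subseteq \Box_f(g)(x)$, giving $h\sqsubseteq \Box_f(g)$ as required.

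Neither step involves any real obstacle; the whole lemma is a routine unfolding of definitions and amounts to the unit and counit inequalities of a Galois connection between the monotone maps $f\odot(\place)$ and $\Box_f$ on the pointwise-ordered homsets. The only thing to be careful about is remembering the precise direction of Kleisli composition in $\Kl(\pow)$ (nondeterministic relational composition), and keeping the quantifiers over $x\in A$ and $y\in B$ straight.
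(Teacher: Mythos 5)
Your proof is correct and matches what the paper intends: the lemma is stated with no written proof (it is marked as immediate), and the routine unfolding of Kleisli composition $(f\odot h)(x)=\bigcup_{y\in h(x)}f(y)$ together with the pointwise-inclusion order is exactly the argument needed. Both inequalities follow just as you compute them, and your observation that this is the Galois-connection adjunction between $f\odot(\place)$ and $\Box_f$ is the right way to see it.
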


\noindent
The  construction $\Box_{f}$ is used to essentially ``reverse'' the
arrows $c_{1}$ and $c_{2}$ on the right of the
diagrams~(\ref{eq:f11PostFixedPoint}--\ref{eq:item:f11Andf12_3}).
This
allows to separate variables $f_{11}$, $f_{12}$, $f_{21}$ and $f_{22}$
alone 
and yield a (proper) equational system as in~(\ref{eq:1601151731}) below.

\begin{prop}\label{prop:equationalSysEquiv}
Let $g^{\sol}_1:Y_1\kto X_1$, $g^{\sol}_2:Y_1\kto X_2$,
$g^{\sol}_3:Y_2\kto X_1$ and $g^{\sol}_4:Y_2\kto X_2$ be the solution of the following equational system.
\begin{equation}\label{eq:1601151731}
\begin{array}{rll}
g_1 &=_{\nu}\, \Box_{c_1}(\overline{F}[\codomJoin{g_1,g_2},\codomJoin{g_3,g_4}]\odot d_1) &\in\Kl(\pow)(Y_1,X_1) \\ 
g_2 &=_{\mu} \, \Box_{c_2}(\overline{F}[\codomJoin{g_1,g_2},\codomJoin{g_3,g_4}]\odot d_1) &\in\Kl(\pow)(Y_1,X_2) \\ 
g_3 &=_{\nu} \, \Box_{c_1}(\overline{F}[\codomJoin{g_1,g_2},\codomJoin{g_3,g_4}]\odot d_2) &\in\Kl(\pow)(Y_2,X_1) \\ 
g_4 &=_{\nu} \, \Box_{c_2}(\overline{F}[\codomJoin{g_1,g_2},\codomJoin{g_3,g_4}]\odot d_2) &\in\Kl(\pow)(Y_2,X_2)  
\end{array}
\end{equation}
Let ${g}^{\sol}=[\codomJoin{{g}^{\sol}_1,{g}^{\sol}_2},
\codomJoin{{g}^{\sol}_3,{g}^{\sol}_4}]:Y\kto X$.
%
Then $s\sqsubseteq {g}^{\sol}\odot t$ 
if and only if
there is
a fair \newline
$\overline{\alpha}$-bounded simulation without dividing (Definition~\ref{def:fwdFairBuechiSimWithoutDividing})
from $\mathcal{X}$ to $\mathcal{Y}$ for some ordinal $\overline{\alpha}$.
\end{prop}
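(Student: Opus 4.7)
The plan is to establish the equivalence by exhibiting a correspondence between fair simulations without dividing from $\mathcal{X}$ to $\mathcal{Y}$ and progress measures for the equational system~(\ref{eq:1601151731}), and then to appeal to soundness and completeness of progress measures (Theorem~\ref{thm:correctnessOfProgMeasEqSys}). The essential technical bridge is Lemma~\ref{lem:propertyBoxf}, which permits passage between the $\Box_{c_i}$-style post-fixed-point inequalities appearing in~(\ref{eq:1601151731}) and the simulation-style inequalities $c_i \odot h \sqsubseteq \cdots$ appearing in Definition~\ref{def:fwdFairBuechiSimWithoutDividing}.

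For direction $(\Leftarrow)$, suppose there is a fair simulation without dividing $f$ with approximation sequences $(f_{11}^{(\alpha)})_{\alpha\leq\overline{\alpha}}$ and $(f_{12}^{(\alpha)})_{\alpha\leq\overline{\alpha}}$. I would build a progress measure $p$ for~(\ref{eq:1601151731}) by setting $p_1(\alpha) := f_{11}^{(\alpha)}$, $p_2(\alpha) := f_{12}^{(\alpha)}$, and letting $p_3, p_4$ be constant and equal to $f_{21}$ and $f_{22}$ respectively. Monotonicity and the limit clause come immediately from the corresponding properties of the approximation sequences. The $\mu$-variable base and step clauses for $p_2$ (the only $\mu$-variable of~(\ref{eq:1601151731})) translate, via Lemma~\ref{lem:propertyBoxf}.\ref{item:lem:propertyBoxf2}, to the inherited base case $f_{12}^{(0)}=\bot$ and the new step condition~(d') of Definition~\ref{def:fwdFairBuechiSimWithoutDividing}; the $\nu$-clause for $p_1$ is obtained the same way from condition~(b'). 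The $\nu$-clauses for the constant $p_3, p_4$ must instead be derived from the single forward-simulation inequality $c\odot f \sqsubseteq \oF f \odot d$ (condition~(\ref{item:f21Andf22}) of Definition~\ref{def:fwdFairBuechiSimWithDiv}, inherited by Definition~\ref{def:fwdFairBuechiSimWithoutDividing}) by precomposing with $\kappa_2\colon Y_2\to Y$ and decomposing into $X_1$- and $X_2$-components. Once $p$ is verified to be a progress measure, soundness (Theorem~\ref{thm:correctnessOfProgMeasEqSys}.\ref{item:soundnessProgressMeas}) yields $f_{11}\sqsubseteq g^{\sol}_1$, $f_{12}\sqsubseteq g^{\sol}_2$, $f_{21}\sqsubseteq g^{\sol}_3$, $f_{22}\sqsubseteq g^{\sol}_4$; monotonicity of $\codomJoin{\cdot,\cdot}$ and cotupling (Assumption~\ref{asm:wellDefinedSimulation}.\ref{item:asmthm:soundnessFwdFairBuechi2}) then gives $f\sqsubseteq g^{\sol}$, and combining with $s\sqsubseteq f\odot t$ yields the desired $s\sqsubseteq g^{\sol}\odot t$.

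For direction $(\Rightarrow)$, suppose $s\sqsubseteq g^{\sol}\odot t$ and invoke completeness of progress measures (Theorem~\ref{thm:correctnessOfProgMeasEqSys}.\ref{item:completenessProgressMeas}) applied to~(\ref{eq:1601151731}) to obtain a progress measure $p$ that achieves the solution, with $\overline{\alpha_1}\leq\omega$. Setting $f := g^{\sol}$, $f_{11}:=g^{\sol}_1$, $f_{12}:=g^{\sol}_2$, $f_{21}:=g^{\sol}_3$, $f_{22}:=g^{\sol}_4$, together with the approximation sequences $f_{11}^{(\alpha)} := p_1(\alpha)$ and $f_{12}^{(\alpha)} := p_2(\alpha)$, I would then verify all clauses of Definition~\ref{def:fwdFairBuechiSimWithoutDividing} by reversing the translations above: each progress-measure axiom yields the matching simulation condition via Lemma~\ref{lem:propertyBoxf}.\ref{item:lem:propertyBoxf1}. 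The global forward-simulation inequality $c\odot f\sqsubseteq \oF f\odot d$ is reassembled from the four component inequalities using Lemma~\ref{lem:propertyCodomRJ}.\ref{item:lem:propertyCodomRJ1}.

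The main obstacle I anticipate is the structural mismatch between the four equations of~(\ref{eq:1601151731}) (which features separate $\nu$-clauses for $g_3$ and $g_4$) and Definition~\ref{def:fwdFairBuechiSimWithoutDividing} (whose constraint on $f_{21}, f_{22}$ is bundled into the single global forward-simulation inequality rather than placed in dedicated clauses). Bridging this mismatch requires careful manipulation of codomain joins and restrictions, and leverages idempotency specific to $\Kl(\pow)$ (Lemma~\ref{lem:idempotencyOfArrowsPow}) so that the codomain join $\codomJoin{f_{21},f_{22}}$ and the cotupled composition $[c_1,c_2]\odot\codomJoin{f_{21},f_{22}}$ decompose cleanly under restriction to each $X_i$.
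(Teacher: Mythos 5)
Your proposal is correct and follows essentially the same route as the paper: both directions are handled by setting up a correspondence between fair simulations without dividing and progress measures for~(\ref{eq:1601151731}) (approximation sequences $\leftrightarrow$ approximants, with the $g_3,g_4$ entries held constant at $f_{21},f_{22}$), translating between the $\Box_{c_i}$-inequalities and the simulation inequalities via the adjoint properties in Lemma~\ref{lem:propertyBoxf}, invoking completeness of progress measures for the $(\Rightarrow)$ direction and soundness for the $(\Leftarrow)$ direction, and using idempotency in $\Kl(\pow)$ to pass between the four componentwise inequalities and the single forward-simulation inequality. The structural mismatch you flag at the end is real and is resolved exactly as you anticipate.
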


\begin{proof}
As in Definition~\ref{def:fwdFairBuechiSimWithDiv}, we write $f_{ji}:Y_j\kto X_i$ for 
the domain and codomain restriction of $f:Y\kto X$.

\myparagraph{($\Rightarrow$)}
Assume $s\sqsubseteq {g}^{\sol}\odot t$.
By completeness of progress measure (Theorem~\ref{thm:correctnessOfProgMeasEqSys}.\ref{item:completenessProgressMeas}), 
there exists a progress measure 
$g=\left((\overline{\alpha}), (g_i(\alpha))_{1\leq i\leq 4,\alpha\leq\overline{\alpha}}\right)$
such that ${g}^{\sol}_i=g_i(\overline{\alpha})$ for each $i$.
We define 
two sequences 
$(f_{11}^{(\alpha)}:Y_1\kto X_1)_{\alpha\leq\overline{\alpha}}$ and
 $(f_{12}^{(\alpha)}:Y_1\kto X_2)_{\alpha\leq\overline{\alpha}}$ 
by 
$f_{11}^{(\alpha)}=g_1(\alpha)$ and $f_{12}^{(\alpha)}=g_2(\alpha)$.
We define $f:Y\kto X$ by $f=g^{\sol}$.
We show that $f$ is a fair simulation without dividing from $\mathcal{X}$ to $\mathcal{Y}$ 
whose approximation sequences are given by $(f_{11}^{(\alpha)})_{\alpha\leq\overline{\alpha}}$ and
 $(f_{12}^{(\alpha)})_{\alpha\leq\overline{\alpha}}$.

We first show that $f$ satisfies Condition~(\ref{item:f21Andf22}) in Definition~\ref{def:fwdFairBuechiSimWithDiv}. 
%
We have:
\begin{align}
&c\odot f \notag\\
&=c\odot\left[\codomJoin{f_{11},f_{12}},\codomJoin{f_{21},f_{22}}\right] \notag\\
&\sqsubseteq c\odot \left[
\codomJoin{
\Box_{c_1}\left(\overline{F}f\odot d_1\right),
\Box_{c_2}\left(\overline{F}f\odot d_1\right)}, 
\codomJoin{
\Box_{c_1}\left(\overline{F}f\odot d_2\right),
\Box_{c_2}\left(\overline{F}f\odot d_2\right)}
\right] \notag\\
&=\left[
\codomJoin{
c_1\odot\Box_{c_1}\left(\overline{F}f\odot d_1\right),
c_2\odot \Box_{c_2}\left(\overline{F}f\odot d_1\right)}, 
\codomJoin{
c_1\odot\Box_{c_1}\left(\overline{F}f\odot d_2\right),
c_2\odot \Box_{c_2}\left(\overline{F}f\odot d_2\right)}
\right]\notag\\
&\sqsubseteq\left[
\codomJoin{
\overline{F}f\odot d_1,
\overline{F}f\odot d_1}, 
\codomJoin{
\overline{F}f\odot d_2,
\overline{F}f\odot d_2}
\right] 
&\tag*{\text{(by Lemma~\ref{lem:propertyBoxf}.\ref{item:lem:propertyBoxf1})}}\\ 
&=\left[
\overline{F}f\odot d_1, 
\overline{F}f\odot d_2
\right] 
&\tag*{
\text{(by  Lemma~\ref{lem:idempotencyOfArrowsPow})}}\\ 
&=
\overline{F}f\odot d\,.  \label{eq:1702051606}
\end{align}
Moreover, by the assumption, we have
$s \sqsubseteq f \odot t$.

Next we show that $f$ satisfies 
the Condition~(\ref{item:f11Andf12}') in Definition~\ref{def:fwdFairBuechiSimWithoutDividing}.
Note that $g=\left((\overline{\alpha}), (g_i(\alpha))_{1\leq i\leq 4,\alpha\leq\overline{\alpha}}\right)$
satisfies 
the axioms of progress measure (Definition~\ref{def:progMeas}).
It is immediate that this implies that conditions \ref{item:f11Andf12_2}, \ref{item:f11Andf12_4} and \ref{item:f11Andf12_6} in Definition~\ref{def:fwdFairBuechiSimWithDiv} are satisfied.
Moreover in a similar manner to (\ref{eq:1702051606}) above,
this implies that 
2'b' and 2'd' in Definition~\ref{def:fwdFairBuechiSimWithoutDividing} are also satisfied.
Therefore 
$f$ is a fair $\overline{\alpha}$-bounded simulation without dividing.

\myparagraph{($\Leftarrow$)}
Conversely, let $f:Y\kto X$ be a  fair simulation without dividing from $\mathcal{X}$ to $\mathcal{Y}$
whose approximation sequences are given by 
$(f_{11}^{(\alpha)})_{\alpha\leq\overline{\alpha}}$ and $(f_{12}^{(\alpha)})_{\alpha\leq\overline{\alpha}}$.
 For each $\alpha\leq\overline{\alpha}$,
 we define arrows $g_1(\alpha):Y_1\kto X_1$, $g_2(\alpha):Y_1\kto X_2$, $g_3(\alpha):Y_2\kto X_1$ and
 $g_4(\alpha):Y_2\kto X_2$, by 
 $g_1(\alpha)=f_{11}^{(\alpha)}$, $g_2(\alpha)=f_{12}^{(\alpha)}$, $g_3(\alpha)=f_{21}$ and $g_4(\alpha)=f_{22}$.
Then by using Lemma~\ref{lem:propertyBoxf}.\ref{item:lem:propertyBoxf2},
we can easily show that 
Condition~(\ref{item:f21Andf22}) in Definition~\ref{def:fwdFairBuechiSimWithDiv} and
Condition~(\ref{item:f11Andf12}') in Definition~\ref{def:fwdFairBuechiSimWithoutDividing}, together with monotonicity of 
$f_{11}^{(\alpha)}$ and $f_{12}^{(\alpha)}$, imply that $g$ satisfies the axioms of a progress measure 
(Definition~\ref{def:progMeas}) with respect to the equational system (\ref{eq:1601151731}).
\end{proof}

\noindent
By Proposition~\ref{prop:equationalSysEquiv} 
and that $\overline{F}$ and $\odot$ in $\Kl(\pow)$ are $\alpha$-continuous
for an arbitrary limit ordinal $\alpha$,
it is not hard to translate
Proposition~\ref{prop:soundnessFwdFairBuechiProb} into
Theorem~\ref{thm:soundnessOfSimulationEqSys}. 

\begin{proof}[Proof (Theorem\ \ref{thm:soundnessOfSimulationEqSys})] \sloppypar
Let $\mathcal{X}$ and $\mathcal{Y}$ be NBTAs and
\begin{center}
$\mathcal{X}=((X_1,X_2),c,s)$ and $\mathcal{Y}=((Y_1,Y_2),d,t)$
\end{center}
be the corresponding B\"uchi $(\pow,F_\Sigma)$-systems
(see Example~\ref{exa:NBTAandPBWAInduceBuechiSystems}).

Note that
for each $A,B\in\Sets$,
a  function $\Delta_{A,B}:\pow(A\times B)\to \Kl(\pow)(B,A)$ 
that is defined by $\Delta_{A,B}(S)(b):=\{a\in A\mid (a,b)\in S\}$ is a bijection.
%
It is  easy to see that 
for the functions in Definition~\ref{def:simForNondetBuechiRel} and Lemma~\ref{lem:propertyBoxf},
we have the following (recall that $F_{\Sigma}A=\coprod_{\sigma\in \Sigma}A^{|\sigma|}$).
%
\[
\begin{array}{rll}
\Delta_{X_i,Y}(\Box_{\mathcal{X},i}(S))&=\Box_{c_i}(\Delta_{FX,B'}(S))
&\text{for $i\in{1,2}$ and $S\subseteq \coprod_{\sigma\in\Sigma}X^{|\sigma|}\times Y$} \\
\Delta_{FX,Y_j}(\Diamond_{\mathcal{Y},j}(T))&=(\Delta_{FX,FY}(T))\odot d_j \quad
&\text{for $j\in\{1,2\}$ and $T\subseteq \coprod_{\sigma\in\Sigma}X^{|\sigma|} \times \coprod_{\sigma\in\Sigma}Y^{|\sigma|}$} \\
\Delta_{FX,FY}(\textstyle{\bigwedge_{\Sigma}(U)})&=\overline{F}(\Delta_{X,Y}(U))
&\text{for $U\subseteq X\times Y$}
\end{array}
\]
It is also easy to see that $R\subseteq Y\times X$ satisfies 
$\forall x\in \initSet_\mathcal{X}.\; \exists y\in \initSet_\mathcal{Y}.\; (x,y)\in R$
if and only if $s\sqsubseteq \Delta_{Y,X}(R)\odot t$.


Hence together with Proposition~\ref{prop:soundnessFwdFairBuechiProb} and Proposition~\ref{prop:equationalSysEquiv}, 
we have:
\begin{align*}
&\text{$R\subseteq X\times Y$ is a fair simulation from $\mathcal{X}$ to $\mathcal{Y}$ in the sense of Definition~\ref{def:simForNondetBuechiRel}} \\
&\Leftrightarrow 
\text{$\forall x\in \initSet_\mathcal{X}.\; \exists y\in \initSet_\mathcal{Y}.\; (x,y)\in R$,}\\
&\qquad \text{and $R\subseteq u^\sol_1\cup u^\sol_2\cup u^\sol_3\cup u^\sol_4$ where $u_1^\sol,\ldots,u_4^\sol$ are the solution of (\ref{eq:1601061008Rel})} \\
&\Leftrightarrow
\text{$s\sqsubseteq \Delta_{Y,X}(R)\odot t$,}\\
&\qquad \text{and $\Delta_{X,Y}(R)\sqsubseteq [\codomJoin{{g}^{\sol}_1,{g}^{\sol}_2},
\codomJoin{{g}^{\sol}_3,{g}^{\sol}_4}]$ where $g_1^\sol,\ldots,g_4^\sol$ are the solution of (\ref{eq:1601151731})} \\
&\Leftrightarrow \text{there is
a fair 
$\overline{\alpha}$-bounded simulation without dividing 
from $\mathcal{X}$ to $\mathcal{Y}$ for some  $\overline{\alpha}$} \\
&\Rightarrow\text{language inclusion, that is, $\lang(\mathcal{X})\subseteq\lang(\mathcal{Y})$}\,.
\end{align*}
%
%
%
This concludes the proof.
\end{proof}

 The results here in Section~\ref{subsec:nondetSetting} are 
axiomatic---with idempotency and reversibility---and they 
 apply to monads other than $\pow$. One  example is the
\emph{lift monad} $\lift =1+(\place)$,
which is used for potential nontermination (see~\cite{hasuo07generictrace} for example).

\subsection{Circumventing Dividing: the Probabilistic Case}\label{subsec:specificSettingIIProb}
We turn to the probabilistic setting and 
prove Theorem~\ref{thm:soundnessOfSimulationGiryMainMatrix}.
The 
strategy for $T=\pow$ does not work here, because
$\giry$ lacks idempotency (see
Proposition~\ref{prop:soundnessFwdFairBuechiProb}). 
We shall rely on other restrictions: from trees to words;
and  finite state spaces on the simulating side.

We start with an axiomatic development.

\begin{prop}\label{prop:unjoinedbotSoundMain2}
Besides
Assumption~\ref{asm:wellDefinedSimulation} and 
 the assumptions in Theorem~\ref{thm:soundnessFwdFairBuechi},
  assume 
  $d_1=\overline{F}(\id_{Y_1}+\bot_{Y_2,Y_2})\odot d_1$\,.
Then existence of a  fair 
simulation \emph{without dividing}
from $\mathcal{X}$ to $\mathcal{Y}$ 
implies
trace inclusion.
\end{prop}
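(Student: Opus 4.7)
My plan is to reduce Proposition~\ref{prop:unjoinedbotSoundMain2} to Theorem~\ref{thm:soundnessFwdFairBuechi} by exhibiting, from the given fair simulation $f$ without dividing, data that constitute a fair simulation with dividing. The natural candidate dividing is $d_{11} := d_1$ and $d_{12} := \bot_{Y_1, \overline{F}Y}$; the dividing condition $[\id, \id]\odot \codomJoin{d_{11},d_{12}}=d_1$ holds by Lemma~\ref{lem:propertyCodomRJ}(\ref{item:lem:propertyCodomRJ2}), and with $d_{11}=d_1$ the condition on $f_{11}^{(\alpha)}$ in Definition~\ref{def:fwdFairBuechiSimWithDiv}(\ref{item:f11Andf12_3}) is literally condition~(b$'$) of Definition~\ref{def:fwdFairBuechiSimWithoutDividing} and so is transported for free.

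The delicate point is the $f_{12}^{(\alpha)}$ conditions. Since a direct reuse of the given approximation sequence does not satisfy Definition~\ref{def:fwdFairBuechiSimWithDiv}(\ref{item:f11Andf12_5}) with $d_{12}=\bot$ in general, I would instead emulate the proof strategy of Lemma~\ref{lem:soundnessFwdFairBuechiSim} directly, bypassing the detour through the simulation-with-dividing definition. Concretely, by completeness of progress measures (Theorem~\ref{thm:correctnessOfProgMeasEqSys}.\ref{item:completenessProgressMeas}), fix a progress measure $p_{\mathcal{X}} = ((\overline{\beta_1}),(u_1(\beta_1),u_2(\beta_1))_{\beta_1})$ for the equational system defining $\trB(c_1),\trB(c_2)$, with $\overline{\beta_1}\le\omega$. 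I would then construct a progress measure for the analog of the equational system (\ref{eq:eqSysSublemsoundnessFwdFairBuechiSim10}) of Sublemma~\ref{sublem:soundnessFwdFairBuechiSim1}, but with $d_1$ in place of both $d_{11}$ and $d_{12}$, using approximants
\[
h_1(\gamma_1,\gamma_2) = u_1(\gamma_1)\odot f_{11}^{(\gamma_2)}, \qquad h_2(\gamma_1,\gamma_2) = u_2(\overline{\gamma_1})\odot f_{12}^{(\gamma_2)},
\]
\[
h_3(\gamma_1,\gamma_2) = [u_1(\overline{\gamma_1}), u_2(\overline{\gamma_1})]\odot \codomJoin{f_{21},f_{22}}.
\]
The verification of the progress measure axioms follows the template of Sublemma~\ref{sublem:soundnessFwdFairBuechiSim1}, with one crucial modification: in the $\mu$-step case for $h_2$, where Sublemma~\ref{sublem:soundnessFwdFairBuechiSim1} invokes the dividing component $d_{12}$, I would instead invoke condition (d$'$) of Definition~\ref{def:fwdFairBuechiSimWithoutDividing} together with the extra assumption. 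The assumption rewrites
\[
\overline{F}\bigl[\codomJoin{f_{11}^{(\gamma_2)},f_{12}^{(\gamma_2)}},\codomJoin{f_{21},f_{22}}\bigr]\odot d_1
\;=\;
\overline{F}\bigl[\codomJoin{f_{11}^{(\gamma_2)},f_{12}^{(\gamma_2)}},\bot\bigr]\odot d_1,
\]
so that $\codomJoin{f_{21},f_{22}}$ effectively disappears and the RHS of (d$'$) matches the structure expected for the progress measure step at $h_2$. The analogous simplification applies to (b$'$) for the $h_1$ step.

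Once the progress measure is established, an analog of Sublemma~\ref{sublem:soundnessFwdFairBuechiSim2}—whose proof is essentially unchanged because it only uses the equational-system level structure—gives $[\trB(c_1),\trB(c_2)]\odot \codomJoin{f_{11},f_{12}}\sqsubseteq \trB(d_1)$ and $[\trB(c_1),\trB(c_2)]\odot \codomJoin{f_{21},f_{22}}\sqsubseteq \trB(d_2)$. Combined with $s\sqsubseteq f\odot t$ from the forward-simulation condition, cotupling and monotonicity of $[\trB(c_1),\trB(c_2)]\odot(-)$ yield $\trB(\mathcal{X})\sqsubseteq \trB(\mathcal{Y})$. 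The main obstacle is the $\mu$-step verification in the adapted Sublemma~\ref{sublem:soundnessFwdFairBuechiSim1}: Sublemma~\ref{sublem:soundnessFwdFairBuechiSim1} exploits the cleanly separated $d_{11}, d_{12}$ to avoid double-counting, and the extra assumption is precisely what restores the analogous cleanness here, by making the $Y_2$-branch vanish uniformly on the right-hand sides of (b$'$) and (d$'$).
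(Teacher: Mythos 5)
Your first half is on the right track and matches the paper's own use of the hypothesis: rewriting $\overline{F}[g_1,g_2]\odot d_1=\overline{F}[g_1,\bot]\odot d_1$ via $d_{1}=\overline{F}(\id_{Y_1}+\bot_{Y_2,Y_2})\odot d_{1}$ and right-strictness is exactly the manipulation the paper performs at each step. The genuine gap is your claim that the analog of Sublemma~\ref{sublem:soundnessFwdFairBuechiSim2} goes through ``essentially unchanged'' because it ``only uses the equational-system level structure.'' It does not: its $\mu$-step hinges on the dividing identity $[\id_{FY},\id_{FY}]\odot\codomJoin{d_{11},d_{12}}= d_{1}$, through the chain
$[\id_Z,\id_Z]\odot\codomJoin{\cdots\odot d_{11},\,\cdots\odot d_{12}}=\cdots\odot[\id_{FY},\id_{FY}]\odot\codomJoin{d_{11},d_{12}}\sqsubseteq\cdots\odot d_1$.
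With $d_{11}=d_{12}=d_{1}$ this step would require $[\id,\id]\odot\codomJoin{d_{1},d_{1}}\sqsubseteq d_{1}$, which fails for $T=\giry$ (the left-hand side is ``$2d_1$'', and $\codomJoin{d_1,d_1}$ need not even be defined as a subprobability); likewise the remark closing Sublemma~\ref{sublem:soundnessFwdFairBuechiSim1}, that $\codomJoin{h_1,h_2}$ exists for solutions, relies on $h_1,h_2$ drawing on the disjoint resources $d_{11},d_{12}$. This double counting of $d_1$ between the two $\mu$-equations of your merged system is precisely the failure mode exhibited by Example~\ref{example:joinedSimNotSound}, so no purely formal transfer of Sublemma~\ref{sublem:soundnessFwdFairBuechiSim2} can work.

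To close the gap you must invoke the hypothesis a second time, to show that the $\mu$-components are not merely bounded by the merged system's solution but are in fact $\bot$. That is what the paper's proof does, and it thereby dispenses with the merged equational system altogether: a double transfinite induction on $(\gamma_1,\gamma_2)$ shows $u_1(\gamma_1)\odot f_{11}^{(\gamma_2)}=u_2(\gamma_1)\odot f_{12}^{(\gamma_2)}=\bot$ for all indices, because at each step the hypothesis annihilates the $Y_2$-branch and the remaining right-hand side is built only from induction hypotheses that are already $\bot$. Hence $[\trB(c_1),\trB(c_2)]\odot\codomJoin{f_{11},f_{12}}=\bot\sqsubseteq\trB(d_1)$ holds trivially, and the $\nu$-component $[\trB(c_1),\trB(c_2)]\odot\codomJoin{f_{21},f_{22}}$ is handled separately by exhibiting it as a post-fixed point of the map whose greatest fixed point is $\trB(d_2)$ and applying Knaster--Tarski. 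Your progress-measure scaffolding is not wrong, but it only postpones this computation; without it, the comparison of the merged system against $\trB(d_1)$ is exactly the step that is unsound in general.
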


\begin{proof}
Let $(f_{ij}:Y_i\kto X_j)_{i,j\in\{1,2\}}$ be a fair simulation without dividing from $\mathcal{X}$ to $\mathcal{Y}$.
Let $f_{11}^{(0)} \sqsubseteq f_{11}^{(1)} \sqsubseteq \cdots \sqsubseteq f_{11}^{(\overline{\alpha})} =f_{11}$
and $f_{12}^{(0)} \sqsubseteq f_{12}^{(1)} \sqsubseteq \cdots \sqsubseteq f_{12}^{(\overline{\alpha})} =f_{12}$
be the approximation sequences.

Recall that 
$\trB(c_1):X_1\kto Z$ and $\trB(c_2):X_2\kto Z$
are given by the solutions $u_1^\sol$ and $u_2^\sol$ of the following equational system.
(Theorem~\ref{thm:sanityCheckResult}).
	\begin{equation}\label{eq:1702080946}
	 \begin{aligned}
	  u_{1} 
	  &\;=_{\mu}\;
	  (J\zeta)^{-1}
	  \odot
	  \oF[u_{1},u_{2}]
	  \odot
	  c_{1}
	  &\in\Kl(T)(X_1,Z)
	  \\
	  u_{2} 
	  &\;=_{\nu}\;
	  (J\zeta)^{-1}
	  \odot
	  \oF[u_{1},u_{2}]
	  \odot
	  c_{2}
	  	  &\in\Kl(T)(X_2,Z)
	 \end{aligned}
	\end{equation}	
%
By completeness of progress measure (Theorem~\ref{thm:correctnessOfProgMeasEqSys}.\ref{item:completenessProgressMeas}), 
there exists a progress measure 
\[
p_\X=\bigl((\overline{\beta}),(u_1(\beta):X_1\kto Z,u_2(\beta):X_2\kto Z)_{\beta\leq\overline{\beta}}\bigr)
\]
for (\ref{eq:1702080946}) 
such that
$u_1(\overline{\beta})=\trB(c_1)$ and $u_2(\overline{\beta})=\trB(c_2)$.

We define two ordinals $\overline{\gamma_1}$ and 
$\overline{\gamma_2}$ by $\overline{\gamma_1}=\overline{\beta}$ and
$\overline{\gamma_2}=\overline{\alpha}$.
%
%
For each $\gamma_1\leq \overline{\gamma_1}$ 
and $\gamma_2\leq \overline{\gamma_2}$,
we define 
$h_1(\gamma_1,\gamma_2):Y_1\kto Z$ and
$h_2(\gamma_1,\gamma_2):Y_1\kto Z$ 
\[
h_1(\gamma_1,\gamma_2)= u_{1}(\gamma_1)\odot f^{(\gamma_2)}_{11},  \quad\text{and}\quad
h_2(\gamma_1,\gamma_2)= u_{2}(\gamma_1)\odot f^{(\gamma_2)}_{12}, \\
\]

We shall prove $h_1(\gamma_1,\gamma_2)= h_2(\gamma_1,\gamma_2)=\bot$ by transfinite induction on $\gamma_1$ and $\gamma_2$.

\myparagraph{(base case)}
If $\gamma_1=0$, for each $\gamma_2\leq\overline{\gamma_2}$ we have:
\begin{align*}
h_1(\gamma_1,\gamma_2)&=u_{1}(0)\odot f^{(\gamma_2)}_{11} & \tag*{(\text{by definition})} \\
&=\bot\odot f^{(\gamma_2)}_{11} & \tag*{(\text{$p_\X$ is a progress measure})} \\
&=\bot & \tag*{(\text{by Condition~(\ref{item:asmthm:soundnessFwdFairBuechi3-2}) in Theorem~\ref{thm:soundnessFwdFairBuechi}})}
\end{align*}
Similarly, if $\gamma_2=0$, for each $\gamma_1\leq\overline{\gamma_1}$ we have:
$h_2(\gamma_1,\gamma_2)=\bot$\,.

\myparagraph{(step case)}
Assume 
we have 
$h_1(\gamma_1,\gamma_2+1)=h_2(\gamma_1,\gamma_2+1)=\bot$.
Then we have:
\allowdisplaybreaks[4]
\begin{align*}
&h_1(\gamma_1+1,\gamma_2+1)\\
&=u_{1}(\gamma_1+1)\odot f^{(\gamma_2+1)}_{11} & \tag*{(\text{by definition})} \\
&\sqsubseteq J\zeta^{-1}\odot\overline{F}\bigl[u_1(\gamma_1),u_2(\gamma_1)\bigr]\odot c_1\odot  f^{(\gamma_2+1)}_{11} & \tag*{(\text{$p_\X$ is a progress measure})} \\
&\sqsubseteq J\zeta^{-1}\odot\overline{F}\bigl[u_1(\gamma_1),u_2(\gamma_1)\bigr]\odot \overline{F}
\bigl[\codomJoin{f_{11}^{(\gamma_2+1)},f_{12}^{(\gamma_2+1)}},\codomJoin{f_{21},f_{22}}\bigr]\odot d_1  \hspace{-20cm}& \\
& & \tag*{(\text{$f$ is a forward fair simulation})} \\
&= J\zeta^{-1}\odot\overline{F}\bigl[u_1(\gamma_1),u_2(\gamma_1)\bigr]\odot \overline{F}
\bigl[\codomJoin{f_{11}^{(\gamma_2+1)},f_{12}^{(\gamma_2+1)}},\codomJoin{f_{21},f_{22}}\bigr]\odot 
\overline{F}(\id+\bot)\odot d_1 \hspace{-20cm} & \\
& & \tag*{(\text{by the assumption})} \\
&= J\zeta^{-1}\odot\overline{F}\Bigl[
\bigl[u_1(\gamma_1),u_2(\gamma_1)\bigr]\odot\codomJoin{f_{11}^{(\gamma_2+1)},f_{12}^{(\gamma_2+1)}}\odot \id, 
\hspace{-20cm} & \\
& &  
\bigl[u_1(\gamma_1),u_2(\gamma_1)\bigr]\odot\codomJoin{f_{21},f_{22}}\odot \bot\Bigr]
\odot d_1   \\
&= J\zeta^{-1}\odot\overline{F}\Bigl[
\bigl[\id,\id\bigr]\odot\codomJoin{u_1(\gamma_1)\odot f_{11}^{(\gamma_2+1)},u_2(\gamma_1)\odot f_{12}^{(\gamma_2+1)}}, 
\bot\Bigr]
\odot d_1  \hspace{-20cm} &  \\
&= J\zeta^{-1}\odot\overline{F}\Bigl[
\bigl[\id,\id\bigr]\odot\codomJoin{h_1(\gamma_1,\gamma_2+1),h_2(\gamma_1,\gamma_2+1)}, 
\bot\Bigr]
\odot d_1  \hspace{-20cm} & \tag*{\text{(by definition)}} \\
&= J\zeta^{-1}\odot\overline{F}\Bigl[
\bigl[\id,\id\bigr]\odot\codomJoin{\bot,\bot}, \bot\Bigr]
\odot d_1  \hspace{-20cm} &   \tag*{(\text{by the induction hypothesis})} \\
&=\bot\,.  \hspace{-20cm}
\end{align*}
Similarly, if 
$h_1(\gamma_1+1,\gamma_2)=h_2(\gamma_1+1,\gamma_2)=\bot$,
then we have:
%
$h_2(\gamma_1+1,\gamma_2+1)=\bot$\,.

\myparagraph{(limit case)}
Assume that $\gamma_1$ is a limit ordinal and we have 
$h_1(\gamma'_1,\gamma_2)=h_2(\gamma'_1,\gamma_2)=\bot$ for each $\gamma'_1<\gamma_1$, and
$h_1(\gamma_1,\gamma'_2)=h_2(\gamma_1,\gamma'_2)=\bot$ for each $\gamma'_2<\gamma_2$.

We first prove $h_1(\gamma_1,\gamma_2)=\bot$.
\begin{align*}
h_1(\gamma_1,\gamma_2) 
&=u_1(\gamma_1)\odot f_{11}^{(\gamma_2)} & \tag*{(\text{by definition})} \\
&\sqsubseteq\Bigl( \bigsqcup_{\gamma'_1<\gamma_1}u_1(\gamma'_1)\Bigr)\odot f_{11}^{(\gamma_2)} & \tag*{(\text{$p_\X$ is a progress measure})} \\
&= \bigsqcup_{\gamma'_1<\gamma_1}\Bigl(u_1(\gamma'_1)\odot f_{11}^{(\gamma_2)}\Bigr) & 
\tag*{(\text{$\Kl(T)$ is $\Cppo$-enriched})} \\
&= \bigsqcup_{\gamma'_1<\gamma_1}h_1(\gamma'_1,\gamma_2) & 
\tag*{(\text{by definition})} \\
&=\bot & \tag*{(\text{by IH})}
\end{align*}
We next prove $h_2(\gamma_1,\gamma_2)=\bot$.
If $\gamma_2$ is  zero or a successor ordinal, we can prove $h_2(\gamma_1,\gamma_2)=\bot$ 
much like the base case and the step cases in the above.
If $\gamma_2$ is a limit ordinal, 
then in a similar manner to the above, we have
$h_2(\gamma_1,\gamma_2) =\bot$\,.
%
Hence we have $h_1(\gamma_1,\gamma_2)=h_2(\gamma_1,\gamma_2)=\bot$.

Similarly, for a limit ordinal $\gamma_2\leq\overline{\gamma_2}$ such that
$h_1(\gamma'_1,\gamma_2)=h_2(\gamma'_1,\gamma_2)=\bot$ for each $\gamma'_1<\gamma_1$, and
$h_1(\gamma_1,\gamma'_2)=h_2(\gamma_1,\gamma'_2)=\bot$ for each $\gamma'_2<\gamma_2$,
we have $h_1(\gamma_1,\gamma_2)=h_2(\gamma_1,\gamma_2)=\bot$.

Hence we have $h_1(\gamma_1,\gamma_2)=h_2(\gamma_1,\gamma_2)=\bot$ for each $\gamma_1\leq\overline{\gamma_1}$ and
$\gamma_2\leq\overline{\gamma_2}$.
Therefore we have:
\begin{equation}\label{eq:1702091156}
[\trB(c_1),\trB(c_2)]\odot\codomJoin{f_{11},f_{12}}=\bot\sqsubseteq\trB(d_1)\,. 
\end{equation}

We define $h_3:Y_2\to Z$ by 
$h_3=\bigl[\trB(c_1),\trB(c_1)\bigr]\odot\codomJoin{f_{21},f_{22}}$.
Then we have:
\begin{align*}
h_3
&=\bigl[\trB(c_1),\trB(c_1)\bigr]\odot\codomJoin{f_{21},f_{22}} & \tag*{(\text{by definition})} \\
&\sqsubseteq 
J\zeta^{-1}\odot \overline{F}\Bigl[
\bigl[\trB(c_1),\trB(c_1)\bigr]\odot \bigl[f_{11},f_{12}\bigr],
\bigl[\trB(c_1),\trB(c_1)\bigr]\odot \bigl[f_{21},f_{22}\bigr]
\Bigr]\odot d_2 \hspace{-5cm} & \\
&
& \tag*{(\text{similarly to the above})} \\
&=
J\zeta^{-1}\odot \overline{F}\bigl[
\bot,
h_3\bigr]\odot d_2
& \tag*{(\text{by definition and discussions above})} \\
&\sqsubseteq
J\zeta^{-1}\odot \overline{F}\bigl[
l_1^{(1)}(h_3),
h_3\bigr]\odot d_2\,.
\end{align*}
Here $l_{1}^{(1)}:Y_1\to Z$ denotes the first interim solution in Definition~\ref{def:solOfEqSys}.
Note that $\trB(d_2):Y_2\to Z$ is the greatest fixed point of the following function.
\[
g\; \mapsto \; J\zeta^{-1}\odot \overline{F}\bigl[
l_1^{(1)}(g),
g\bigr]\odot d_2 
\]
Hence by the Knaster-Tarski theorem, we have 
\begin{equation}\label{eq:1702091212}
[\trB(c_1),\trB(c_2)]\odot\codomJoin{f_{21},f_{22}}=h_3\sqsubseteq\trB(d_2)\,. 
\end{equation}
By (\ref{eq:1702091156}) and (\ref{eq:1702091212}),
in a similar manner to the proof of Theorem~\ref{thm:soundnessFwdFairBuechi},
we can prove $\trB(\mathcal{X})\sqsubseteq\trB(\mathcal{Y})$.
\end{proof}

The following (non-coalgebraic) lemma states that 
if $T=\giry$, $F=\myalphabet\times(\place)$ and 
the state space of $\mathcal{Y}$ is finite,
 then
we can modify $\mathcal{Y}$ so that the assumption in Proposition~\ref{prop:unjoinedbotSoundMain2}
holds without changing its language.
 The modification is derived from  the well-known \emph{fairness} result on
  Markov chains (see~\cite[Chapter~10]{baierK08principlesofmodelchecking} for example).
   Concretely, 
this result states that a nonaccepting state $\nonaccstate$ from which an accepting state is 
reachable
in
a positive probability
can be changed into an accepting state $\accstate$.
The proof uses the notion of \emph{bottom strongly} \linebreak
\emph{connected component}. 

\begin{lem}\label{lem:nonAcctoAccMain2}
Let $\myalphabet$ be  a countable set and
$\mathcal{Y}=((Y_1,Y_2),d,t)$ be a B\"uchi $(\giry,\myalphabet\times(\place))$-system
such that
$Y_1$ and $Y_2$ are finite sets.
%
Let $y_{>0}\in Y_1$ be a state 
such that
$d(y_{>0})(\myalphabet\times Y_2)>0$.
We define a B\"uchi $(\giry,\myalphabet\times(\place))$-system $\mathcal{Y}'=((Y'_1,Y'_2),d',t')$ by:
$Y'_1=Y_1\setminus \{y_{>0}\}$, $Y'_2=Y_2+\{y_{>0}\}$, $d'=d$ and $t'=t$. 
Note that $d'$ and $t'$ are well-defined because $Y'_1+Y'_2=Y_1+Y_2$.

Then we have $[\trB(d_1),\trB(d_2)]=[\trB(d'_1),\trB(d'_2)]$, and
moreover, $\trB(\mathcal{Y})=\trB(\mathcal{Y}')$.
\end{lem}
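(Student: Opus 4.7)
The plan is to reduce the coalgebraic statement to a concrete measure-theoretic claim about accepting runs, and then invoke the classical Markov-chain fairness argument to show that the two systems differ only on a null set.

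First, I would establish a state-by-state reduction. Since $Y = Y_1 + Y_2 = Y'_1 + Y'_2$ carries the discrete $\sigma$-algebra as an object of $\Meas$, two parallel arrows $Y \kto \myalphabet^\omega$ in $\Kl(\giry)$ coincide iff they agree pointwise on $Y$. For each $y \in Y$, let $\mathcal{Y}_y$ and $\mathcal{Y}'_y$ be the PBWAs obtained from $\mathcal{Y}$ and $\mathcal{Y}'$ by replacing the initial distribution by the Dirac measure $\delta_y$; then by Theorem~\ref{thm:sanityCheckResult}.(\ref{item:thm:sanityCheckResult:ProbSanity}) one has $[\trB(d_1),\trB(d_2)](y) = \lang(\mathcal{Y}_y)$ and $[\trB(d'_1),\trB(d'_2)](y) = \lang(\mathcal{Y}'_y)$. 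So it suffices to show $\lang(\mathcal{Y}_y) = \lang(\mathcal{Y}'_y)$ for every $y$. The second assertion $\trB(\mathcal{Y}) = \trB(\mathcal{Y}')$ then follows from the first by composing with $t = t'$.

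Second, I would note that $d' = d$ together with the matching Dirac initial distributions means that Definition~\ref{def:muXOnRunX} produces the very same subprobability measure on the common underlying space $\Run^\giry_{\mathcal{Y}_y} = \Run^\giry_{\mathcal{Y}'_y}$; call it $\mu_y$. Hence by Definition~\ref{def:muXOnTreeSigma} it is enough to show that, for every $w \in \myalphabet^*$, $\mu_y\bigl(\DelSt^{-1}(\Cyl(w)) \cap \AccRun^\giry_\mathcal{Y}\bigr) = \mu_y\bigl(\DelSt^{-1}(\Cyl(w)) \cap \AccRun^\giry_{\mathcal{Y}'}\bigr)$. The inclusion $\AccRun^\giry_\mathcal{Y} \subseteq \AccRun^\giry_{\mathcal{Y}'}$ is immediate from $Y_2 \subseteq Y'_2$, so this reduces to verifying that $D := \AccRun^\giry_{\mathcal{Y}'} \setminus \AccRun^\giry_\mathcal{Y}$ is $\mu_y$-null. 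Unfolding the definitions, $D$ consists exactly of those runs that visit $y_{>0}$ infinitely often while visiting $Y_2$ only finitely often.

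Third, I would prove $\mu_y(D) = 0$ by a Borel--Cantelli-style estimate on Markov chains. Write $D = \bigcup_{N \in \nat} D_N$, where $D_N$ is the subset of $D$ on which all states at positions $\geq N$ avoid $Y_2$; each $D_N$ is measurable by a countable-union argument like that of Lemma~\ref{lem:acceptingRunsAreMeasurableWord}. Set $p := d(y_{>0})(\myalphabet \times Y_2) > 0$, and for each $n$ let $D_N^{(n)} \supseteq D_N$ be the set of runs that visit $y_{>0}$ at least $n$ times at positions $\geq N$ and never visit $Y_2$ at positions $\geq N$. Expressing $D_N^{(n)}$ as a disjoint countable union of cylinders $\Cyl_\mathcal{Y}(\xi)$ indexed by finite prefixes in which the final state $y_{>0}$ marks the $n$-th visit past $N$ and none of the states at positions $N,\ldots,k$ lies in $Y_2$, the recursive formula of Definition~\ref{def:muXOnRunX} lets one extend each $\xi$ by one step and split its cylinder into the $Y_2$-part (with mass $p \cdot \mu_y(\Cyl_\mathcal{Y}(\xi))$) and the non-$Y_2$-part (with mass at most $(1-p)\cdot \mu_y(\Cyl_\mathcal{Y}(\xi))$); only the latter contributes to $D_N^{(n+1)}$ after the next visit to $y_{>0}$. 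An induction on $n$ then gives $\mu_y(D_N^{(n)}) \leq (1-p)^n$; since $D_N \subseteq \bigcap_n D_N^{(n)}$, we conclude $\mu_y(D_N) = 0$ and hence $\mu_y(D) = 0$.

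The main obstacle is the combinatorial bookkeeping in the third step: one must parametrise the cylinder-cover of $D_N^{(n)}$ so as to expose the induction in $n$, verify disjointness, and justify the telescoping estimate via the multiplicativity of $P_\X$ from Definition~\ref{def:muXOnRunX}. Finiteness of $Y_1$ and $Y_2$ is used only mildly---to ensure measurability of sets such as $\myalphabet \times Y_2$ and convergence of the countable sums over prefixes---the essential analytic input being $p > 0$, which is part of the hypothesis.
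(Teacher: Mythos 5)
Your proof is correct in substance but follows a genuinely different route from the paper's. The paper also reduces to pointwise equality on states and to the single quantity $[\trB(d_1),\trB(d_2)](y)(\myalphabet^\omega)=\Pr(y\models\Globally\Future Y_2)$, but it then invokes the classical structure theory of \emph{finite} Markov chains: acceptance probability equals the probability of reaching the union $U$ of bottom strongly connected components meeting $Y_2$, and any BSCC containing $y_{>0}$ must (being closed under successors of positive probability) also contain a $Y_2$-state, so $U$ is unchanged when $y_{>0}$ is declared accepting; equality on general cylinders $w\myalphabet^\omega$ is then recovered by conditioning on the state reached after $w$ and appealing to Carath\'eodory. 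You instead show directly that the symmetric difference $D=\AccRun^\giry_{\mathcal{Y}'}\setminus\AccRun^\giry_{\mathcal{Y}}$ (runs hitting $y_{>0}$ infinitely often but $Y_2$ only finitely often) is $\mu_y$-null via a Borel--Cantelli/geometric-decay estimate, which handles all cylinders $w\myalphabet^\omega$ at once. Your argument is more elementary and slightly more general: it uses only $p=d(y_{>0})(\myalphabet\times Y_2)>0$ at the single state $y_{>0}$ and does not really need finiteness of $Y$, whereas the BSCC argument is intrinsically finite-state; the paper's version, on the other hand, gets the qualitative step for free from a textbook citation and avoids the prefix-counting bookkeeping you rightly identify as the main burden.

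One technical slip to repair in your third step: because the transition structure is \emph{sub}stochastic, $\mu_y(\Cyl_{\mathcal{Y}}(\xi))$ equals the product of transition probabilities along $\xi$ times $\NDL_{\mathcal{Y}}(y_{>0})$, and the one-step non-$Y_2$ extensions of $\xi$ have total mass $\sum_{a,\,y'\notin Y_2}(M(a))_{y_{>0},y'}\cdot\NDL_{\mathcal{Y}}(y')$ times that product. This need not be bounded by $(1-p)\cdot\mu_y(\Cyl_{\mathcal{Y}}(\xi))$ when $\NDL_{\mathcal{Y}}(y_{>0})<1$, so the per-cylinder splitting as you state it can fail. The fix is routine: run the induction on the sums $m_n$ of the bare transition-probability products over the prefixes in your cover (using $\NDL\le 1$ so that $\mu_y(D_N^{(n)})\le m_n$); then $m_{n+1}\le(1-p)\,m_n$ does hold, and the conclusion $\mu_y(D_N)=0$ follows as you intend.
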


\begin{proof}
Let $Y=Y_1+Y_2$. Note that $Y$ is a finite set equipped with a discrete $\sigma$-algebra.

The B\"uchi $(\giry,\myalphabet\times(\place))$-system $\mathcal{Y}=((Y_1,Y_2),t,d)$ induces 
a Markov chain $\mathcal{M}_{\mathcal{Y}}$ such that
the state space is defined by $Y_{\bot}=Y+\{\bot\}$
 and the transition function $\tau_{\mathcal{Y}}:Y_{\bot}\times Y_{\bot}\to[0,1]$ is given by
\[
\tau_{\mathcal{Y}}(y,y')=
\begin{cases}
\sum_{a\in\myalphabet}d(y)(\{(a,y')\}) & (y,y'\in Y) \\
1-\sum_{y'\in Y}\sum_{a\in\myalphabet}d(y)(\{(a,y')\}) & (y\in Y, y'=\bot) \\
1 & (y=y'=\bot) \\
0 & (\text{otherwise})\,.
\end{cases}
\]
A Markov chain $\mathcal{M}_{\mathcal{Y}'}$ is defined similarly.

A subset $B\subseteq Y$ is called a \emph{strongly connected component} (SCC for short) if 
for all $y,y'\in S$, 
there exist $y_0,y_1,\ldots,y_n$ such that $y_0=y$, $y_n=y'$ and $\tau_{\mathcal{Y}}(y_i,y_{i+1})>0$ for each $i$.
An SCC $B$ is called a \emph{bottom strongly connected component} (BSCC for short) if 
$\tau_{\mathcal{Y}}(y,y')=0$ for each $y\in B$ and $y'\notin B$.
For more details, see~\cite{baierK08principlesofmodelchecking} for example.

For $Y'\subseteq Y$, we write $\Pr(y\models\Globally\Future Y')$ for the probability in which
a state in $Y'$ is visited infinitely often on $\mathcal{M}_{\mathcal{Y}}$ from $y\in Y$.
By Theorem~\ref{thm:sanityCheckResult}.\ref{item:thm:sanityCheckResult:ProbSanity}, we have:
\begin{align*}
[\trB(d_1),\trB(d_2)](y)(\myalphabet^{\omega})&=\Pr(y\models\Globally\Future Y_2)\,,\qquad\text{and}\\ 
[\trB(d'_1),\trB(d'_2)](y)(\myalphabet^{\omega})&=\Pr(y\models\Globally\Future (Y_2+\{y_{>0}\}))\,.
\end{align*}

We define $U,U'\subseteq Y$ by 
\begin{align*}
U&:=\bigcup\{B\subseteq Y\mid \text{$B$ is a BSCC and $B\cap Y_2\neq \emptyset$}\}\,, \qquad\text{and} \\
U'&:=\bigcup\{B\subseteq Y\mid \text{$B$ is a BSCC and $B\cap (Y_2+\{y_{>0}\})\neq \emptyset$}\}\,.
\end{align*}
We write $\Pr(y\models\Future U)$ for a probability in which a state in $U$ is reached in $\mathcal{M}_{\mathcal{Y}}$.
It is known that $\Pr(y\models\Globally\Future Y_2)=\Pr(y\models\Future U)$
(see~\cite[Corollary~10.34]{baierK08principlesofmodelchecking} for example).
Similarly, we have $\Pr(y\models\Globally\Future (Y_2+\{y_{>0}\}))=\Pr(y\models\Future U')$.

Assume that $y_{>0}\in B$ for some BSCC $B$ in $\mathcal{M}_{\mathcal{Y}}$. 
As $B$ is a BSCC, it has no outgoing transition, on one hand.
On the other hand, by $d(y_{>0})(\myalphabet\times Y_2)>0$, $y_{>0}$ has an accepting successor state.
Hence by the definition of $U$, 
we have $B\cap Y_2\neq \emptyset$ and this implies that $U=U'$.

If $y_{>0}\notin B$ for any BSCC $B$, then by the definitions of $U$ and $U'$ we have $U=U'$.

Therefore in both cases, for each $y\in Y$, we have:
\begin{align*}
[\trB(d_1),\trB(d_2)](y)(\myalphabet^{\omega})
&\;=\;\Pr(y\models\Globally\Future Y_2)\\
&\;=\;\Pr(y\models\Future U)\\
&\;=\;\Pr(y\models\Future U')\\
&\;=\;\Pr\left(y\models\Globally\Future (Y_2+\{y_{>0}\})\right)\\
&\;=\;[\trB(d'_1),\trB(d'_2)](y)(\myalphabet^{\omega})\,.
\end{align*}

It remains to prove 
$[\trB(d_1),\trB(d_2)](y)(A)=[\trB(d'_1),\trB(d'_2)](y)(A)$
for each measurable set $A\subseteq\myalphabet^{\omega}$.
To this end,
by Carath\'eodory's extension theorem (see~\cite{ashD2000probability} for example),
it suffices to prove 
$[\trB(d_1),\trB(d_2)](y)(w\myalphabet^\omega)=[\trB(d'_1),\trB(d'_2)](y)(w\myalphabet^\omega)$
for each $w\in\myalphabet^*$.

We inductively define a function 
$\chi_{\mathcal{Y}}:Y\times\myalphabet^*\to \giry Y$ by 
\begin{align*}
\chi_{\mathcal{Y}}(y,\empseq)(\{y'\})&=\begin{cases} 1 & (y=y') \\ 0 & (\text{otherwise}) \end{cases}& \text{and}\\
\chi_{\mathcal{Y}}(y,aw)(\{y'\})&=\sum_{y''\in Y} d(y)(\{(a,y'')\})\cdot \chi_{\mathcal{Y}}(y'',w)(\{y'\})
\end{align*}
where $a\in\myalphabet$ and $w\in\myalphabet^*$.

Then for each $y\in Y$ and $w\in\myalphabet^*$, we have:
\begin{align*}
[\trB(d_1),\trB(d_2)](y)(w\myalphabet^{\omega})
&\;=\;\sum_{y'\in Y} \chi_{\mathcal{Y}}(y,w)(y')\cdot [\trB(d_1),\trB(d_2)](y')(\myalphabet^{\omega}) \\
&\;=\;\sum_{y'\in Y} \chi_{\mathcal{Y}}(y,w)(y')\cdot [\trB(d'_1),\trB(d'_2)](y')(\myalphabet^{\omega}) \\
&\;=\;[\trB(d'_1),\trB(d'_2)](y)(w\myalphabet^{\omega})\,.
\end{align*}

\sloppy
By Carath\'eodory's extension theorem,
this implies 
$[\trB(d_1),\trB(d_2)](y)(A)=[\trB(d'_1),\trB(d'_2)](y)(A)$
for each measurable set $A$.
Hence 
we have $\trB(\mathcal{Y})=\trB(\mathcal{Y}')$.
\end{proof}

\noindent
With Lemma~\ref{lem:nonAcctoAccMain2} discharging its assumptions,
Proposition~\ref{prop:unjoinedbotSoundMain2} easily yields
Theorem~\ref{thm:soundnessOfSimulationGiryMainMatrix} as follows.

\begin{proof}[Proof (Theorem \ref{thm:soundnessOfSimulationGiryMainMatrix})]
Let $\mathcal{X}=((X_1,X_2),c,s)$ and $\mathcal{Y}=((Y_1,Y_2),d,t)$. 
By using the bijective correspondence between 
probabilistic matrices
and
arrows in $\Kl(\giry)(X,Y)$ where $X$ and $Y$ are equipped with discrete $\sigma$-algebras,
we can easily see that
a forward fair matrix simulation $A\in [0,1]^{Y\times X}$  
 from $\mathcal{X}$ to $\mathcal{Y}$ (Definition~\ref{def:fwdFairBuechiSimProbMatrix}) exists
if and only if
a  fair simulation $f:Y\kto X$ without dividing from $\mathcal{X}$ to $\mathcal{Y}$ 
(Definition~\ref{def:fwdFairBuechiSimWithoutDividing}) exists
(see also~\cite{urabeH14CONCUR,urabeH17MatrixSimulationJourn}).

We define $Y_{12}\subseteq Y_1$ by 
\[
Y_{12}=\{y\in Y_1\mid \exists y_0,\ldots y_n\in Y, y=y_0,y_n\in Y_2,\forall i.\; d(y_i)(\myalphabet\times \{y_{i+1}\}) > 0\}\,.
\]
As $Y_1$ is finite, $Y_{12}$ is also finite.
We define a B\"uchi $(\giry,\myalphabet\times(\place))$-system 
$\mathcal{Y}'=((Y'_1,Y'_2),d',t')$ 
by $Y'_1=Y_1\setminus Y_{12}$, $Y'_2=Y_2+Y_{12}$, $d'=d$ and $t'=t$.
%
As $Y_{12}$ is finite,
by repeatedly applying Lemma~\ref{lem:nonAcctoAccMain2}, 
we have $\trB(\mathcal{Y}')=\trB(\mathcal{Y})$.

It is easy to see that
$f$ is also a  fair simulation without dividing
from $\mathcal{X}$ to $\mathcal{Y}'$.
Moreover, by its definition, $\mathcal{Y}'$ satisfies $\trB(d'_1)=\bot$.
Therefore by Proposition~\ref{prop:unjoinedbotSoundMain2}, 
we have $\trB(\mathcal{X})\sqsubseteq\trB(\mathcal{Y}')$.
Hence  $\trB(\mathcal{X})\sqsubseteq\trB(\mathcal{Y})$ holds.
\end{proof}

It is still open 
whether the restriction from trees to words 
is necessary.
We  note that an analogous statement to Lemma~\ref{lem:nonAcctoAccMain2} does not hold for B\"uchi $(\giry,\myalphabet\times(\place)\times(\place))$-systems,
which can be regarded as probabilistic B\"uchi tree automata.
A counterexample is as follows.
\begin{exa}\label{exa:PBTADiv}
We define a B\"uchi $(\giry,\{a\}\times(\place)\times(\place))$-system
$\mathcal{Y}=((Y_1,Y_2),d,t)$ by: 
\begin{align*}
Y_1&=(\{y_1\},\pow\{y_1\})\,,\quad Y_2=(\{y_2\},\pow\{y_2\})\,, \\
d(y_1)(\{(a,y,y')\})&=\begin{cases}
\frac{1}{2} & (y=y'=y_1) \\
\frac{1}{4} & (y=y_1,y'=y_2\text{ or }y=y'=y_2) \\
0 &(\text{otherwise})\,,
\end{cases}\\
d(y_2)(\{(a,y,y')\})&=\begin{cases}
1 & (y=y'=y_2) \\
0 &(\text{otherwise})\,,
\end{cases}
\quad\text{and}\quad
t(*)(\{y\})=\begin{cases}
1 & (y=y_1) \\
0 &(\text{otherwise})\,.
\end{cases}
\end{align*}
Note that $d(y_1)(\{a\}\times Y_2\times Y_2)=\frac{1}{2}>0$.
Note also that the carrier set of the final $(\{a\}\times(\place)\times(\place))$-coalgebra is given by a singleton $\{*\}$.
It is not hard to see that $\trB(d_1)(y_1)(\{*\})=\frac{1}{2}$.
In contrast, if we define a B\"uchi $(\giry,\{a\}\times(\place)\times(\place))$-system
$\mathcal{Y}'=((Y'_1,Y'_2),d',t')$ by $Y'_1=\emptyset$, $Y'_2=\{y_1,y_2\}$, $d'=d$ and $t=t'$, 
then we have $\trB(d'_1)(y_1)(\{*\})=1$.
\end{exa}

In contrast, it turns out that the finiteness restriction  of $\mathcal{Y}$ in
Theorem~\ref{thm:soundnessOfSimulationGiryMainMatrix} is strict: 
an example below shows that
without it
soundness fails. 

\begin{figure}
\begin{center}
\begin{equation*}
\small
\xy <1.5mm,0mm>:
(0,0)*+[Foo]{x'_0} = "x20",
(0,15)*+[Foo]{x'_1} = "x21",
(0,30)*+[Foo]{x'_2} = "x22",
(15,0)*+[Fo]{x_0} = "x10", 
(15,15)*+[Fo]{x_1} = "x11", 
(15,30)*+[Fo]{x_2} = "x12", 
(45,0)*+[Fo]{y_0} = "y10",
(45,15)*+[Fo]{y_1} = "y11",
(45,30)*+[Fo]{y_2} = "y12",
(60,0)*+[Foo]{y'_0} = "y20",
(60,15)*+[Foo]{y'_1} = "y21",
(60,30)*+[Foo]{y'_2} = "y22",
(7.5,40)*{\vdots} = "",
%
(52.5,40)*{\vdots} = "",
(-5,40)*{\mathcal{X}}="",
(65,40)*{\mathcal{Y}}="",
\ar |{\frac{1}{2}p_2} (15,-15);"x10"*+++[o]{}
\ar |{1-\frac{1}{2}p_2} (15,-15);"x20"
\ar |{p_3} "x10"*+++[o]{};"x11"*+++[o]{}
\ar |{p_4} "x11"*+++[o]{};"x12"*+++[o]{}
\ar ^{} "x12"*+++[o]{};(15,35)
\ar |{1-p_3} "x10"*+++[o]{};"x21"*+++[o]{}
\ar |{1-p_4} "x11"*+++[o]{};"x22"*+++[o]{}
\ar ^{} "x12"*+++[o]{};(10,35)
\ar ^{1} "x20"*+++[o]{};"x21"*+++[o]{}
\ar ^{1} "x21"*+++[o]{};"x22"*+++[o]{}
\ar ^{} "x22"*+++[o]{};(0,35)
\ar _{1} (45,-15);"y10"*+++[o]{}
\ar |{p_2} "y10"*+++[o]{};"y11"*+++[o]{}
\ar |{p_3} "y11"*+++[o]{};"y12"*+++[o]{}
\ar ^{} "y12"*+++[o]{};(45,35)
\ar |{1-p_2} "y10"*+++[o]{};"y21"*+++[o]{}
\ar |{1-p_3} "y11"*+++[o]{};"y22"*+++[o]{}
\ar ^{} "y12"*+++[o]{};(50,35)
\ar _{1} "y20"*+++[o]{};"y21"*+++[o]{}
\ar _{1} "y21"*+++[o]{};"y22"*+++[o]{}
\ar ^{} "y22"*+++[o]{};(60,35)
\ar @{-->}|{\;\frac{1}{2}p_2\;} "y10";"x10"
\ar @{-->}|{\;\frac{1}{2}p_3\;} "y11";"x11"
\ar @{-->}|{\;\frac{1}{2}p_4\;} "y12";"x12"
\ar @/^5mm/@{-->}^(.3){1-\frac{1}{2}p_2} "y10";"x20"
\ar @/^5mm/@{-->}^(.3){1-\frac{1}{2}p_3} "y11";"x21"
\ar @/^5mm/@{-->}^(.3){1-\frac{1}{2}p_4} "y12";"x22"
\ar @/_5mm/@{-->}|{1} "y20";"x20"
\ar @/_5mm/@{-->}|{1} "y21";"x21"
\ar @/_5mm/@{-->}|{1} "y22";"x22"
\endxy
\end{equation*}
\caption{\small B\"uchi $(\giry,\{a\}\times(\underline{\protect\phantom{n}}\,))$-systems 
$\mathcal{X}$ and $\mathcal{Y}$ (whose transitions are denoted by solid lines), 
and a forward fair simulation (denoted by dashed lines). Labels are omitted. 
}
\label{fig:1702071655}
\end{center}
\end{figure}

\begin{exa}\label{example:joinedSimNotSound}
Let $\mathcal{X}=((X_1,X_2),c,s)$ and $\mathcal{Y}=((Y_1,Y_2),d,t)$ be B\"uchi  
$\left(\giry,\{a\}\times(\place)\right)$-systems
that 
are illustrated as PBWAs in  Figure~\ref{fig:1702071655}.
For each $i\in\omega$, we define $p_i\in[0,1]$ by $p_i=1-{1}/{i^2}$.
%
%
%
%
%
%
We define a family $f=(f_{ij}:Y_i\kto X_j)_{i,j\in\{1,2\}}$ of Kleisli arrows as follows
(see also dashed lines in Figure~\ref{fig:1702071655}):
\begin{itemize}
\item $f_{11}(y_i)(\{x_j\})=\frac{1}{2}p_{i+2}$ if $j=i$ and $0$ otherwise;

\item $f_{12}(y_i)(\{x'_j\})=1-\frac{1}{2}p_{i+2}$ if $j=i$ and $0$ otherwise;

\item $f_{21}(y'_i)(\{x_j\})=0$; and

\item $f_{22}(y'_i)(\{x'_j\})=1$ if $j=i$ and $0$ otherwise.
\end{itemize}

Moreover, for an ordinal $\alpha\in\{0,1\}$, we define Kleisli arrows $f_{11}(\alpha):Y_1\kto Y_1$ and 
$f_{12}(\alpha):Y_1\kto X_2$ by 
$f_{11}(0)=\bot$, $f_{11}(1)=f_{11}$, $f_{12}(0)=\bot$ and $f_{12}(1)=f_{12}$.

Then 
the following inequalities hold.
\begin{align*}
s&\sqsubseteq 
\left[\codomJoin{f_{11},f_{12}},\codomJoin{f_{21},f_{22}}\right]\odot t \\
%
c\odot\codomJoin{f_{11}(0),f_{12}(1)} &\sqsubseteq 
\overline{F}\left[\codomJoin{f_{11}(0),f_{12}(0)},\codomJoin{f_{21},f_{22}}\right]\odot d_1 \\
%
c\odot\codomJoin{f_{11}(1),f_{12}(1)} &\sqsubseteq 
\overline{F}\left[\codomJoin{f_{11}(1),f_{12}(1)},\codomJoin{f_{21},f_{22}}\right]\odot d_1 \\
c\odot\codomJoin{f_{21},f_{22}} &\sqsubseteq 
\overline{F}\left[\codomJoin{f_{11},f_{12}},\codomJoin{f_{21},f_{22}}\right]\odot d_2 
\end{align*}
Therefore $f$ is a forward fair simulation without dividing (Definition~\ref{def:fwdFairBuechiSimWithoutDividing}) from $\mathcal{X}$ to $\mathcal{Y}$.

In contrast, we also have:
\allowdisplaybreaks[4]
\begin{align*}
\tr(\mathcal{X})(\{a^{\omega}\})
&=\frac{1}{2}p_2\cdot (1-\prod_{i\in\omega}p_{i+3})+(1-\frac{1}{2}p_2)\cdot 1 = 1-\frac{1}{2}\prod_{i\in\omega}p_{i+2}\qquad\text{and}\\
\tr(\mathcal{Y})(\{a^{\omega}\})
&=1-\prod_{i\in\omega}p_{i+2}\,.
\end{align*}
As $\prod_{i\in\omega}p_{i+2}=\frac{1}{2}>0$, we have $\tr(\mathcal{X})\mathrel{\cancel{\sqsubseteq}}\tr(\mathcal{Y})$.
Therefore language inclusion fails.
\end{exa}

\section{Conclusions and Future Work}\label{sec:concluFW}
We defined notions of fair simulation for two types of transition systems with B\"uchi acceptance conditions, namely:
nondeterministic B\"uchi tree automata (NBTAs) and probabilistic B\"uchi word automata (PBWAs, with an additional finiteness assumption on the simulating side).

The simulation notion for NBTAs is defined in terms of fixed-point equations (Definition~\ref{def:simForNondetBuechiRel}).
The resulting notion is almost the same as the one in~\cite{Bomhard08BScThesis}, except that infinite state spaces 
are allowed in our definition because of the fixed-point formulation. 
In contrast, the fair simulation notion for PBWAs is new to the best of our knowledge.
It is a combination of a notion of matrix simulation~\cite{urabeH17MatrixSimulationJourn} and 
a notion of (lattice-theoretic) progress measure~\cite{Jurdzinski00,HasuoSC16}.

These simulation notions originate from categorical backgrounds that are built upon a categorical characterization of parity languages
developed in~\cite{urabeSH16parityTrace}. 
Using the theory in~\cite{urabeSH16parityTrace}, we have introduced a categorical simulation notion called 
(forward) fair simulation with dividing (Definition~\ref{def:fwdFairBuechiSimWithDiv})
and proved its soundness.

Our soundness proof for the categorical notion of fair simulation with dividing is mathematically clean, 
and we can easily obtain sound simulation notions for NBTAs and PBWAs
 by specializing the categorical notion.
However, the resulting notions inherit dividing requirement, and
it is a big disadvantage.
For NBTAs and PBWAs, we have shown that by using properties that are specific to these systems, and by imposing a finite-state restriction to the simulating side for PBWAs, 
the dividing requirement can be lifted.

\subsection{Future Work} 
Generalization from the B\"{u}chi
condition to the \emph{parity} one is certainly what we aim at next.
It is already not very clear how our coalgebraic definition with dividing
(Section~\ref{subsec:coalgFairSimWithDividing}) would generalize: 
for example, in case of parity automata, there is little sense in
comparing the priority of the challenger's state with that of the simulator's.
It is even
less clear how to circumvent dividing.

Aside from fair simulation, notions of \emph{delayed simulation} 
are known for B\"uchi
automata~\cite{etessamiWS05fairsimulation,fritzW06simulationrelations}:
they are subject to slightly different ``fairness'' constraints. Accommodating them in the current setting is another future work.

On decidability and complexity, 
while the obtained simulation notion for NBTAs is decidable if the state spaces are finite,
the decidability of that for PBWAs is still open even for finite-state systems. 
Obviously it is one of possible directions of future work.

We are also interested in automatic discovery of simulations---via mathematical programming for example---and its implementation. 
In this direction of future work we will
be based on our previous
work~\cite{urabeH14CONCUR,urabeH17MatrixSimulationJourn}. Another direction is
to use the current results for \emph{program verification}---where the
$\mathsf{Integer}$ type makes problems inherently infinitary---exploiting our
non-combinatorial presentation by equational systems that allows infinite state spaces. 
We could do so 
automatically by synthesizing a symbolic simulation or, interactively 
on a proof assistant.

\subsection*{Acknowledgments}
Thanks are due 
to Shunsuke Shimizu, Kenta Cho, Eugenia Sironi, and the anonymous referees,
for discussions and comments.
The authors are supported by
ERATO HASUO Metamathematics for Systems Design Project (No.\ JPMJER1603), JST, and
Grant-in-Aid No.\ 15KT0012, JSPS.
Natsuki Urabe is supported by Grant-in-Aid No.\ 16J08157 for JSPS Fellows.

\bibliographystyle{alpha} 
\bibliography{myref} %

\end{document}
